\documentclass[10pt,times,letter]{article}

\topmargin=-0.5in    
\textheight=8.5in  
\oddsidemargin=0pt 
\textwidth=6.4in   

\usepackage{amsmath, amsfonts,amsthm,amssymb,bm}
\usepackage{dsfont}

\usepackage[english]{babel}

\usepackage[normalem]{ulem}

\usepackage{latexsym,amssymb,amsfonts,graphicx}
\usepackage{amsmath,dsfont}
\usepackage{verbatim}
\usepackage{mathrsfs}
\usepackage{bm}
\usepackage{color}
\usepackage{epsfig}

\setlength{\parskip}{1ex plus 0.5ex minus 0.2ex}
\usepackage{url}
\usepackage{bm}
\usepackage{natbib}

\usepackage[colorlinks,linkcolor=blue,citecolor=blue,anchorcolor=blue]{hyperref}

\newcommand{\Px}{ \mathbb{P} }

\newcommand{\N}{ \mathbb{N} }
\newcommand{\Ex}{ \mathbb{E} }

\def\esssup_#1{\underset{#1}{\mathrm{ess\,sup\, }}}
\def\essinf_#1{\underset{#1}{\mathrm{ess\,inf\, }}}
\def\argmax_#1{\underset{#1}{\mathrm{arg\,max\, }}}
\def\argmin_#1{\underset{#1}{\mathrm{arg\,min\, }}}

\newcommand{\Fx}{\mathbb{F} }

\newcommand{\F}{\mathcal{F}}

\newcommand{\R}{\mathds{R}}
\newcommand{\taue}{\tau_{\varepsilon}^t}

\newcommand{\AZ}{{\bf(A$_{Z}$)}}
\newcommand{\Af}{{\bf(A$_{f}$)}}

\newtheorem{theorem}{Theorem}[section]

\numberwithin{equation}{section}

\newtheorem{proposition}[theorem]{Proposition}
\newtheorem{remark}[theorem]{Remark}
\newtheorem{lemma}[theorem]{Lemma}
\newtheorem{corollary}[theorem]{Corollary}

\allowdisplaybreaks[4]

\definecolor{Red}{rgb}{1.00, 0.00, 0.00}

\definecolor{DRed}{rgb}{0.5, 0.00, 0.00}

\definecolor{Blue}{rgb}{0.00, 0.00, 1.00}

\definecolor{Green}{rgb}{0.0, 0.4, 0.0}

\title{Optimal Tracking Portfolio with A Ratcheting Capital Benchmark}
\author{Lijun Bo \thanks{Email: lijunbo@ustc.edu.cn, School of Mathematics and Statistics, Xidian University, Xi'an, 710071, China. 
}\and
Huafu Liao \thanks{Email: mathuaf@nus.edu.sg, Department of Mathematics, National University of Singapore, Singapore 119076, Singapore.} \and
Xiang Yu \thanks{E-mail: xiang.yu@polyu.edu.hk, Department of Applied Mathematics, The Hong Kong Polytechnic University, Hung Hom, Kowloon, Hong Kong.}}
\date{\vspace{-5ex}}

\begin{document}
\maketitle
\begin{abstract}
This paper studies the finite horizon portfolio management by optimally tracking a ratcheting capital benchmark process. It is assumed that the fund manager can dynamically inject capital into the portfolio account such that the total capital dominates a non-decreasing benchmark floor process at each intermediate time. The tracking problem is formulated to minimize the cost of accumulated capital injection. We first transform the original problem with floor constraints into an unconstrained control problem, however, under a running maximum cost. By identifying a controlled state process with reflection, the problem is further shown to be equivalent to an auxiliary problem, which leads to a nonlinear Hamilton-Jacobi-Bellman (HJB) equation with a Neumann boundary condition. By employing the dual transform, the probabilistic representation and some stochastic flow analysis, the existence of the unique classical solution to the HJB equation is established. The verification theorem is carefully proved, which gives the complete characterization of the feedback optimal portfolio. The application to market index tracking is also discussed when the index process is modeled by a geometric Brownian motion.
\ \\
\ \\
\noindent{\textbf{Mathematics Subject Classification (2020)}: 91G10, 93E20, 60H10}
\ \\
\ \\
\noindent{\textbf{Keywords}:} Non-decreasing capital benchmark, optimal tracking, running maximum cost, probabilistic representation, stochastic flow analysis

\end{abstract}

\section{Introduction}

Portfolio allocation with benchmark performance has been an active research topic in quantitative finance, see some recent studies in \cite{Browne00}, \cite{Gaivoronski05}, \cite{YaoZZ06}, \cite{Strub18} and many others. The target benchmark is usually a prescribed capital process or a specific portfolio in the financial market, and the goal is to choose the portfolio in some risky assets to track the return or the value of the benchmark process. In practice, both professional and individual investors may measure their porfolio performance using different benchmarks, such as S$\&$P500 index, Goldman Sachs commodity index, special liability, inflation and exchange rates. The existing research mainly focuses on mathematical problems that minimize the difference between the controlled portfolio and the benchmark, which are formulated as either a linear quadratic control problem using the mean-variance analysis or a utility maximization problem at the terminal time. The present paper aims to enrich the study of optimal tracking by proposing a different tracking procedure and analyzing the associated control problem. We are particularly interested in the fund management when the fund manager can dynamically inject capital to keep the total fund capital above a specific non-decreasing benchmark process at each intermediate time. The control problem involves the regular portfolio control and the singular capital injection control together with American type floor constraints. The optimality is attained when the cost of the accumulated capital injection is minimized.

On the other hand, another well known optimal tracking problem in the literature is the monotone follower problem, see for instance \cite{KaraShreve84} and \cite{BE2008}, in which one needs to choose a monotone process as a singular control to closely track a given diffusion process such as a Brownian motion with drift. This paper investigates the opposite direction as we look for a regular control such that the controlled diffusion process can closely follow a given monotone process. Our mathematical problem is also motivated by some stochastic control problems with minimum guaranteed floor constraints, which are conventionally defined as utility maximization problems such that the controlled wealth processes dominate an exogenous target process at the terminal time or at each intermediate time. See some related studies among \cite{Karoui05}, \cite{KM06}, \cite{BMM10}, \cite{Giacinto11}, \cite{Sekine12}, \cite{Giacinto14} and \cite{CYZ20}, in which European type or American type floor constraints have been examined in various market models. In the aforementioned research, some typical techniques to handle the floor constraints are to introduce the option based portfolio or the insured portfolio allocation such that the floor constraints can be guaranteed. We instead reformulate the optimal tracking problem with dynamic floor constraints to a constraint-free stochastic control problem under a running maximum cost criterion, see Lemma \ref{first-transf}.

Stochastic control with a running maximum cost or a running maximum process is itself an interesting topic and attracted a lot of attention in the past decades, see \cite{BaIshii89}, \cite{Barr93}, \cite{BDR1994}, \cite{BPZ15} and \cite{Kroner18}, in which the viscosity solution approach plays the key role.
In contrast, we can take advantage of the specific payoff function and state processes to conclude the existence of a classical solution to the HJB equation. See also some recent control problems on optimal consumption in \cite{GHR20} and \cite{DLPY20}, in which the utility function depends on the running maximum of the control and the value function can be obtained explicitly. However, as opposed to \cite{GHR20} and \cite{DLPY20}, the running maximum of a controlled diffusion process appears in our finite horizon control problem and complicates the analysis. We choose to work with an auxiliary state process with reflection similar to \cite{Weerasinghe16} to reformulate the control problem again, which corresponds to a nonlinear HJB equation with a Neumann boundary condition. By using the heuristic dual transform, we can apply the probabilistic representation and some stochastic flow arguments to first establish the existence and uniqueness of the classical solution to the dual PDE. In fact, our primal value function is not strictly concave and the inverse transform needs to be carefully carried out in a restricted domain. To this end, we derive an explicit threshold for the initial wealth, beyond which the ratcheting benchmark process is dynamically superhedgeable by the portfolio process and no capital needs to be injected to catch up with the benchmark. This threshold facilitates the full characterization of the primal value function on the whole domain using some delicate continuity and convergence analysis based on the probabilistic representation results. The piecewise feedback optimal portfolio across different regions can also be derived and rigorously verified.

Although the primary focus of the present paper is to track a non-decreasing benchmark process, our approach and theoretical results can be applied in some market index tracking problems when the index process is not monotone. In Section \ref{sec:exam}, we present some examples when the index process follows a geometric Brownian motion, and show that the market index tracking problem can be transformed into an equivalent problem with a non-decreasing benchmark. In the model with infinite time horizon, the value function and the optimal portfolio can be obtained fully explicitly.

The rest of the paper is organized as follows. Section \ref{sec:model} introduces the model of the optimal tracking problem with a non-decreasing benchmark. To handle dynamic floor constraints, we consider an equivalent formulation with a running maximum cost. This problem is further transformed in Section \ref{sec:PDE} to an auxiliary one that leads to a nonlinear HJB equation with a Neumann boundary condition. Section \ref{sec:dual} studies the existence of a classical solution to the HJB equation using the dual transform and some probabilistic representation techniques. The feedback optimal portfolio and the proof of the verification theorem are given in Section \ref{sec:verify}. Section \ref{sec:exam} presents some illustrative examples when the index process follows a geometric Brownian motion. Section \ref{app:proof1} collects the proofs of some main results in previous sections. At last, conclusions are given in Section \ref{sec:conc}.

\section{Market Model and Problem Formulation}\label{sec:model}
Let $(\Omega, \mathbb{F}, \mathbb{P})$ be a filtered probability space, in which $\mathbb{F}=(\mathcal{F}_t)_{t\in [0,T]}$ satisfies the usual conditions. The process $(W^1,\ldots,W^d)$ is a $d$-dimensional Brownian motion adapted to $\mathbb{F}$. Let $T\in\R_+:=(0,\infty)$ be the finite terminal horizon. The financial market consists of $d$ risky assets and the price processes are described by, for $t\in[0,T]$,
\begin{align}
\frac{dS_t^i}{S_t^i}= \mu_idt+\sum_{j=1}^d\sigma_{ij}dW_t^{j},\ \ i=1,\ldots,d,\label{stockSDE}
\end{align}
with constant drift $\mu_i\in\R$ and constant volatility $\sigma_{ij}\in\R$ for $i,j=1,...,d$. It is assumed that the interest rate $r=0$ that amounts to the change of num\'{e}raire. From this point onwards, all processes including the wealth process and the benchmark process are defined after the change of num\'{e}raire.

For $t\in[0,T]$, let us denote $\theta_t^i$ the amount of wealth (as an $\Fx$-adapted process) that the fund manager allocates in asset $S^i=(S_t^i)_{t\in[0,T]}$ at time $t$. The self-financing wealth process under the control $\theta=(\theta_t^1,\ldots,\theta_t^d)_{t\in[0,T]}^{\top}$ is given by
\begin{align}\label{eq:wealth2}
V^{\theta}_t &=\textrm{v}+\int_0^t\theta_s^{\top}\mu ds+\int_0^t\theta_s^{\top}\sigma dW_s,\ \ \ t\in[0,T],
\end{align}
with the initial wealth $V_0^{\theta}=\textrm{v}\geq 0$, the return vector $\mu=(\mu_1,\ldots,\mu_d)^{\top}$ and the volatility matrix $\sigma=(\sigma_{ij})_{d\times d}$ that is assumed to be invertible (its inverse is denoted by $\sigma^{-1}$).

We consider the portfolio allocation by a fund manager that is to optimally track an exogenous non-decreasing capital benchmark process $A=(A_t)_{t\in[0,T]}$ taking the absolutely continuous form that
\begin{align}\label{nond-A}
A_t:=a+\int_0^t f(s,Z_s)ds,\quad t\in[0,T].
\end{align}
Here, $a\geq 0$ stands for the initial benchmark that the fund manager needs to track at time $t=0$. The function $f(\cdot,\cdot)$, representing the benchmark growth rate, is required to satisfy the condition:\\
\ \\
$(\mathbf{A}_f)$: the function $f:[0,T]\times\R\to \R_+$ is continuous and for $t\in[0,T]$, $f(t,\cdot)\in C^2(\R)$ with bounded first and second order derivatives.\\
\ \\
The stochastic factor process $Z=(Z_t)_{t\in[0,T]}$ in \eqref{nond-A} satisfies the SDE:
\begin{align}\label{factor-Z}
dZ_t =\mu_Z(Z_t) dt + \sigma_Z(Z_t)dW^{\gamma}_t,\ \ t\in[0,T],
\end{align}
with the initial value $Z_0=z\in\R$ and $W^{\gamma}=(W_t^{\gamma})_{t\in[0,T]}$ is a linear combination of the $d$-dimensional Brownian motion $(W^1,\ldots,W^d)$ with weights $\gamma=(\gamma_1,\ldots,\gamma_d)^{\top}\in[-1,1]^d$, which itself is a Brownian motion. We impose the following conditions on coefficients $\mu_Z(\cdot)$ and $\sigma_Z(\cdot)$ that:\\
\ \\
$(\mathbf{A}_Z)$: the coefficients $\mu_Z:\R\to\R$ and $\sigma_Z:\R\to\R$ belong to $C^2(\R)$ with bounded first and second order derivatives.

\begin{remark}
If $Z$ is an OU process or a geometric Brownian motion, the assumption {\AZ} clearly holds. The reasons for us to consider the non-decreasing benchmark process $A=(A_t)_{t\in[0,T]}$ in \eqref{nond-A} are twofold. Firstly, the process $A=(A_t)_{t\in[0,T]}$ may refer to some non-decreasing growth rate process after the change of num\'{e}raire, which generalizes the deterministic growth rate benchmark considered in \cite{YaoZZ06} for their optimal tracking problem. This non-decreasing benchmark process can also describe some consumer price index or higher eduction price index, which is observed to be non-decreasing over the past decades. That is, our portfolio management problem might be suitable to model some pension fund management or education savings fund management when the aim is to track some non-decreasing price index affected by the stochastic factor $Z=(Z_t)_{t\in[0,T]}$.  Secondly, one key step in our approach is the dual transform, see Section \ref{sec:dual}, which relies crucially on the concavity of the primal value function and the convexity of the solution to the dual PDE problem. The assumption $f>0$ facilitates the explicit characterization of the positive threshold defined in \eqref{eq:xi-bound} such that the primal value function is strictly concave when the wealth level is below this threshold. In addition, $f>0$ also guarantees that the solution to the dual PDE in Corollary \ref{coro:wellposehatv} is strictly convex in the interior domain such that the inverse transform is well defined. Despite that our main results hinge on the non-decreasing feature of the benchmark process, as shown in Section \ref{sec:exam}, the theoretical results are also applicable to some market index tracking problems when the index process follows a geometric Brownian motion. 
\end{remark}

Given the non-decreasing benchmark process $A$, an optimal tracking problem is considered that combines the portfolio control with another capital injection singular control together with dynamic floor constraints. To be precise, we assume that the fund manager can inject capital carefully to the fund account from time to time whenever it is necessary such that the total capital dynamically dominates the benchmark floor process $A$. That is, the fund manager optimally tracks the process $A$ by choosing the regular control $\theta$ as the dynamic portfolio in risky assets and the singular control $C=(C_t)_{t\in[0,T]}$ as the cumulative capital injection such that $C_t+V_t^{\theta}\geq A_t$ at each intermediate time $t\in[0,T]$. The goal of the optimal tracking problem is to minimize the expected cost of the discounted cumulative capital injection under American-type floor constraints that
\begin{align}\label{eq_prob_IBP}
u(a, \mathrm{v}, z):=\text{$\inf_{C,\theta} \Ex\left[C_0+\int_0^T e^{-\rho t}dC_t \right]$\ \ \ subject to\ \  $A_t \le C_t + V^{\theta}_t$ at each $t\in[0,T]$,}
\end{align}
where the constant $\rho\geq 0$ is the discount rate and $C_0=(a-\textrm{v})^+$ is the initial injected capital to match with the initial benchmark. This new tracking formulation \eqref{eq_prob_IBP} was initially developed with Martin Larsson and Johannes Ruf in a working paper for some more general index tracking problems.

\begin{remark}
Recall that we consider the model after the change of num\'{e}raire. The non-negative discount rate $\rho\geq 0$ is equivalent to the assumption that the discount rate in the original market dominates the interest rate before the change of num\'{e}raire. For the symmetric case, one can simply consider problem \eqref{eq_prob_IBP} with $\rho=0$.
\end{remark}

\begin{remark}
For a large initial wealth $\mathrm{v}\gg a$ and some special choices of $f(t,z)$ and $Z=(Z_t)_{t\in[0,T]}$, it is possible that the benchmark process $A=(A_t)_{t\in[0,T]}$ is dynamically superhedgeable by a portfolio in risky assets at each time $t\in[0,T]$. That is, there exists a portfolio $\theta^*$ such that $V_t^{\theta^*}\geq A_t$, for any $t\in[0,T]$. Then $C^*_t\equiv 0$ for any $t\in[0,T]$ is an admissible capital injection control and $(0, \theta^*)$ is an optimal control for the problem \eqref{eq_prob_IBP} and the value function $u(a,\mathrm{v},z)\equiv 0$. We will characterize the region for $\mathrm{v}$ explicitly in Remark \ref{Cequal0} such that there is no need to inject capital for the problem \eqref{eq_prob_IBP}.
\end{remark}
To tackle the problem \eqref{eq_prob_IBP} with dynamic floor constraints, our first step is to reformulate it based on the observation that, for a fixed control $\theta$, the optimal $C$ is always the smallest adapted right-continuous and non-decreasing process that dominates $A-V^{\theta}$. Let ${\cal U}$ be the set of regular $\Fx$-adapted control processes $\theta=(\theta_t)_{t\in[0,T]}$ such that \eqref{eq:wealth2} is well-defined. The following lemma gives an equivalent formulation of \eqref{eq_prob_IBP} and its proof is given in Section \ref{app:proof1}.
\begin{lemma}\label{first-transf}
For each fixed regular control $\theta$, the optimal singular control $C^*$ satisfies
\begin{align}\label{C-sing}
C^*_t =0\vee \sup_{s\in[0,t]}(A_s-V_s^{\theta}), \quad t\in [0,T].
\end{align}
The problem \eqref{eq_prob_IBP} with the American-type floor constraints $A_t\leq C_t+V_t^{\theta}$ for all $t\in[0,T]$, admits the equivalent formulation as a unconstrained control problem under a running maximum cost that
\begin{equation}\label{eq_orig_pb}
u(a,\mathrm{v}, z)=(a-\mathrm{v})^++\inf_{\theta\in\mathcal{U}}\ \Ex\left[ \int_0^T e^{-\rho t} d\left(0\vee \sup_{s\in[0,t]}(A_s-V_s^{\theta})\right)\right].
\end{equation}
\end{lemma}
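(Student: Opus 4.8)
The plan is to prove the two assertions in turn, with the formula \eqref{C-sing} doing most of the work. First I would fix an admissible regular control $\theta\in\mathcal{U}$ and argue that among all right-continuous non-decreasing $\Fx$-adapted processes $C$ with $C_t\ge A_t-V_t^{\theta}$ for all $t\in[0,T]$, the pointwise-smallest one is exactly $C_t^*:=0\vee\sup_{s\in[0,t]}(A_s-V_s^{\theta})$. Indeed, $C^*$ is adapted (running supremum of an adapted process), non-decreasing and right-continuous (here I would use continuity of $s\mapsto A_s-V_s^{\theta}$, which holds since $A$ in \eqref{nond-A} is absolutely continuous and $V^{\theta}$ in \eqref{eq:wealth2} has continuous paths), and by construction $C_t^*\ge A_t-V_t^{\theta}$ and $C_0^*=(a-\mathrm v)^+$. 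Conversely, any admissible $C$ satisfies $C_t\ge C_s\ge A_s-V_s^{\theta}$ for every $s\le t$, hence $C_t\ge 0\vee\sup_{s\le t}(A_s-V_s^{\theta})=C_t^*$. This gives the minimality.

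Next I would show that plugging $C^*$ into the cost functional is optimal for each fixed $\theta$, i.e.\ that minimality in the pointwise order implies minimality of $\Ex[C_0+\int_0^T e^{-\rho t}\,dC_t]$. The point is that $\rho\ge 0$, so $t\mapsto e^{-\rho t}$ is non-increasing; for non-decreasing $C$ one can integrate by parts, $\int_0^T e^{-\rho t}\,dC_t = e^{-\rho T}C_T - C_{0-} + \rho\int_0^T e^{-\rho t}C_t\,dt$ (with the convention $C_{0-}=0$, so that the jump $C_0$ at time $0$ is included), and adding back $C_0$ the functional becomes $e^{-\rho T}C_T+\rho\int_0^T e^{-\rho t}C_t\,dt$, which is monotone in $C$ in the pointwise order. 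Hence replacing any admissible $C$ by $C^*\le C$ can only decrease the cost, so the infimum over $(C,\theta)$ in \eqref{eq_prob_IBP} equals the infimum over $\theta\in\mathcal{U}$ of the cost evaluated at $C=C^*$. Substituting $C^*$ and isolating the deterministic term $C_0^*=(a-\mathrm v)^+$ (which does not depend on $\theta$) yields exactly \eqref{eq_orig_pb}.

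The main technical point to be careful about is the integrability/finiteness needed to justify the integration by parts and to make the exchange of infima legitimate: one should check that for the candidate $C^*$ the quantity $\Ex[C_0^*+\int_0^T e^{-\rho t}\,dC_t^*]$ is finite for at least one $\theta$ (so the value function is finite), e.g.\ by taking $\theta\equiv 0$, for which $V_t^{\theta}\equiv \mathrm v$ and $C_t^* = (a-\mathrm v)^+ \vee (a+\int_0^t f(s,Z_s)\,ds - \mathrm v)$, whose expected $dC^*$-integral is controlled using $(\mathbf A_f)$ and $(\mathbf A_Z)$ (linear growth of $f(s,Z_s)$ in the solution of \eqref{factor-Z}). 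One also wants the running supremum $\sup_{s\le t}(A_s-V_s^{\theta})$ to be a well-defined finite random variable for every admissible $\theta$, which again follows from path-continuity. Apart from these routine bookkeeping points, which I expect to be the only real obstacle, the argument is essentially the observation that the problem decouples: the singular control is determined by $\theta$ through \eqref{C-sing}, so only $\theta$ remains to be optimized.
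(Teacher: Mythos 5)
Your proposal is correct and follows essentially the same route as the paper: integrate by parts so the cost becomes $\Ex\bigl[e^{-\rho T}C_T+\rho\int_0^T e^{-\rho t}C_t\,dt\bigr]$, a functional pointwise non-decreasing in $C$ (this is where $\rho\geq 0$ enters), and then identify $C^*$ as the pointwise-smallest admissible singular control. Two small remarks: your direct argument that any admissible $C$ satisfies $C_t\geq C_s\geq A_s-V_s^\theta$ for every $s\leq t$, hence $C_t\geq C_t^*$, is in fact slightly cleaner than the paper's contradiction argument, which as written only shows that no admissible $\widetilde C$ lies strictly below $C^*$ (a minimal-element statement rather than a minimum statement); and your parenthetical about $C_{0-}=0$ is a harmless bookkeeping slip — if the $dC$-integral already carries the jump at $t=0$, then adding $C_0$ double-counts, so the integral should be read over $(0,T]$, matching the paper's identity $C_0+\int_0^T e^{-\rho t}\,dC_t=e^{-\rho T}C_T+\rho\int_0^T e^{-\rho t}C_t\,dt$ — but neither point changes the monotone functional or the conclusion.
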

\begin{remark}
To handle the running maximum term in the objective function, one can choose the monotone running maximum process as a controlled state process as in \cite{BDR1994}, \cite{Kroner18} to derive the HJB equation with a free boundary condition. One can also choose the distance between the underlying process and its running maximum as a reflected state process as in \cite{Weerasinghe16} and derive the HJB equation with a Neumann boundary condition. In the present paper, we follow the second method that allows us to prove the existence of a classical solution using the probabilistic representation results. Note that our problem mathematically differs from the one in \cite{Weerasinghe16} because we consider the control in both the drift and volatility of the state process together with a stochastic factor process affecting the benchmark capital.
\end{remark}

\section{Auxiliary Control Problem and HJB Equation}\label{sec:PDE}

In this section, we introduce a new controlled state process to replace the process $V^{\theta}=(V_t^{\theta})_{t\in[0,T]}$ given in \eqref{eq:wealth2} and formulate another auxiliary control problem. Let us first define the difference process $D_t:=A_t-V_t^{\theta}+\mathrm{v}-a$ with the initial value $D_0=0$. For any $x\geq 0$, we then consider its running maximum process $L=(L_t)_{t\in[0,T]}$ defined by
\begin{align}\label{eq:maxM}
L_t :=x\vee \sup_{s\in[0,t]}D_s\geq 0,\quad t\in [0,T],
\end{align}
with the initial value $L_0=x\geq 0$.

One can easily see that $(a-\mathrm{v})^+-u(a,\mathrm{v}, z)$ with $u(a,\mathrm{v},z)$ given in \eqref{eq_orig_pb} is equivalent to the auxiliary control problem
\begin{align}\label{eq:objectivefcn}
\sup_{\theta\in\mathcal{U}}\Ex\left[-\int_0^{T}e^{-\rho s}dL_s\right],
\end{align}
when we set the initial level $L_0=x=(\mathrm{v}-a)^+$. We can start to introduce our new controlled state process $X=(X_t)_{t\in[0,T]}$ for the problem \eqref{eq:objectivefcn}, which is defined as the reflected process $X_t:=L_t-D_t$ for $t\in [0,T]$ that satisfies the SDE, for $t\in[0,T]$,
\begin{align}\label{state-X}
X_t = -\int_0^tf(s,Z_s)ds +\int_0^t\theta_s^{\top}\mu ds+\int_0^t\theta_s^{\top}\sigma dW_s + L_t,
\end{align}
with the initial value $X_0=x\geq 0$. In particular, the running maximum process $L_t$ increases if and only if $X_t=0$, i.e., $L_t=D_t$. In view of ``the Skorokhod problem", it satisfies the representation that
\begin{align*}
L_t =x\vee \int_0^t {\bf1}_{\{X_s=0\}}dL_s,\quad t\in[0,T].
\end{align*}
We shall change the notation from $L_t$ to $L_t^X$ from this point onwards to emphasize its dependence on the new state process $X$ given in \eqref{state-X}. Moreover, the stochastic factor process $Z=(Z_t)_{t\in[0,T]}$ defined in \eqref{factor-Z}  is chosen as another state process.

For ease of presentation, we denote the domain $\mathcal{D}_T:=[0,T]\times\R\times[0,\infty)$. Let ${\cal U}_t$ be the set of admissible controls taking the feedback form as $\theta_s=\theta(s,Z_s,X_s)$ for $s\in[t,T]$, where $\theta:\mathcal{D}_T\to\R^n$ is a measurable function such that the following reflected SDE has a weak solution that
\begin{align}\label{eq:reflectedZA}
X_s = -\int_t^s f(r,Z_r)dr+ \int_t^s\theta(r,Z_r,X_r)^{\top}\mu dr+\int_t^s\theta(r,Z_r,X_r)^{\top}\sigma dW_r + L_s^{X},\quad s\in[t,T],
\end{align}
with $X_t=x\geq 0$. Here, $L_s^X=x\vee \int_t^s{\bf1}_{\{X_r=0\}}dL_r^X$ is a continuous, non-negative and non-decreasing process, which increases only when the state process $(X_s)_{s\in[t,T]}$ hits the level $0$. For $(t,z,x)\in\mathcal{D}_T$, the dynamic version of the auxiliary problem \eqref{eq:objectivefcn} is given by
\begin{align}\label{eq:valuefcnA}
{\rm w}(t,z,x):=\sup_{\theta\in{\cal U}_t}J(\theta;t,z,x):=\sup_{\theta\in{\cal U}_t}\Ex_{t,z,x}\left[-\int_t^Te^{-\rho s}dL_s^X\right],
\end{align}
where $\Ex_{t,z,x}[~\cdot~]:=\Ex[~\cdot~|Z_t=z,X_t=x]$ denotes the conditional expectation and the underlying state processes $(Z_s)_{s\in[t,T]}$ and $(X_s)_{s\in[t,T]}$ are given in \eqref{factor-Z} and \eqref{eq:reflectedZA} respectively.

It is important to note the equivalence that ${\rm w}(0,z,(\mathrm{v}-a)^+)=(a-\mathrm{v})^+-u(a,\mathrm{v},z)$, i.e., we have

\begin{equation*}
u(a,\mathrm{v},z)=\left\{
\begin{aligned}
&\displaystyle a-\mathrm{v}-{\rm w}(0,z,0), & & \mbox{if } a\geq \mathrm{v},\\
&\displaystyle-{\rm w}(0,z,\mathrm{v}-a),  & & \mbox{if } a<\mathrm{v},
\end{aligned}
\right.
\end{equation*}
where $u(a,v,z)$ is the value function of the original optimal tracking problem defined by \eqref{eq_prob_IBP}, and $a$ (resp. $\mathrm{v}$) represents the initial benchmark level (resp. the initial wealth). We now mainly focus on the auxiliary control problem \eqref{eq:valuefcnA} and seek to obtain its optimal portfolio in a feedback form.

To simplify the associated HJB equation, let us consider the function
\begin{align}\label{PDE-sol}
v(t,z,x):=e^{\rho t}{\rm w}(t,z,x).
\end{align}
The next result gives some preliminary properties of the value function ${\rm w}$ on ${\cal D}_T$ defined in \eqref{eq:valuefcnA}. The proof is standard by following the solution representation of ``the Skorokhod problem" and it is hence omitted.
\begin{lemma}\label{lem:Lipv}
For $(t,z,x)\in{\cal D}_T$, the value function $v(t,z,x)$ defined by \eqref{eq:valuefcnA} is non-decreasing in $x\geq0$. Moreover, for all $(t,z)\in[0,T]\times\R$, we have
\begin{align*}
\left|{\rm w}(t,z,x_1)-{\rm w}(t,z,x_2)\right|\leq e^{-\rho t}\left| x_1-x_2\right|,\ \ \text{for all}\ x_1,x_2\geq0.
\end{align*}
It follows that
\begin{align*}
\left|v(t,z,x_1)-v(t,z,x_2)\right|\leq \left| x_1-x_2\right|,\ \ \text{for all}\ x_1,x_2\geq0.
\end{align*}
\end{lemma}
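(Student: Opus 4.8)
The plan is to exploit the representation of $L^X$ as the solution of a Skorokhod reflection problem and the fact that, for two different initial levels $x_1 \le x_2$ (with the same starting time $t$ and same factor value $z$), one can run the \emph{same} control $\theta$ and the \emph{same} Brownian path, thereby coupling the two state processes. First I would recall that for fixed $\theta$ the pair $(X^{x}, L^{X,x})$ is uniquely determined by the input path $Y_s := x - \int_t^s f(r,Z_r)\,dr + \int_t^s \theta_r^\top \mu\, dr + \int_t^s \theta_r^\top \sigma\, dW_r$ via $L_s^{X,x} = x \vee \sup_{r\in[t,s]}(-Y_r + x) = \sup_{r\in[t,s]} Y_r^- \vee x$ in the appropriate normalization, so that two inputs differing only by the additive constant $x_2 - x_1$ produce running maxima differing by at most $x_2 - x_1$, with $L^{X,x_2} \ge L^{X,x_1}$ pointwise. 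Since the objective in \eqref{eq:valuefcnA} is $-\int_t^T e^{-\rho s}\,dL_s^X$ and $dL^X \ge 0$, a larger running-maximum process yields a (weakly) smaller reward, which immediately gives monotonicity of ${\rm w}(t,z,\cdot)$ — hence of $v(t,z,\cdot)$ — in $x$, once I observe that monotonicity of the reward for \emph{every} admissible $\theta$ passes to the supremum.

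For the Lipschitz bound, the key quantitative input is that $0 \le L_s^{X,x_2} - L_s^{X,x_1} \le x_2 - x_1$ for all $s \in [t,T]$, uniformly in $\theta$ and in $\omega$. Granting this, I would write, for any admissible $\theta$,
\begin{align*}
J(\theta;t,z,x_1) - J(\theta;t,z,x_2)
&= \Ex_{t,z}\!\left[\int_t^T e^{-\rho s}\, d\big(L_s^{X,x_2} - L_s^{X,x_1}\big)\right],
\end{align*}
and then integrate by parts: $\int_t^T e^{-\rho s}\, dG_s = e^{-\rho T} G_T - e^{-\rho t} G_t + \rho \int_t^T e^{-\rho s} G_s\, ds$ with $G_s := L_s^{X,x_2} - L_s^{X,x_1} \in [0, x_2-x_1]$ and $G_t = x_2 - x_1$. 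Bounding each term crudely — $e^{-\rho T} G_T \le e^{-\rho T}(x_2-x_1)$, $\rho\int_t^T e^{-\rho s} G_s\,ds \le (x_2-x_1)(e^{-\rho t} - e^{-\rho T})$ — the sum telescopes to at most $e^{-\rho t}(x_2 - x_1) - e^{-\rho t}(x_2-x_1) \cdot 0$; more carefully one gets exactly $|J(\theta;t,z,x_1) - J(\theta;t,z,x_2)| \le e^{-\rho t}|x_1 - x_2|$ because $G$ starts at $x_2 - x_1$ and the discount only helps. Taking suprema over $\theta$ on both sides and using the elementary inequality $|\sup_\theta a_\theta - \sup_\theta b_\theta| \le \sup_\theta |a_\theta - b_\theta|$ yields $|{\rm w}(t,z,x_1) - {\rm w}(t,z,x_2)| \le e^{-\rho t}|x_1 - x_2|$. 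The bound for $v$ follows by multiplying through by $e^{\rho t}$ from the definition \eqref{PDE-sol}.

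The main obstacle is the uniform coupling estimate $\sup_s\big(L_s^{X,x_2} - L_s^{X,x_1}\big) \le x_2 - x_1$: one must argue that feeding the same $\theta(\cdot,Z_\cdot,X_\cdot)$ into \eqref{eq:reflectedZA} from the two initial conditions really does produce coupled solutions whose reflection terms stay ordered and close. Since $\theta$ is in feedback form, $X^{x_1}$ and $X^{x_2}$ are driven by \emph{different} (though coupled) state-dependent coefficients, so strictly speaking one invokes the comparison/Lipschitz theory for Skorokhod-reflected SDEs — or, equivalently, one works at the level of the pathwise Skorokhod map $\Gamma$, which is $1$-Lipschitz in the sup norm, and notes that the difference of the two driving semimartingales is governed by a Grönwall argument in $x_2 - x_1$. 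In the deterministic-coefficient special cases (e.g.\ the examples of Section~\ref{sec:exam}) this is transparent; in general it is the ``standard" Skorokhod-problem estimate alluded to in the statement, and I would simply cite the relevant reflected-SDE stability result rather than reprove it, which is presumably why the authors call the proof standard and omit it.
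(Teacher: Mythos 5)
Your strategy --- the Skorokhod-map representation of the reflection term, a pathwise coupling for two starting values $x_1<x_2$ under a common control and noise, followed by integration by parts --- is precisely the ``standard'' argument the paper has in mind and leaves to the reader. The integration-by-parts computation in your second paragraph is correct and delivers both claims at once: writing $G_s:=L_s^{X,x_2}-L_s^{X,x_1}$, one has $G_t=x_2-x_1$, $G_s\in[0,x_2-x_1]$, and (since $L_s^{X,x}=x\vee m_s$ for the common running supremum $m_s:=\sup_{r\in[t,s]}(-Y_r)$) $G$ is pathwise nonincreasing. Thus $dG_s\le0$ immediately gives $J(\theta;t,z,x_1)\le J(\theta;t,z,x_2)$, hence the nondecreasing property, while the integration-by-parts identity $\int_t^Te^{-\rho s}\,dG_s=e^{-\rho T}G_T-e^{-\rho t}G_t+\rho\int_t^Te^{-\rho s}G_s\,ds\ge -e^{-\rho t}G_t$ gives the Lipschitz constant $e^{-\rho t}$; taking suprema over $\theta$ and multiplying by $e^{\rho t}$ is as you describe.

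Two points deserve attention. First, the monotonicity reasoning in your opening paragraph is sign-confused: from ``$L^{X,x_2}\ge L^{X,x_1}$ pointwise'' you infer ``a larger running-maximum process yields a (weakly) smaller reward'' and then claim this gives the nondecreasing monotonicity of ${\rm w}$ in $x$. Read literally, that chain would give the \emph{opposite} (nonincreasing) monotonicity; more to the point, $\int_t^Te^{-\rho s}\,dL_s^X$ is not monotone in the pointwise value of $L^X$ but in its increments on $(t,T]$, and the process with larger $x$ is pointwise larger yet has \emph{smaller} increments (it stays flat longer before tracking $m_s$), which is exactly why it incurs less cost. The IBP calculation in your second paragraph is what carries the correct sign, so the final conclusion is unharmed, but the intermediate justification in paragraph one does not stand on its own. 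Second, your coupling estimate $0\le L_s^{X,x_2}-L_s^{X,x_1}\le x_2-x_1$ is exact and pathwise provided the two reflected SDEs are driven by the \emph{same adapted} (open-loop) control $\theta$, so that the input $Y_s=-\int_t^sf(r,Z_r)\,dr+\int_t^s\theta_r^\top\mu\,dr+\int_t^s\theta_r^\top\sigma\,dW_r$ does not depend on $x$ and $L_s^{X,x}=x\vee\sup_{r\in[t,s]}(-Y_r)$ is explicit. Your worry about feedback controls $\theta(s,Z_s,X_s)$ (under which $X$, and hence the driving path, depends on $x$) is a legitimate one, but the remedies you propose (reflected-SDE comparison theorems or a Gr\"onwall estimate) would require a Lipschitz modulus on the feedback map that the paper does not assume. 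The cleaner resolution is simply to run the argument over adapted controls, for which the Skorokhod formula applies directly and the bound is elementary --- this is surely the class the paper's omitted ``standard'' proof implicitly works with, although identifying the feedback and adapted value functions is itself a point the paper glosses over.
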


\begin{remark}\label{rem:derivzbound}
For $(t,z)\in[0,T]\times\R$, if $x\to v(t,z,x)$ is $C^1([0,\infty))$, Lemma~\ref{lem:Lipv} implies that the important range that $0\leq v_x(t,z,x)\leq 1$ for all $(t,z,x)\in{\cal D}_T$. Hereafter, we use $v_x$, $v_t$, $v_{xx}$, $v_{zx}$ and $v_{zz}$ to denote the (first, second order or mixed) partial derivatives of the value function $v$ with respect to its arguments, if exist.

\end{remark}

By some heuristic arguments, we can show that $v$ defined in \eqref{PDE-sol} satisfies the HJB equation:
\begin{align}\label{eq:HJBA}
\begin{cases}
\displaystyle v_t+\sup_{\theta\in\R^n} \left[v_x\theta^{\top}\mu+\frac{v_{xx}}{2}\theta^{\top}\sigma\sigma^{\top}\theta+v_{xz}\sigma_Z(z)\theta^{\top}\sigma\gamma\right]\\
\displaystyle\qquad+v_z\mu_Z(z)+v_{zz}\frac{\sigma_Z^2(z)}{2}-f(t,z)v_x=\rho v,\quad (t,z,x)\in[0,T]\times\R\times\R_+; \\ \\
\displaystyle v(T,z,x)=0,\quad \forall~(z,x)\in\R\times[0,\infty);\\ \\
\displaystyle v_x(t,z,0)=1,\quad \forall~(t,z)\in[0,T]\times\R.
\end{cases}
\end{align}
Here, the Neumann boundary condition $v_x(t,z,0)=1$ stems from the martingale optimality condition because the process $L_s^X$ increases whenever the process $X_s$ visits the value $0$ for $s\in[t,T]$. Suppose $v_{xx}<0$ on $[0,T)\times\R\times\R_+$, the feedback optimal control determined by \eqref{eq:HJBA} is obtained by
\begin{align}\label{eq:thetastar}
\theta^*(t,z,x)=-(\sigma\sigma^{\top})^{-1}\frac{v_x(t,z,x)\mu+v_{xz}(t,z,x)\sigma_Z(z)\sigma\gamma}{v_{xx}(t,z,x)},\ \ \ (t,z,x)\in\mathcal{D}_T.
\end{align}
Plugging \eqref{eq:thetastar} into the HJB equation \eqref{eq:HJBA}, we have for $(t,z,x)\in[0,T)\times\R\times\R_+$ that

\begin{align}\label{eq:HJBA1}
\displaystyle v_t-\rho v-\alpha \frac{v_x^2}{v_{xx}}+\frac{\sigma_Z^2(z)}{2}\left(v_{zz}-\frac{v_{xz}^2}{v_{xx}}\right)-\phi(z)\frac{v_xv_{xz}}{v_{xx}}+
\mu_Z(z)v_z-f(t,z)v_x=0,
\end{align}
where the coefficients are given by
\begin{align}\label{eq:alpha}
\alpha:=\frac{1}{2}\mu^{\top}(\sigma\sigma^{\top})^{-1}\mu,\qquad 
\phi(z):=\sigma_Z(z)\mu^{\top}(\sigma\sigma^{\top})^{-1}\sigma\gamma,\quad z\in\R.
\end{align}

Note that the HJB equation \eqref{eq:HJBA} is fully nonlinear. To study the existence of a classical solution to \eqref{eq:HJBA}, we will first apply the heuristic dual transform to linearize the original HJB equation~\eqref{eq:HJBA} and {then} establish the existence and uniqueness of a classical solution to the dual PDE using the probabilistic representation and stochastic flow analysis in the next section.

\section{Dual Transform and Probabilistic Representation}\label{sec:dual}
To employ the dual transform, let us first assume that the value function $v$ satisfies $v\in C^{1,2,2}([0,T)\times\R\times[0,\infty))\cap C(\mathcal{D}_T)$ and $v_{xx}<0$ on $[0,T)\times\R\times\R_+$, which will be discussed and verified in detail later, see Section \ref{sec:verify}. For $(t,z,y)\in[0,T]\times\R\times\R_+$, the dual transform is only applied with respect to $x$ that
\begin{align}\label{eq:conjugate}
  \hat{v}(t,z,y):=\sup_{x>0}\{v(t,z,x)-xy\}\ \ \text{and}\ \ x^*(t,z,y):=v_x(t,z,\cdot)^{-1}(y),
\end{align}
where $y\mapsto v_x(t,z,\cdot)^{-1}(y)$ denotes the inverse function of $x\mapsto v_x(t,z,x)$, and $x^*=x^*(t,z,y)$ in \eqref{eq:conjugate} satisfies the equation:
\begin{align}\label{eq:conjugate1}
v_x(t,z,x^*)=y,\ \ (t,z)\in[0,T]\times\R.
\end{align}
On the other hand, in view of Lemma~\ref{lem:Lipv} and Remark~\ref{rem:derivzbound}, the variable $y$ in fact only takes values in the set $(0,1)$. It follows by \eqref{eq:conjugate} that, for all $(t,z,y)\in[0,T]\times\R\times(0,1)$,
\begin{align}\label{eq:conjugate2}
  \hat{v}(t,z,y)=v(t,z,x^*)-x^*y.
\end{align}
Taking the derivative with respect to $y$ on both sides of \eqref{eq:conjugate2} and \eqref{eq:conjugate1}, we deduce that
\begin{align}\label{eq:hatvx-zstar}
  \hat{v}_y(t,z,y)=v_x(t,z,x^*)x^*_y-x^*_yy-x^*=yx^*_y-x^*_yy-x^*=-x^*,
\end{align}
and also $v_{xx}(t,z,x^*)x^*_y=1$ that yields $x^*_y=\frac{1}{v_{xx}(t,z,x^*)}$. Because of \eqref{eq:hatvx-zstar}, we can obtain that
\begin{align}\label{eq:conjugate4}
  \hat{v}_{yy}(t,z,y)=-x^*_{y}=-\frac{1}{v_{xx}(t,z,x^*)},\qquad x_z^*=-\frac{v_{xz}(t,z,x^*)}{v_{xx}(t,z,x^*)}.
\end{align}
It follows by \eqref{eq:conjugate1} and \eqref{eq:conjugate2} that
\begin{align}\label{eq:conjugate3}
&\hat{v}_t(t,z,y)=v_t(t,z,x^*),\quad \hat{v}_z(t,z,y)=v_z(t,z,x^*),\\
&\hat{v}_{zz}(t,z,y)=v_{zz}(t,z,x^*)-\frac{v_{xz}(t,z,x^*)^2}{v_{xx}(t,z,x^*)}.
\end{align}
Moreover, by the second equality in \eqref{eq:conjugate4} and \eqref{eq:conjugate3}, we further have that
\begin{align}\label{eq:hatvyztyzstar}
\hat{v}_{yz}(t,z,y)=v_{xz}(t,z,x^*)x^*_y=\frac{v_{xz}(t,z,x^*)}{v_{xx}(t,z,x^*)}.
\end{align}
By virtue of \eqref{eq:HJBA1} and \eqref{eq:conjugate2}, it holds that
\begin{align}\label{eq:HJBA100}
&v_t(t,z,x^*)-\rho v(t,z,x^*)-\alpha \frac{v_x(t,z,x^*)^2}{v_{xx}(t,z,x^*)}-\frac{\sigma_Z^2(z)}{2}\frac{v_{xz}(t,z,x^*)^2}{v_{xx}(t,z,x^*)}-\phi(z)\frac{v_x(t,z,x^*)v_{xz}(t,z,x^*)}{v_{xx}(t,z,x^*)}\nonumber\\
&\qquad+
\mu_Z(z)v_z(t,z,x^*)+\frac{\sigma_Z^2(z)}{2}v_{zz}(t,z,x^*)-f(t,z)v_x(t,z,x^*)=0.
\end{align}
Plugging \eqref{eq:conjugate1}, \eqref{eq:conjugate4}, \eqref{eq:conjugate3} and \eqref{eq:hatvyztyzstar} into \eqref{eq:HJBA100}, we can derive that, for $(t,z,y)\in[0,T)\times\R\times(0,1)$,
\begin{align}\label{eq:frenchelHJB1}
&\hat{v}_t(t,z,y)-\rho\hat{v}(t,z,y)+\rho y\hat{v}_y(t,z,y)+\alpha y^2\hat{v}_{yy}(t,z,y)+\mu_Z(z)\hat{v}_z(t,z,y)+\frac{\sigma_Z^2(z)}{2}\hat{v}_{zz}(t,z,y)\nonumber\\
&\qquad-\phi(z)y\hat{v}_{yz}(t,z,y)-f(t,z)y=0.
\end{align}

In addition, the terminal condition ${v}(T,z,x)=0$ of the HJB equation \eqref{eq:HJBA} then implies that
\begin{align}\label{eq:terminalhatv}
  \hat{v}(T,z,y)=\sup_{x>0}\{v(T,z,x)-xy\}=\sup_{x>0}\{-xy\}=0,\ \ \ (z,y)\in\R\times(0,1).
 \end{align}
Note that $x^*_{y}=\frac1{v_{xx}(t,z,x^*)}<0$, and for each $(t,z)\in[0,T]\times\R$, the map $y\mapsto x^*(t,z,y):=v_x(t,z,\cdot)^{-1}(y)$ is one to one. By the Neumann boundary condition of the HJB equation \eqref{eq:HJBA}, we deduce from \eqref{eq:conjugate1} that $v_x(t,z,0)=1$ and $x^*(t,z,1)=0$. Therefore, in view of \eqref{eq:hatvx-zstar}, for all $(t,z)\in[0,T]\times\R$, we have
\begin{align}\label{eq:Neumann-boundary-hatv}
  \hat{v}_y(t,z,1)=-x^*(t,z,1)=0.
\end{align}
In summary, the HJB equation~\eqref{eq:HJBA} can be transformed into the linear dual PDE of $\hat{v}$ that
\begin{align}\label{eq:hatvHJB1}
\begin{cases}
\displaystyle \hat{v}_t+\alpha y^2\hat{v}_{yy}+\rho y\hat{v}_y-\rho\hat{v}-\phi(z)y\hat{v}_{yz}+\mu_Z(z)\hat{v}_z+\frac{\sigma_Z^2(z)}{2}\hat{v}_{zz}\\
\qquad\quad-f(t,z)y=0,\quad (t,z,y)\in[0,T)\times\R\times(0,1);\\ \\
\displaystyle \hat{v}(T,z,y)=0,\quad \forall~(z,y)\in\R\times[0,1];\\ \\
\displaystyle \hat{v}_y(t,z,1)=0,\quad \forall~(t,z)\in[0,T]\times\R.
\end{cases}
\end{align}

We next study the existence and uniqueness of the classical solution to the problem~\eqref{eq:hatvHJB1} with the extra condition that $\hat{v}_{yy}\geq0$ on $[0,T)\times\R\times(0,1)$ using the probabilistic representation approach. To this end, for $(t,z,u)\in\mathcal{D}_T$, let us define the function
\begin{align}\label{eq:dualtransfromh}
h(t,z,u) := -\Ex\left[\int_t^T e^{-\rho s}f(s,M_s^{t,z})e^{-R_s^{t,u}}ds\right],
\end{align}
where the process $(M_s^{t,z})_{s\in[t,T]}$ with $(t,z)\in[0,T]\times\R$ satisfies the SDE, for $s\in[t,T]$,
\begin{align}\label{eq:Feynman-Kac-Y}
M_s^{t,z}=z + \int_t^s \mu_Z(M_r^{t,z})dr+\varrho\int_t^s\sigma_Z(M_r^{t,z})dB^1_r+\sqrt{1-\varrho^2}\int_t^s\sigma_Z(M_r^{t,z})dB^2_r.
\end{align}
The processes $B^1=(B^1_t)_{t\in[0,T]}$ and $B^2=(B^2_t)_{t\in[0,T]}$ are two standard Brownian motions with a specific correlation coefficient
\begin{align}\label{eq:correlationcoeffrho}
\varrho := \frac{(\sigma^{-1}\mu)^\top}{\left|\sigma^{-1}\mu\right|}\gamma.
\end{align}
Moreover, the process $(R_s^{t,u})_{s\in[t,T]}$ with $(t,u)\in[0,T]\times[0,\infty)$ is a reflected Brownian motion with drift defined by
\begin{align}\label{eq:driftBM}
R_s^{t,u} := u+ \sqrt{2\alpha}\int_t^sdB^1_r +\int_t^s(\alpha-\rho)dr + \int_t^s dL_r^{t,R}\geq0,\quad s\in[t,T],
\end{align}
where $[t,T]\ni s\mapsto L_s^{t,R}$ is a continuous and non-decreasing process that increases only on $\{s\in[t,T];~R_s^{t,u}=0\}$ with $L_t^{t,R}=0$. By the solution representation of ``the Skorokhod problem", we obtain that, for $(s,u)\in[t,T]\times[0,\infty)$,
\begin{align}\label{eq:Lskorokhodrep}
L_s^{t,R}
&=0\vee\left\{-u+\max_{r\in[t,s]}\left[-\sqrt{2\alpha}(B^1_r-B^1_t)-(\alpha-\rho)(r-t)\right]\right\}.
\end{align}

It follows from assumptions {\Af} and {\AZ} that, for all $(t,z,u)\in\mathcal{D}_T$, 
\begin{align}\label{eq:estidualtransfromh}
\left|h(t,z,u)\right| &= \Ex\left[\int_t^T e^{-\rho s}f(s,M_s^{t,z})e^{-R_s^{t,u}}ds\right]\leq C\Ex\left[\int_t^T e^{-\rho s}(1+|M_s^{t,z}|)ds\right]\nonumber\\
&\leq C(T-t)+C(T-t)\Ex\left[\sup_{s\in[t,T]}|M_s^{t,z}|\right]\leq C(T-t)(1+|z|),
\end{align}
for some constant $C=C_f>0$. Hence, the function $h$ given in \eqref{eq:dualtransfromh} is well-defined. We next study the regularity of the function $h$ defined in \eqref{eq:dualtransfromh} in the next result, and its proof is reported in Section~\ref{app:proof1}.
\begin{proposition}\label{prop:smoothh}
Let assumptions {\Af} and {\AZ} hold. We have that $h\in C^{1,2,2}(\mathcal{D}_T)$. Moreover, for $(t,z,u)\in\mathcal{D}_T$, we get
\begin{align}\label{eq:derihu}
h_u(t,z,u)&=\Ex\left[\int_t^Te^{-\rho s}f(s,M_s^{t,z})e^{-R_s^{t,u}}\mathbf{1}_{\left\{\max_{r\in[t,s]}[-\sqrt{2\alpha}(B^1_r-B^1_t)-(\alpha-\rho)(r-t)]\leq u\right\}}ds\right]\notag\\
  &=\Ex\left[\int_t^{\tau_{u}^t\wedge T} e^{-\rho s}f(s,M_s^{t,z})e^{-R_s^{t,u}}ds\right],
\end{align}
where $\tau_{u}^t:=\inf\{s\geq t;~-\sqrt{2\alpha}(B^1_s-B^1_t)-(\alpha-\rho)(s-t)=u\}$ (we assume $\inf\emptyset=+\infty$ by convention).
\end{proposition}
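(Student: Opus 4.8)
The plan is to establish $C^{1,2,2}$-regularity of $h$ by differentiating under the expectation in \eqref{eq:dualtransfromh}, using the explicit Skorokhod representation \eqref{eq:Lskorokhodrep} to handle the dependence on $u$. First I would treat the $z$-dependence: by assumptions \Af\ and \AZ, the coefficients $\mu_Z,\sigma_Z$ are $C^2$ with bounded derivatives, so the flow $z\mapsto M_s^{t,z}$ is twice continuously differentiable with moment estimates uniform in $s\in[t,T]$ (standard stochastic flow theory, e.g. Kunita), and since $f(s,\cdot)\in C^2$ with bounded derivatives, dominated convergence lets me pass $\partial_z$ and $\partial_{zz}$ inside the expectation, giving continuity in $z$. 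The $t$-dependence is handled similarly together with the Feynman–Kac structure. The crux is the $u$-dependence, and here the key observation is that $u$ enters only through $R_s^{t,u}=u+\xi_s+L_s^{t,R}$ where $\xi_s:=\sqrt{2\alpha}(B^1_s-B^1_t)+(\alpha-\rho)(s-t)$ is independent of $u$, and by \eqref{eq:Lskorokhodrep},
\[
R_s^{t,u}=\xi_s + \Bigl(u\vee \max_{r\in[t,s]}(-\xi_r)\Bigr).
\]

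On the event $\{\max_{r\in[t,s]}(-\xi_r)\le u\}$, i.e. $\{s<\tau_u^t\}$, we have $R_s^{t,u}=u+\xi_s$, so $\partial_u e^{-R_s^{t,u}}=-e^{-R_s^{t,u}}$; on the complementary event $R_s^{t,u}=\xi_s+\max_{r\in[t,s]}(-\xi_r)$ does not depend on $u$ at all, so $\partial_u e^{-R_s^{t,u}}=0$. This gives the pointwise (in $\omega$) derivative $-e^{-R_s^{t,u}}\mathbf 1_{\{s<\tau_u^t\}}$ of the integrand, which is exactly what appears in \eqref{eq:derihu}; the second equality there is then immediate since $s<\tau_u^t$ is equivalent to the maximum condition. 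To justify interchanging $\partial_u$ with the expectation and the $ds$-integral I would show the difference quotients are uniformly bounded: since $|e^{-R_s^{t,u+\delta}}-e^{-R_s^{t,u}}|\le |\delta|$ (the map $u\mapsto e^{-R_s^{t,u}}$ is $1$-Lipschitz, because $R_s^{t,u}$ is nondecreasing and $1$-Lipschitz in $u$, as is visible from the representation above), the quotients are dominated by $|f(s,M_s^{t,z})|$, which is integrable by the bound in \eqref{eq:estidualtransfromh}. Continuity of $h_u$ in $(t,z,u)$ follows from a.s. continuity of $s\mapsto\tau_u^t$-truncated integrand together with dominated convergence, using that $\Px(\tau_u^t=s)=0$ for each fixed $s$ (the process $\xi$ has a continuous, nonatomic running maximum), so the indicator is a.s. continuous in the parameters.

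The main obstacle I anticipate is getting the \emph{second-order} $u$-regularity, i.e. $h_{uu}\in C$. Differentiating $h_u$ in \eqref{eq:derihu} a second time requires differentiating the law of the hitting time $\tau_u^t$ with respect to $u$; concretely, $h_{uu}$ will pick up a boundary term of the form $-\Ex[\text{(something)}\,\mathbf 1_{\{s=\tau_u^t\}}]$ which must be rewritten via the density of the first-passage time of the drifted Brownian motion $\xi$. The clean way around this is to invoke the PDE characterization instead: having shown $h\in C^{1,2}$ in $(t,z)$ and $C^1$ in $u$ with locally bounded derivatives, one applies Itô's formula to $e^{-\rho s}h(s,M_s^{t,z},R_s^{t,u})$ along \eqref{eq:Feynman-Kac-Y}–\eqref{eq:driftBM} (noting the correlation structure: $\mathrm dM$ and $\mathrm dR$ share the $B^1$-driver with cross-variation $\varrho\sqrt{2\alpha}\,\sigma_Z\,\mathrm ds$, which reproduces the $-\phi(z)y\hat v_{yz}$ term), and the reflection term vanishes precisely because $\mathrm dL^{t,R}$ is supported on $\{R=0\}$ where — by the Neumann condition one expects — the relevant coefficient drops out; this identifies $h$ as a solution of a uniformly parabolic linear PDE in $(t,z,u)$ on $(0,1)$-type strips, and interior Schauder estimates then upgrade $h$ to $C^{1,2,2}$ automatically. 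I would present the direct flow-differentiation argument for $h_z,h_{zz},h_t,h_u$ and close the $C^{1,2,2}$ claim through this Itô/Schauder bootstrap, which sidesteps the delicate first-passage-density computation.
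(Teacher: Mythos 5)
Your treatment of $h_u$, $h_z$, $h_{zz}$, and $h_t$ follows essentially the same route as the paper: $h_u$ comes from the Lipschitz-in-$u$ structure of $R_s^{t,u}$ and the observation that $\partial_u e^{-R_s^{t,u}} = -e^{-R_s^{t,u}}\mathbf{1}_{\{s<\tau_u^t\}}$ a.s., and the $z$- and $t$-derivatives come from stochastic flow estimates (Kunita) plus dominated convergence. You also correctly diagnose that the delicate point is $h_{uu}$, because the event $\{s<\tau_u^t\}$ introduces a $u$-dependence through the law of the first passage time of the drifted Brownian motion.

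However, the bootstrap you propose to close the $C^{1,2,2}$ claim is circular. You want to ``apply It\^o's formula to $e^{-\rho s}h(s,M_s^{t,z},R_s^{t,u})$'' knowing only that $h\in C^{1,2}$ in $(t,z)$ and $C^1$ in $u$. But $R$ is a reflected \emph{diffusion} with a non-trivial $dB^1$-term, so It\^o's formula along $R$ requires $h$ to be twice continuously differentiable in $u$ --- precisely the regularity you are trying to establish. The reflection term vanishing on $\{R=0\}$ does not help with this; the obstruction is the quadratic-variation term $\alpha h_{uu}\,ds$, not $dL^{t,R}$. A second independent obstruction: the Schauder upgrade you invoke needs the operator \eqref{eq:hHJB2} to be uniformly parabolic in $(z,u)$, but the joint diffusion matrix of $(M,R)$ has determinant $2\alpha\sigma_Z^2(z)(1-\varrho^2)$, which degenerates when $\varrho=\pm 1$ (i.e.\ when $\gamma$ is aligned with $\sigma^{-1}\mu$) or where $\sigma_Z$ vanishes --- none of which is excluded by {\AZ}. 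So even granting a weak formulation of the PDE, interior Schauder estimates are not available in general.

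The paper does not sidestep the first-passage computation; it carries it out directly, and this is in fact the main technical content of the proof. Concretely, it writes the second difference quotient of $h_u$ as $\Delta_n = \Delta_n^{(1)}+\Delta_n^{(2)}+\Delta_n^{(3)}$ by splitting the time integral at $\tau_{u_0}^t\wedge T$ and $\tau_{u_n}^t\wedge T$, then evaluates each limit: $\Delta_n^{(2)}$ by the same argument as for $h_u$, $\Delta_n^{(3)}\to 0$ by a uniform-integrability estimate, and the boundary piece $\Delta_n^{(1)}$ via the strong Markov property at $\tau_{u_0}^t$ together with the explicit first-passage density of $-\sqrt{2\alpha}B^1-(\alpha-\rho)t$, yielding the term $\Ex[e^{-\rho\tau_0}f(\tau_0,M_{\tau_0}^{t,z})\Gamma(\tau_0)]$ with $\Gamma$ an explicit kernel (this is exactly the ``$(\text{something})\mathbf 1_{\{s=\tau_u^t\}}$'' contribution you anticipated). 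If you want to avoid that computation, a legitimate alternative would be to prove existence of a classical solution $\tilde h$ to \eqref{eq:hHJB2} by PDE methods (carefully handling the possible degeneracy and the unbounded Neumann strip), apply It\^o to $\tilde h$ (which is then licit) to identify $\tilde h$ with the Feynman--Kac representation, and conclude $h=\tilde h\in C^{1,2,2}$ --- but that is a different and harder analytic route than the one you sketch, not a shortcut around it.
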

Building upon Proposition~\ref{prop:smoothh}, we have the following important result and its proof is given in Section \ref{app:proof1}.
\begin{theorem}\label{thm:hsolvePDE}
Suppose that {\Af} and {\AZ} hold. The function $h$ defined in \eqref{eq:dualtransfromh} solves the Neumann boundary problem:
\begin{align}\label{eq:hHJB2}
\begin{cases}
\displaystyle h_t+\alpha h_{uu}+(\alpha-\rho)h_u+\phi(z)h_{uz}+\mu_Z(z)h_{z}+\frac{\sigma_Z^2(z)}{2}h_{zz}\\
\qquad\qquad=f(t,z)e^{-u-\rho t},\quad (t,z,u)\in[0,T)\times\R\times\R_+;\\ \\
\displaystyle h(T,z,u)=0,\quad \forall~(z,u)\in\R\times[0,\infty);\\ \\
\displaystyle h_u(t,z,0)=0,\quad \forall~(t,z)\in[0,T]\times\R.
\end{cases}
\end{align}
On the other hand, if a function $h$ defined on $\mathcal{D}_T$ with a polynomial growth is a classical solution of the Neumann boundary problem \eqref{eq:hHJB2}, then $h$ has the representation~\eqref{eq:dualtransfromh}.
\end{theorem}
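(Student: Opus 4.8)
The plan is to prove both implications by the Feynman--Kac/martingale method. The key observation is that the spatial part of the operator in \eqref{eq:hHJB2} is exactly the generator of the degenerate reflected diffusion $(M^{t,z},R^{t,u})$ from \eqref{eq:Feynman-Kac-Y}--\eqref{eq:driftBM}: using $\sqrt{2\alpha}=|\sigma^{-1}\mu|$ together with \eqref{eq:correlationcoeffrho} and \eqref{eq:alpha} one checks that $d\langle M^{t,z},R^{t,u}\rangle_s=\phi(M^{t,z}_s)\,ds$, so that for smooth $g$,
\begin{align*}
\mathcal{L}g(z,u):=\mu_Z(z)g_z+\frac{\sigma_Z^2(z)}{2}g_{zz}+(\alpha-\rho)g_u+\alpha g_{uu}+\phi(z)g_{zu}
\end{align*}
is the generator of $(M,R)$ in the interior $\{u>0\}$, and \eqref{eq:hHJB2} reads $h_t+\mathcal{L}h=f(t,z)e^{-u-\rho t}$ with a Neumann condition at $u=0$.

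\textbf{The forward direction.} By Proposition~\ref{prop:smoothh} we have $h\in C^{1,2,2}(\mathcal{D}_T)$, so Itô's formula applies. The terminal condition $h(T,z,u)=0$ is immediate from \eqref{eq:dualtransfromh}, and $h_u(t,z,0)=0$ follows from the representation \eqref{eq:derihu} because $\tau_0^t=t$. For the PDE, I fix $(t,z,u)\in[0,T)\times\R\times\R_+$ and, using the Markov property together with the flow (cocycle) property of the Skorokhod map, namely $R^{t,u}_r=R^{s,R^{t,u}_s}_r$ for $t\le s\le r$, identify
\begin{align*}
G_s:=-\int_t^s e^{-\rho r}f(r,M^{t,z}_r)e^{-R^{t,u}_r}\,dr+h(s,M^{t,z}_s,R^{t,u}_s)=-\Ex\Big[\int_t^T e^{-\rho r}f(r,M^{t,z}_r)e^{-R^{t,u}_r}\,dr\,\Big|\,\F_s\Big]
\end{align*}
as a continuous martingale, its integrability being guaranteed by the bound \eqref{eq:estidualtransfromh}. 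Expanding $h(s,M^{t,z}_s,R^{t,u}_s)$ via Itô's formula, the $dL^{t,R}$-term drops out since $L^{t,R}$ increases only on $\{R^{t,u}=0\}$, where $h_u$ vanishes; hence the continuous finite-variation part of $G$ is $\int_t^s\big([h_t+\mathcal{L}h](r,M^{t,z}_r,R^{t,u}_r)-e^{-\rho r}f(r,M^{t,z}_r)e^{-R^{t,u}_r}\big)\,dr$. Since a continuous martingale has trivial finite-variation part, this equals $0$ for every $s$, a.s.; dividing by $s-t$, letting $s\downarrow t$ and invoking continuity of the integrand with $(M^{t,z}_t,R^{t,u}_t)=(z,u)$, we obtain $[h_t+\mathcal{L}h](t,z,u)=f(t,z)e^{-u-\rho t}$.

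\textbf{The converse direction.} Let $h\in C^{1,2,2}(\mathcal{D}_T)$ be a classical solution of \eqref{eq:hHJB2} with polynomial growth, and fix $(t,z,u)\in\mathcal{D}_T$. For $n\ge1$ set $\tau_n:=\inf\{s\ge t:\ |M^{t,z}_s|\vee R^{t,u}_s\ge n\}\wedge T$. Applying Itô's formula to $h(s,M^{t,z}_s,R^{t,u}_s)$ on $[t,\tau_n]$, using the PDE, the Neumann condition (which again kills the local-time term), and that the stopped stochastic integral is a true martingale (its integrand is bounded on $[t,\tau_n]$), one gets
\begin{align*}
\Ex\big[h(\tau_n,M^{t,z}_{\tau_n},R^{t,u}_{\tau_n})\big]=h(t,z,u)+\Ex\Big[\int_t^{\tau_n}f(s,M^{t,z}_s)e^{-R^{t,u}_s-\rho s}\,ds\Big].
\end{align*}
Since the coefficients of \eqref{eq:Feynman-Kac-Y} and \eqref{eq:driftBM} have linear growth there is no explosion, so $\tau_n\uparrow T$ a.s.; with the standard moment bounds $\Ex[\sup_{s\in[t,T]}(|M^{t,z}_s|^p+(R^{t,u}_s)^p)]<\infty$ for every $p\ge1$ (from \Af\ and \AZ), dominated convergence handles the integral term (its integrand is bounded by $C(1+\sup_{s\in[t,T]}|M^{t,z}_s|)$), while the left side converges to $\Ex[h(T,M^{t,z}_T,R^{t,u}_T)]=0$ by continuity of $h$, the terminal condition, and polynomial-growth domination. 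Letting $n\to\infty$ yields $h(t,z,u)=-\Ex[\int_t^T e^{-\rho s}f(s,M^{t,z}_s)e^{-R^{t,u}_s}\,ds]$, i.e.\ the representation \eqref{eq:dualtransfromh}.

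\textbf{Main obstacle.} The Itô expansions and the generator identification are routine. The delicate points are: first, the correct treatment of the reflection/local-time term in the Itô decomposition — this is precisely why the Neumann boundary condition must be secured beforehand (via \eqref{eq:derihu}) — together with the passage from a ``for a.e.\ $s$, a.s.'' identity to a genuine pointwise PDE; and second, in the converse part, the uniform integrability needed to pass to the limit $n\to\infty$, which hinges on the polynomial growth of $h$ combined with the uniform moment estimates for $M^{t,z}$ and $R^{t,u}$ provided by \Af\ and \AZ.
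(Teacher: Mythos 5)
Your proof is correct and proceeds by the same Feynman--Kac/It\^o route as the paper, with two small but genuine streamlinings worth noting. For the Neumann condition you simply observe that $\tau_0^t=t$ forces the stopped representation in \eqref{eq:derihu} to vanish, whereas the paper passes to the limit $u_n\downarrow 0$ via dominated convergence using the sets in \eqref{eq:defAsu2}; your argument is shorter. For the PDE in the interior, you work directly with the martingale $G_s=\Ex[K\mid\F_s]$ and apply It\^o's formula to $h(s,M^{t,z}_s,R^{t,u}_s)$ on the closed half-space, invoking the (already-established) Neumann condition to annihilate the $dL^{t,R}$ term and then the triviality of the finite-variation part; the paper instead introduces a localization time $\tau_\varepsilon^t$ that keeps $(M,B)$ inside a small ball where $B^{t,u}=R^{t,u}$, derives a dynamic programming identity \eqref{eq:DPP}, and only then applies It\^o, so that the local-time term drops because the path never touches $\{u=0\}$ before $\tau_\varepsilon^t$, not because of the Neumann condition. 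Both routes require $h\in C^{1,2,2}$ up to $\{u=0\}$ from Proposition~\ref{prop:smoothh}; your version is cleaner but must secure the Neumann boundary condition first, which you do. The converse direction is essentially identical to the paper's localization-and-dominated-convergence argument.
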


\begin{remark}\label{rem:hu+hpositive}
Under assumptions {\Af} and {\AZ}, it follows by \eqref{eq:derihu} and \eqref{eq:Rephuu} in Section~\ref{app:proof1} that, for all $(t,z,u)\in\mathcal{D}_T$,
\begin{align}\label{eq:h+hugeq0}
h_u(t,z,u)+h_{uu}(t,z,u)=\Ex\left[e^{-\rho \tau^t_u}f(\tau^t_u,M_{\tau^t_u}^{t,z})\Gamma(\tau^t_u)+2\int_t^{\tau^t_u}e^{-\rho s}f(s,M_s^{t,z})e^{-R_s^{t,u}}ds\right],
\end{align}
where the stopping time $\tau_u^t$ is given in Proposition~\ref{prop:smoothh} and the function $\Gamma(t)$ for $t\in[0,T]$ is given by \eqref{eq:Gammat} in Section~\ref{app:proof1}. Note that $f>0$ in the assumption~{\Af} guarantees that $h_{uu}+h_u>0$ for $(t,z,u)\in[0,T)\times\R\times[0,\infty)$, which implies that $\hat{v}(t,z,y)$ in \eqref{hat-v} is strictly convex in $y\in(0,1]$.
\end{remark}

The well-posedness of the problem \eqref{eq:hatvHJB1} is now given in the next result.
\begin{corollary}\label{coro:wellposehatv}
Let assumptions of Theorem \ref{thm:hsolvePDE} hold. The problem \eqref{eq:hatvHJB1} admits a unique classical solution $\hat{v}$ such that for $(t,z,y)\in[0,T]\times \R\times (0,1]$,
\begin{align}
  \big|\hat{v}(t,z,y)\big|\leq C(1+|z|^p+|\ln y|^p),\quad\text{for some }p>1,
\end{align}
and the function
\begin{align}\label{hat-v}
  h(t,z,u):=e^{-\rho t}\hat{v}(t,z,e^{-u}),\quad(t,z,u)\in\mathcal{D}_T,
\end{align}
has the probabilistic representation~\eqref{eq:dualtransfromh}. Moreover, for each $(t,z)\in[0,T)\times\R$, the solution $(0,1]\ni y\mapsto\hat{v}(t,z,y)$ is strictly convex.
\end{corollary}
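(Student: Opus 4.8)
The plan is to transfer everything we already know about $h$ from Proposition~\ref{prop:smoothh}, Theorem~\ref{thm:hsolvePDE} and Remark~\ref{rem:hu+hpositive} to $\hat v$ via the change of variables $y=e^{-u}$, i.e. $\hat v(t,z,y)=e^{\rho t}h(t,z,-\ln y)$ for $y\in(0,1]$. First I would verify that this $\hat v$ is a classical solution of \eqref{eq:hatvHJB1}: since $u\mapsto e^{-u}$ is a smooth diffeomorphism of $(0,\infty)$ onto $(0,1)$ and $h\in C^{1,2,2}(\mathcal D_T)$ by Proposition~\ref{prop:smoothh}, we automatically get $\hat v\in C^{1,2,2}([0,T)\times\R\times(0,1))\cap C([0,T]\times\R\times(0,1])$. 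Then a direct chain-rule computation converts the derivatives: with $u=-\ln y$ one has $y\hat v_y=-e^{\rho t}h_u$, $y^2\hat v_{yy}=e^{\rho t}(h_{uu}+h_u)$, $y\hat v_{yz}=-e^{\rho t}h_{uz}$, $\hat v_t=e^{\rho t}(\rho h+h_t)$, $\hat v_z=e^{\rho t}h_z$, $\hat v_{zz}=e^{\rho t}h_{zz}$, and substituting these into the PDE in \eqref{eq:hatvHJB1} reproduces exactly the PDE in \eqref{eq:hHJB2} after multiplying through by $e^{-\rho t}$ (note $f(t,z)y=f(t,z)e^{-u}$, matching the right-hand side $f(t,z)e^{-u-\rho t}$). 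The terminal condition $\hat v(T,z,y)=e^{\rho T}h(T,z,-\ln y)=0$ follows from $h(T,\cdot,\cdot)=0$, and the Neumann condition at $y=1$ corresponds to $u=0$: $\hat v_y(t,z,1)=-e^{\rho t}h_u(t,z,0)=0$ by the boundary condition in \eqref{eq:hHJB2}.

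Next I would establish the growth bound. The estimate \eqref{eq:estidualtransfromh} gives $|h(t,z,u)|\le C(T-t)(1+|z|)$, hence $|\hat v(t,z,y)|=e^{\rho t}|h(t,z,-\ln y)|\le C(1+|z|)$ for $(t,z,y)\in[0,T]\times\R\times(0,1]$; in particular the claimed bound $|\hat v(t,z,y)|\le C(1+|z|^p+|\ln y|^p)$ holds for any $p>1$ (indeed it holds with no dependence on $y$ at all, so certainly with $|\ln y|^p$ allowed). For strict convexity in $y$, I would use Remark~\ref{rem:hu+hpositive}: since $y^2\hat v_{yy}=e^{\rho t}(h_{uu}+h_u)$ and $f>0$ under {\Af} forces $h_{uu}+h_u>0$ on $[0,T)\times\R\times[0,\infty)$, we get $\hat v_{yy}(t,z,y)>0$ for all $y\in(0,1)$ with $(t,z)\in[0,T)\times\R$, and by continuity up to $y=1$ this yields strict convexity of $y\mapsto\hat v(t,z,y)$ on $(0,1]$; this also supplies the extra property $\hat v_{yy}\ge0$ mentioned just before the corollary, needed to make the inverse dual transform well defined.

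Finally I would address uniqueness. Here I would invoke the second half of Theorem~\ref{thm:hsolvePDE}: if $\tilde v$ is any classical solution of \eqref{eq:hatvHJB1} with the stated polynomial growth, then $\tilde h(t,z,u):=e^{-\rho t}\tilde v(t,z,e^{-u})$ is, by the same chain-rule computation run in reverse, a classical solution of the Neumann problem \eqref{eq:hHJB2}; the growth bound on $\tilde v$ in $(z,|\ln y|)$ translates into polynomial growth of $\tilde h$ in $(z,u)$ (since $|\ln y|=u$), so Theorem~\ref{thm:hsolvePDE} forces $\tilde h$ to coincide with the representation \eqref{eq:dualtransfromh}, i.e. $\tilde h=h$, hence $\tilde v=\hat v$. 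The main obstacle I anticipate is a bookkeeping one rather than a conceptual one: one must be careful that the change of variables maps the open interval $(0,1)$ (where the PDE holds) onto $u\in(0,\infty)$ and the boundary point $y=1$ onto $u=0$, and that the regularity asserted in Proposition~\ref{prop:smoothh} is genuinely up to $u=0$ so that the Neumann condition and the continuity of $\hat v$ at $y=1$ are legitimate; one should also double-check that the polynomial-growth hypothesis in Theorem~\ref{thm:hsolvePDE} is implied by — and not stronger than — the $|\ln y|^p$ bound assumed on competing solutions $\tilde v$, which is why the exponent $p>1$ is stated with some slack.
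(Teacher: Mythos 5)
Your argument is correct and mirrors the paper's own proof essentially step for step: you derive $\hat v$ from $h$ via the substitution $y=e^{-u}$, push existence and the PDE through by the chain rule, transfer the growth bound from \eqref{eq:estidualtransfromh}, get strict convexity from Remark~\ref{rem:hu+hpositive}, and reduce uniqueness to the uniqueness half of Theorem~\ref{thm:hsolvePDE} by noting that the $|\ln y|^p$ bound on a competing solution translates into polynomial growth of the corresponding $\tilde h$. The only (harmless) deviation is your observation that $|\hat v|\le C(1+|z|)$ uniformly in $y$, i.e. the stated bound is actually weaker than what holds; the paper keeps the $|\ln y|^p$ slack for safety but does not need it.
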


\section{Optimal Portfolio and Verification Theorem}\label{sec:verify}
Corollary~\ref{coro:wellposehatv} gives the existence and uniqueness of the classical solution $\hat{v}(t,z,y)$ for $(t,z,y)\in[0,T]\times\R\times(0,1]$ to the dual PDE \eqref{eq:hatvHJB1} that is strictly convex in $y\in (0,1]$. We next recover the classical solution $v(t,z,x)$ of the primal HJB equation~\eqref{eq:HJBA1} via $\hat{v}(t,z,y)$ using the inverse transform and prove the verification theorem of the primal stochastic control problem \eqref{eq:valuefcnA}.
\begin{theorem}[Verification theorem] \label{thm:verificationthem} Let assumptions {\Af} and {\AZ} hold. We have that:
\begin{itemize}
\item[{\bf(i)}] The primal HJB equation~\eqref{eq:HJBA1} admits a solution $v\in C^{1,2,2}([0,T)\times\R\times[0,\infty))\cap C(\mathcal{D}_T)$. Moreover, for $(t,z,x)\in\mathcal{D}_T$, the solution $v$ of HJB~equation~\eqref{eq:HJBA1} can be written by
\begin{align}\label{eq:solution-prime-hjb}
v(t,z,x)=\begin{cases}
\displaystyle \inf_{y\in(0,1]}\{\hat{v}(t,z,y)+xy\},\quad\text{if}~(t,z,x)\in{\cal O}_T\text{ or }x=0,\\ \\
\displaystyle  \qquad\quad 0,\quad\text{if}~(t,z,x)\in{\cal O}^c_T\cap\mathcal{D}_T.
\end{cases}
\end{align}
Here, the region ${\cal O}_T$ in \eqref{eq:solution-prime-hjb} is given by
\begin{align}\label{eq:OT}
  {\cal O}_T:=\left\{(t,z,x)\in[0,T)\times\R\times\R_+;~x\in(0,\xi(t,z))\right\},
\end{align}
where the function $\xi(t,z)$ with $(t,z)\in[0,T)\times\R$ is defined by
\begin{align}\label{eq:xi-bound}
  \xi(t,z):=\Ex\left[\int_t^Te^{-\rho(s-t)}f(s,M_s^{t,z})e^{\sqrt{2\alpha}(B^1_s-B^1_t)+(\alpha-\rho)(s-t)}ds\right],
\end{align}
and the process $(M_s^{t,z})_{s\in[t,T]}$ with $(t,z)\in[0,T]\times\R$ is the strong solution of SDE~\eqref{eq:Feynman-Kac-Y}. As $f(\cdot, \cdot)>0$ in \eqref{nond-A}, we get $\xi(\cdot, \cdot)>0$ so that ${\cal O}_T\neq \emptyset$. Here, for $(t,z,y)\in[0,T]\times\R\times(0,1]$, the function $\hat{v}(t,z,y)=e^{\rho t}h(t,z,-\ln y)$ solves the dual PDE \eqref{eq:hatvHJB1} with Neumann boundary condition and $\hat{v}(t,z,y)$ is strictly convex in $y\in(0,1]$.

\item[{\bf(ii)}]  For $(t,z,x)\in\mathcal{D}_T$, let us define the feedback control function by
\begin{align}\label{eq:verifioptimalstrategy}
\theta^*(t,z,x):=\begin{cases}
  \displaystyle -(\sigma\sigma^{\top})^{-1}\frac{v_x(t,z,x)\mu+v_{xz}(t,z,x)\sigma_Z(z)\sigma\gamma}{v_{xx}(t,z,x)},\quad\text{if}~(t,z,x)\in{\cal O}_T\text{ or }x=0, \\ \\
 \displaystyle -\mu(\sigma\sigma^\top)^{-1}\lim_{y\downarrow0}y\hat{v}_{yy}(t,z,y)\qquad\qquad\text{if}~(t,z,x)\in{\cal O}^c_T\cap\mathcal{D}_T.\\
 \displaystyle~+(\sigma\sigma^\top)^{-1}\sigma_Z(z)\sigma\gamma\lim_{y\downarrow0}\hat{v}_{yz}(t,z,y),
\end{cases}
\end{align}
Given the processes $(Z,X)=(Z_t,X_t)_{t\in[0,T]}$ in \eqref{eq:reflectedZA}, we define $\theta_t^*:=\theta^*(t,Z_t,X_t)$ for $t\in[0,T]$. Then $\theta^*=(\theta_t^*)_{t\in[0,T]}\in{\cal U}_t$ is an optimal strategy. Moreover, for all $\theta\in{\cal U}_t$, it holds that $J(\theta;t,z,x)\leq e^{-\rho t}v(t,z,x)={\rm w}(t,z,x)$, where $(t,z,x)\in[0,T)\times\R\times[0,\infty)$.
\end{itemize}
\end{theorem}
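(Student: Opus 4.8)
The plan is to build the primal solution $v$ from the dual solution $\hat v$ obtained in Corollary~\ref{coro:wellposehatv}, then verify that this $v$ both solves the HJB equation~\eqref{eq:HJBA1} with the Neumann boundary condition and coincides with the value function ${\rm w}$ of the auxiliary problem. First I would fix $(t,z)\in[0,T)\times\R$ and analyze the map $y\mapsto\hat v(t,z,y)$ on $(0,1]$. By Corollary~\ref{coro:wellposehatv} it is $C^2$ and strictly convex, with Neumann condition $\hat v_y(t,z,1)=0$; hence $-\hat v_y(t,z,\cdot)$ is a strictly increasing map from $(0,1]$ \emph{onto} $[0,\xi(t,z))$, where the right endpoint is $\xi(t,z)=-\lim_{y\downarrow0}\hat v_y(t,z,y)$. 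A direct computation from $h(t,z,u)=e^{-\rho t}\hat v(t,z,e^{-u})$ and the representation~\eqref{eq:dualtransfromh}, letting $u\downarrow0$ corresponds to $y\uparrow1$ and $u\to\infty$ to $y\downarrow0$, identifies this limit with the expression in~\eqref{eq:xi-bound} (the reflection term $L^{t,R}$ disappears in the limit as the reflecting barrier recedes). Consequently, for $x\in(0,\xi(t,z))$ there is a unique minimizer $y^*(t,z,x)\in(0,1)$ of $y\mapsto\hat v(t,z,y)+xy$ solving $-\hat v_y(t,z,y^*)=x$, while for $x\ge\xi(t,z)$ the infimum is approached as $y\downarrow0$ and equals $\lim_{y\downarrow0}(\hat v(t,z,y)+xy)$; one checks via the growth bound $|\hat v|\le C(1+|z|^p+|\ln y|^p)$ that this limit is $0$, which motivates the two-region formula~\eqref{eq:solution-prime-hjb} and the definition of ${\cal O}_T$ in~\eqref{eq:OT}.

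Next I would establish the regularity and PDE claims in part~(i). On ${\cal O}_T$, define $v(t,z,x):=\hat v(t,z,y^*(t,z,x))+xy^*(t,z,x)$; by the envelope theorem $v_x=y^*$, and since $\hat v_{yy}>0$ the implicit function theorem gives $y^*\in C^{1,2,2}$ on ${\cal O}_T$, hence $v\in C^{1,2,2}$ there with $v_{xx}=1/\hat v_{yy}(t,z,y^*)>0$—wait, rather $v_{xx}=-1/\,?$; precisely, differentiating $v_x=y^*$ and $-\hat v_y(t,z,y^*)=x$ yields $v_{xx}=-1/\hat v_{yy}(t,z,y^*)<0$, consistent with~\eqref{eq:conjugate4} and with the concavity needed for~\eqref{eq:thetastar}. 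The algebraic relations~\eqref{eq:hatvx-zstar}--\eqref{eq:hatvyztyzstar} then translate the dual PDE~\eqref{eq:hatvHJB1} termwise into the primal equation~\eqref{eq:HJBA1} on ${\cal O}_T$; the terminal condition $v(T,z,x)=0$ follows from $\hat v(T,z,\cdot)\equiv0$, and the Neumann condition $v_x(t,z,0)=1$ from $x^*(t,z,1)=0$, i.e. $y^*(t,z,0)=1$. On ${\cal O}_T^c\cap{\cal D}_T$ one has $v\equiv0$, which trivially solves~\eqref{eq:HJBA1} (here $f(t,z)v_x=0$); the delicate point is $C^1$-pasting of $v_x$ and $v_{xz}$ across the free boundary $\{x=\xi(t,z)\}$, which I would handle by showing $v_x\to0$ and $v_{xz}\to0$ as $x\uparrow\xi(t,z)$ (equivalently as $y^*\downarrow0$, using the limits $\lim_{y\downarrow0}y\hat v_{yy}$ and $\lim_{y\downarrow0}\hat v_{yz}$ that appear in~\eqref{eq:verifioptimalstrategy}), together with the continuity bounds from Lemma~\ref{lem:Lipv} (recall $0\le v_x\le1$). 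This smooth-fit argument, and the justification that the limits $\lim_{y\downarrow0}y\hat v_{yy}(t,z,y)$ and $\lim_{y\downarrow0}\hat v_{yz}(t,z,y)$ exist and are finite, is where the real work lies—it requires the probabilistic representation of $h$ and its $u$-derivatives from Proposition~\ref{prop:smoothh}, in particular the formula~\eqref{eq:h+hugeq0} controlling $h_u+h_{uu}$, together with stochastic-flow estimates for $M^{t,z}$ and $R^{t,u}$ to get locally uniform convergence of the derivatives.

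Finally, for part~(ii) I would run the standard verification argument. Given $v\in C^{1,2,2}([0,T)\times\R\times[0,\infty))\cap C({\cal D}_T)$ solving~\eqref{eq:HJBA1} with $v_{xx}<0$ on ${\cal O}_T$, apply It\^o's formula to $e^{-\rho s}v(s,Z_s,X_s)$ along an arbitrary admissible $\theta\in{\cal U}_t$ and the reflected state~\eqref{eq:reflectedZA}: the $dL_s^X$ term contributes $-e^{-\rho s}v_x(s,Z_s,0)\,dL_s^X=-e^{-\rho s}\,dL_s^X$ by the Neumann condition (this is exactly where $v_x(\cdot,\cdot,0)=1$ is used), the drift terms are $\le0$ by the HJB inequality (with equality under $\theta^*$), and after a localization/uniform-integrability step—using the linear-growth bound~\eqref{eq:estidualtransfromh}-type estimates and $0\le v_x\le1$—taking expectations yields $J(\theta;t,z,x)\le v(t,z,x)e^{-\rho t}$ for all $\theta$, with equality for $\theta^*$ defined by~\eqref{eq:verifioptimalstrategy}. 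One must separately confirm $\theta^*$ is admissible, i.e. that the feedback SDE~\eqref{eq:reflectedZA} with $\theta=\theta^*$ has a weak solution; on ${\cal O}_T$ this follows from the local Lipschitz/boundedness of $\theta^*$ coming from the $C^{1,2,2}$ regularity of $v$, and on ${\cal O}_T^c$ from the explicit constant form of $\theta^*$, with the two pieces matching continuously by the smooth-fit in part~(i). The identity ${\rm w}(t,z,x)=e^{-\rho t}v(t,z,x)$ then follows, and combined with Lemma~\ref{first-transf} this also recovers the original value function $u$.
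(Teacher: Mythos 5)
Your proposal follows essentially the same route as the paper: build $v$ from $\hat{v}$ via the Legendre transform, identify the free boundary $x=\xi(t,z)$ from $\lim_{y\downarrow0}\hat{v}_y=-\xi(t,z)$, verify the HJB on $\mathcal{O}_T$ by translating the dual PDE termwise, establish $C^{1,2,2}$ smooth-fit at both $x=0$ and $x=\xi(t,z)$ using the probabilistic limits of the $y$- and $z$-derivatives of $\hat{v}$, and then run the standard It\^o/localization verification where the Neumann condition kills the $dL^X_s$ contribution. Two small slips worth flagging: $-\hat{v}_y(t,z,\cdot)$ is strictly \emph{decreasing} on $(0,1]$ (it maps onto $[0,\xi(t,z))$ precisely because it decreases as $y\uparrow1$), and admissibility of $\theta^*$ needs only continuity plus linear growth (enough for the Ikeda--Watanabe weak existence result the paper cites), not a local Lipschitz bound, which $C^{1,2,2}$ regularity alone does not supply.
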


\begin{remark}\label{rem:Toptimalstrategy}

We explain here the role of the function $\xi(t,z)$ defined by \eqref{eq:xi-bound} in Theorem~\ref{thm:verificationthem}. In fact, for $(t,z)\in[0,T)\times\R$ and $x\geq\xi(t,z)$, it follows from Theorem~\ref{thm:verificationthem}-{\bf(i)} that the value function ${\rm w}(t,z,x)=0$. Then, by Theorem~\ref{thm:verificationthem}-{\bf(ii)}, we have that, for the strategy $\theta^*\in{\cal U}_t$ given by \eqref{eq:verifioptimalstrategy},
\begin{align*}
  \Ex_{t,z,x}\left[-\int_t^Te^{-\rho s}dL_s^{X^*}\right]=0,
\end{align*}
where the process $(L_s^{X^*})_{s\in[t,T]}$ is the reflected term of the process $(X^*_s)_{s\in[t,T]}$ in \eqref{eq:reflectedZA} with $\theta$ replaced by $\theta^*$. It follows from  integration by parts that $e^{-\rho T}L_T^{X^*}+\rho\int_t^Te^{-\rho s}L_s^{X^*}ds=x$, $\Px$-a.s.
and hence $L^{X^*}_T=L_t^{X^*}=x$, $\Px$-a.s. because $\xi(t,z)>0$ for $(t,z)\in[0,T)\times\R$. 

Therefore, with the strategy $\theta^*\in{\cal U}_t$, the non-negative process $X^*$ is given by
\begin{align*}
X_s^* = x- \int_t^s f(r,Z_r)dr+ \int_t^s(\theta_r^*)^{\top}\mu dr+\int_t^s(\theta_r^*)^{\top}\sigma dW_r.
\end{align*}
On the other hand, for $0\leq x<\xi(t,z)$, we have that $v_x(t,z,x)>0$ and hence ${\rm w}(t,z,x)<0$. This implies that, for this initial value $x$ at time $t$, the reflected term $L_s^{X^*}$ is strictly increasing in $s\in[t,T)$ with a positive probability.
\end{remark}

\begin{remark}\label{Cequal0}
Recall the equivalence that $u(a,\mathrm{v},z)=-{\rm w}(0,z,\mathrm{v}-a)$, $\mathrm{v}>a$, where $u(a,\mathrm{v},z)$ is the value function of the original optimal tracking problem \eqref{eq_prob_IBP}. By Remark \ref{rem:Toptimalstrategy}, if the initial wealth $\mathrm{v}$ is sufficiently large such that $\mathrm{v}-a>\xi{(0,z)}$ for the given $f(\cdot,\cdot)$ and $\mu_Z(\cdot)$, $\sigma_Z(\cdot)$, we can conclude that $u(a,\mathrm{v},z)=0$ and the optimal singular control $C_t^*\equiv 0$ for $t\in[0,T]$. Therefore, $A_t$ is dynamically superhedgeable that $A_t\leq V_t^{\theta^*}$ for $t\in[0,T]$.
\end{remark}

\begin{proof}[Proof of Theorem~\ref{thm:verificationthem}] We first show~{\bf(i)}. By the assumption {\Af}, we have $\xi(t,z)>0$, $(t,z)\in[0,T)\times\R$, according to its definition. Moreover, thanks to the probabilistic representation of derivatives of $h$ in the proof of Proposition~\ref{prop:smoothh}, we have that
\begin{align}\label{eq:hatderivatives}
\begin{cases}
\displaystyle \hat{v}_y(t,z,1)  =-e^{\rho t}h_u(t,z,0)=0,\quad \hat{v}_y(T,z,y)=y^{-1}e^{\rho t}h_u(T,z,-\ln y)=0,~~z\in(0,\infty),\\ \\
\displaystyle \hat{v}_{yy}(t,z,e^{-u})=e^{\rho t+2u}(h_u(t,z,u)+h_{uu}(t,z,u))\geq0,~u\in[0,\infty),~(``>" \text{holds~for}~t\in[0,T)),\\ \\
\displaystyle  \lim_{y\to0}\hat{v}_y(t,z,y)=\lim_{u\to+\infty}\hat{v}_y(t,z,e^{-u})=-\lim_{u\to+\infty}e^{\rho t+u}h_u(t,z,u)=-\xi(t,z),\\ \\
\displaystyle \lim_{y\to0}\hat{v}_{yy}(t,z,y)=\lim_{u\to+\infty}\hat{v}_{yy}(t,z,e^{-u})=\lim_{u\to+\infty}e^{\rho t+2u}(h_u(t,z,u)+h_{uu}(t,z,u))=+\infty,\\ \\
\displaystyle \lim_{y\to0}\hat{v}_{zz}(t,z,y)=\lim_{u\to+\infty}e^{\rho t}h_{zz}(t,z,u)=0,\quad  \lim_{y\to0}|\hat{v}_{yz}(t,z,y)|= \lim_{u\to+\infty}e^{\rho t+u}|h_{zu}(t,z,u)|<+\infty.
\end{cases}
\end{align}
According to the definition~\eqref{eq:OT} and \eqref{eq:xi-bound}, the region ${\cal O}_T$ has a boundary that is at least $C^1$. We next consider the original HJB equation \eqref{eq:HJBA}, however, restricted to the domain $(t,y,z)\in{\cal O}_T$ that
\begin{align}\label{eq:primehjbOT}
\begin{cases}
\displaystyle v_t+\sup_{\theta\in\R^n} \left[v_x\theta^{\top}\mu+\frac{v_{xx}}{2}\theta^{\top}\sigma\sigma^{\top}\theta+v_{xz}\sigma_Z(z)\theta^{\top}\sigma\gamma\right]\\
\displaystyle\qquad+v_z\mu_Z(z)+v_{zz}\frac{\sigma_Z^2(z)}{2}-f(t,z)v_x=\rho v,\quad (t,z,x)\in {\cal O}_T; \\ \\
\displaystyle v_x(t,z,0)=1,\quad \forall~(t,z)\in[0,T)\times\R.
\end{cases}
\end{align}
First of all, for $(t,z,x)\in{\cal O}_T$, let us define $y^*=y^*(t,z,x)\in(0,1]$ that satisfies
\begin{align}\label{eq:def-xstar}
  \hat{v}_y(t,z,y^*)=-x.
\end{align}
Thanks to \eqref{eq:def-xstar}, we have that
\begin{align}\label{eq:vOT0}
  v(t,z,x)=\inf_{y\in(0,1]}\{\hat{v}(t,z,y)+xy\}=\hat{v}(t,z,y^*(t,z,x))+xy^*(t,z,x),\ \ \ (t,z,x)\in{\cal O}_T.
\end{align}
Note that $(0,1]\ni y\to\hat{v}_y(t,z,y)$ is strictly increasing for fixed $(t,z)\in[0,T]\times\R$, as well as $\hat{v}_y(t,z,1)=0$ and $\lim_{y\to0}\hat{v}_y(t,z,y)=-\xi(t,z)$, we have that $x\to y^*(t,z,x)$ is decreasing, $\lim_{x\to0}y^*(t,z,x)=1$ as well as $\lim_{x\to\xi(t,z)}y^*(t,z,x)=0$. It follows from the implicit function theorem that $y^*$ is $C^1$ on ${\cal O}_T$. Therefore $v$ in \eqref{eq:vOT0} is well defined, and it is $C^{1,2,2}$ on ${\cal O}_T$. On the other hand, a direct calculation yields that, for $(t,z,x)\in{\cal O}_T$,
\begin{align}\label{eq:xstarvderi}
  &y^*(t,z,x)=v_x(t,z,x),\quad\ \hat{v}_t(t,z,y^*(t,z,x))=v_t(t,z,x),\quad\ \hat{v}_z\big(t,z,y^*(t,z,x)\big)=v_z(t,z,x),\notag\\
  &\hat{v}_{yy}(t,z,y^*(t,z,x))\!\!=\!\!-\frac1{v_{xx}(t,z,x)},~\ \ \ \
  \hat{v}_{zy}(t,z,y^*(t,z,x))\!\!=\!\!\frac{v_{xz}(t,z,x)}{v_{xx}(t,z,x)},\notag\\
  &\hat{v}_{zz}(t,z,y^*(t,z,x))\!\!=\!\!\left(v_{zz}-\frac{v^2_{xz}}{v_{xx}}\right)(t,z,x).
\end{align}
Recall that $v_{xx}(t,z,x)<0$ for $(t,z,x)\in{\cal O}_T$. Plugging \eqref{eq:xstarvderi} into \eqref{eq:hatvHJB1}, we deduce that $v$ defined in \eqref{eq:vOT0} solves the dual PDE \eqref{eq:primehjbOT}. We next study the behavior of $v$ on ${\cal O}^c_T\cap([0,T)\times\R\times\R_+)$. To this end, for $(t,z)\in[0,T)\times\R$, let $(t_n,z_n,x_n)\in{\cal O}_T$ for $n\geq1$ be a sequence such that $(t_n,z_n,x_n)\to(t,z,\xi(t,z))$. We claim that
\begin{align}\label{eq:xstar-limit}
  \lim_{n\to+\infty}y^*(t_n,z_n,x_n)=0.
\end{align}

We prove \eqref{eq:xstar-limit} by contradiction. Suppose that, up to a subsequence, there exists a constant $\delta>0$ such that $\lim_{n\to+\infty}y^*(t_n,z_n,x_n)=\delta$. By \eqref{eq:def-xstar}, it yields that
\begin{align*}
  \hat{v}_y(t,z,\delta)=\lim_{n\to+\infty}\hat{v}_y(t_n,z_n,y^*(t_n,z_n,x_n))=-\lim_{n\to+\infty}x_n=-\xi(t,z),
\end{align*}
which contradicts the definition \eqref{eq:xi-bound} of $\xi(t,z)$ and the fact that $\lim_{y\rightarrow 0}\hat{v}_y(t,z,y)=-\xi(t,z)$. Moreover, it follows from \eqref{eq:xstar-limit} that
\begin{align}\label{eq:partial-U-continuity}
  \lim_{n\to+\infty}v(t_n,z_n,x_n)=\lim_{n\to+\infty}\{\hat{v}(t_n,z_n,y^*(t_n,z_n,x_n))+x_ny^*(t_n,z_n,x_n)\}=0.
\end{align}
Similar to the proof of \eqref{eq:hatderivatives}, by \eqref{eq:xstar-limit} again, we also have that
\begin{align}\label{eq:vpartiallimits}
  \lim_{n\to+\infty}v_t(t_n,z_n,x_n)&=\lim_{n\to+\infty}\hat{v}_t(t_n,z_n,y^*(t_n,z_n,x_n))=0,\nonumber\\
  \lim_{n\to+\infty}v_z(t_n,z_n,x_n)&=\lim_{n\to+\infty}\hat{v}_z(t_n,z_n,y^*(t_n,z_n,x_n))=0,\nonumber\\
  \lim_{n\to+\infty}v_x(t_n,z_n,x_n)&=\lim_{n\to+\infty}y^*(t_n,z_n,x_n)=0,\\
  \lim_{n\to+\infty}v_{xx}(t_n,z_n,x_n)&=-\lim_{n\to+\infty}\hat{v}_{yy}(t_n,z_n,y^*(t_n,z_n,x_n))^{-1}=0,\nonumber\\
  \lim_{n\to+\infty}v_{xz}(t_n,z_n,x_n)&=-\lim_{n\to+\infty}\frac{\hat v_{yz}}{\hat v_{yy}}(t_n,z_n,y^*(t_n,z_n,x_n))=0,\nonumber\\
  \lim_{n\to+\infty}v_{zz}(t_n,z_n,x_n)&=\lim_{n\to+\infty}\left(\hat v_{zz}-\frac{\hat v^2_{yz}}{\hat v_{yy}}\right)(t_n,z_n,y^*(t_n,z_n,x_n))=0.\nonumber
\end{align}
Let us define $v(t,z,x)=0$ for $(t,z,x)\in {\cal O}^c_T\cap([0,T)\times\R\times\R_+)$. By \eqref{eq:partial-U-continuity} and \eqref{eq:vpartiallimits},  we have that $v$ given by \eqref{eq:vOT0} and its partial derivatives up to order two are continuous on $\partial{\cal O}_T\cap([0,T)\times\R\times\R_+)$. Therefore, $v$ is $C^{1,2,2}$ on $[0,T)\times\R\times\R_+$. Moreover, using \eqref{eq:xstarvderi} and \eqref{eq:hatvHJB1} on $[0,T]\times\R\times\R_+$, we have that $v$ given by \eqref{eq:solution-prime-hjb} solves the following HJB equation:
\begin{align}\label{eq:value-HJB1}
\begin{cases}
\displaystyle v_t+\sup_{\theta\in\R^n} \left[v_x\theta^{\top}\mu+\frac{v_{xx}}{2}\theta^{\top}\sigma\sigma^{\top}\theta+v_{xz}\sigma_Z(z)\theta^{\top}\sigma\gamma\right]\\
\displaystyle\qquad+v_z\mu_Z(z)+v_{zz}\frac{\sigma_Z^2(z)}{2}-f(t,z)v_x=\rho v,\quad \forall~(t,z,x)\in[0,T)\times\R\times\R_+; \\ \\
\displaystyle v_x(t,z,0)=1,\quad \forall~(t,z)\in[0,T)\times\R.
\end{cases}
\end{align}
On the other hand, note that $v_x\geq0$ on $(t,z,x)\in[0,T)\times\R\times\R_+$, and $v(t,z,x)\to 0$ as $x\to+\infty$. By \eqref{eq:valuefcnA}, it is easy to see $v(t,z,0)\leq0$. Therefore, $v(t,z,x)\leq0$ for $(t,z,x)\in[0,T)\times\R\times[0,\infty)$ and it follows from \eqref{eq:estidualtransfromh} that there exists a constant $C>0$ independent of $T$ that
\begin{align}\label{eq:vOT}
  \left|v(t,z,x)\right|&=-v(t,z,x)=\sup_{y\in(0,1]}\{-\hat{v}(t,z,y)-xy\}\leq\sup_{y\in(0,1]}\{-\hat{v}(t,z,y)\}\notag\\
  &=\sup_{y\in(0,1]}\{-e^{\rho t}h(t,z,-\ln y)\}\leq e^{\rho t}C(T-t)(1+|z|),
\end{align}
for $(t,z,x)\in[0,T)\times\R\times[0,\infty)$, where the function $h(t,z,u)$ is given by \eqref{eq:dualtransfromh}.

We next prove the continuity of $v$ on the boundary of $[0,T)\times\R\times[0,+\infty)$. Note that $v(t,z,0)=\hat{v}(t,z,1)$ and let us consider $(t,z)\in[0,T)\times\R$ and $(t_n,z_n,x_n)\in[0,T)\times\R\times\R_+$ satisfying $(t_n,z_n,x_n)\to(t,z,0)$ as $n\to\infty$. By mimicking the proof showing \eqref{eq:xstar-limit}, one can obtain that
\begin{align}\label{x^*-continuity-1}
  \lim_{n\to\infty}y^*(t_n,z_n,x_n)=1.
\end{align}
An application of L'Hospital's rule gives that
\begin{align*}
  \lim_{x\downarrow0}\frac1x\left(v(t,z,x)-v(t,z,0)\right)&=\lim_{x\downarrow0}\frac{1}{x}\big(\hat{v}(t,z,y^*(t,z,x))+xy^*(t,z,x)-\hat{v}(t,z,1)\big)
  \notag\\
  &=\lim_{x\downarrow0}y^*(t,z,x)-\lim_{x\downarrow0}\frac{\hat{v}_y(t,z,y^*(t,z,x))-\hat{v}(t,z,1)}{y^*(t,z,x)-1}\times
  \lim_{x\downarrow0}\frac{y^*(t,z,x)-1}x\notag\\
  &=1-\hat{v}_y(t,z,1)\times\left(\lim_{x\downarrow0}y^*_x(t,z,x)\right)=1.
\end{align*}
Moreover, as $\lim_{n\to\infty}v_x(t_n,z_n,x_n)=\lim_{n\to\infty}y^*(t_n,z_n,x_n)=1$, it holds that
\begin{align}\label{eq:continuos-vz}
  \lim_{n\to\infty}v_x(t_n,z_n,x_n)=v_x(t,z,0).
\end{align}
Similarly, we also have that
\begin{align*}
  \lim_{x\downarrow0}\frac1x\left(v_x(t,z,x)-v_x(t,z,0)\right)=\lim_{x\downarrow0}\frac{1}{x}\left(y^*(t,z,x)-1\right)
  =\lim_{x\downarrow0}y^*_x(t,z,x)=-\hat{v}_{yy}(t,z,1)^{-1},
\end{align*}
and $\lim_{n\to\infty}v_{xx}(t_n,z_n,x_n)=-\lim_{n\to+\infty}\hat{v}_{yy}(t_n,z_n,y^*(t_n,z_n,x_n))^{-1}=-\hat{v}_{yy}(t,z,1)^{-1}$.
Therefore
\begin{align}\label{eq:continuos-vzz}
  \lim_{n\rightarrow+\infty}v_{xx}(t_n,z_n,x_n)=v_{xx}(t,z,0).
\end{align}
In a similar fashion, the limits \eqref{eq:continuos-vz} and \eqref{eq:continuos-vzz} also hold for $v_z$, $v_{xz}$, and $v_{zz}$. Hence, we conclude that $v\in C^{1,2,2}([0,T)\times\R\times[0,\infty))$.

On the other hand, for $(z,x)\in\R\times[0,+\infty)$, we define $v(T,z,x)=0$ and consider $(t_n,z_n,x_n)\in[0,T)\times\R\times[0,+\infty)$ satisfying $(t_n,z_n,x_n)\to(T,z,x)$ as $n\to+\infty$. In view of \eqref{eq:vOT}, we have $\lim_{n\to+\infty}v(t_n,z_n,x_n)=0$, which yields that $v\in C({\cal D}_T)$. By combining Eq.~\eqref{eq:value-HJB1}, we deduce that $v\in C^{1,2,2}([0,T)\times\R\times[0,+\infty))\cap C({\cal D}_T)$ and $v$ satisfies that
\begin{align}\label{eq:value-HJB}
\begin{cases}
\displaystyle v_t+\sup_{\theta\in\R^n} \left[v_x\theta^{\top}\mu+\frac{v_{xx}}{2}\theta^{\top}\sigma\sigma^{\top}\theta+v_{xz}\sigma_Z(z)\theta^{\top}\sigma\gamma\right]\\
\displaystyle\qquad+v_z\mu_Z(z)+v_{zz}\frac{\sigma_Z^2(z)}{2}-f(t,z)v_x=\rho v,\quad \forall~(t,z,x)\in[0,T)\times\R\times\R_+; \\ \\
\displaystyle v_x(t,z,0)=1,\quad \forall~(t,z)\in[0,T)\times\R,\\ \\
\displaystyle v(T,z,x)=0,\quad\forall~(z,x)\in\R\times[0,+\infty),
\end{cases}
\end{align}
and the estimate \eqref{eq:vOT} holds for $(t,z,x)\in[0,T]\times\R\times[0,+\infty)$.

We next prove {\bf(ii)}. We first show the continuity of $\theta^*(t,z,x)$ on $(t,z,x)\in\mathcal{D}_T$, which verifies the admissibility of $\theta_t^*=\theta^*(t,Z_t,X_t)$ for $t\in[0,T]$ (i.e., $\theta^*\in{\cal U}_t$). Let us define $y^*(t,z,0)=1$. Thanks to \eqref{x^*-continuity-1}, $y^*$ is continuous at $(t,z,0)$. For $(t,z,x)\in\mathcal{D}_T$, we rewrite \eqref{eq:verifioptimalstrategy} by
\begin{align}\label{eq:alternative-thetastar}
\theta^*(t,z,x)
&\!\!=\!\!-\mu(\sigma\sigma^\top)^{-1}y^*(t,z,x)\hat{v}_{yy}\left(t,z,y^*(t,z,x)\right)\!\!+\!(\sigma\sigma^\top)^{-1}\sigma_Z(z)\sigma\gamma\hat{v}_{yz}\left(t,z,y^*(t,z,x)\right).
\end{align}
It is easy to see that $\theta^*(t,z,x)$ is continuous for $(t,z,x)\in{\cal O}_T\cup\{(t,z,0);~(t,z)\in[0,T)\times\R\}$. Therefore, it remains to show that
\begin{align}\label{continuous-theta*}
  \lim_{n\rightarrow+\infty}\theta^*(t_n,z_n,x_n)=\theta^*(t,z,x),
\end{align}
where $x=\xi(t,z)$, and $(t_n,z_n,x_n)\in{\cal O}_T$, $\lim_{n\rightarrow+\infty}(t_n,z_n,x_n)=(t,z,x)$. By virtue of \eqref{eq:alternative-thetastar}, we have that
\begin{align}\label{eq:alternative-thetastar1}
\theta^*(t_n,z_n,x_n)
&=-\mu(\sigma\sigma^\top)^{-1}y^*(t_n,z_n,x_n)\hat{v}_{yy}(t,z,y^*(t_n,z_n,x_n))\nonumber\\
&\quad+(\sigma\sigma^\top)^{-1}\sigma_Z(z)\sigma\gamma\hat{v}_{yz}(t,z,y^*(t_n,z_n,x_n)).
\end{align}
Note that $x^*(t_n,z_n,x_n)\rightarrow0$ as $n\rightarrow+\infty$. By sending $n$ to $+\infty$ on both sides of \eqref{eq:alternative-thetastar1}, we deduce that
\begin{align*}
  \lim_{n\to\infty}\theta^*(t_n,z_n,x_n)&=-\mu(\sigma\sigma^\top)^{-1}\lim_{y\downarrow0}x\hat{v}_{yy}(t,z,y)
  +(\sigma\sigma^\top)^{-1}\sigma_Z(z)\sigma\gamma\lim_{y\downarrow0}\hat{v}_{yz}(t,z,y)
  =\theta^*(t,z,x).
\end{align*}
Following the same argument, we can establish the convergence \eqref{continuous-theta*} for any $(t,z,x)\in\partial{\cal O}_T\cap([0,T]\times\R\times[0,+\infty))$ and $(t_n,z_n,x_n)\in\mathcal{O}_T$. Hence $\theta^*(t,y,z)$ is continuous for $(t,z,x)\in[0,T]\times\R\times[0,+\infty)$. Moreover, one can see from \eqref{hat-v}, \eqref{eq:verifioptimalstrategy} and \eqref{eq:alternative-thetastar} that there exists constant $C>0$ such that
\begin{align}\label{eq:theta^*-estimate}
  |\theta^*(t,z,x)|\leq C(1+|z|),\quad\forall~(t,z,x)\in\mathcal{D}_T.
\end{align}
With the continuity of $\theta^*$ on $\mathcal{D}_T$ and the estimate \eqref{eq:theta^*-estimate}, we can apply Theorem $2.2$, Theorem $2.4$ and Remark $2.1$ in chapter 4 of \cite{IkedaW1992} to conclude that the following SDE admits a weak solution that: for $t\in[0,T]$,
\begin{align}\label{eq:weak-solution}
\begin{cases}
\displaystyle \tilde{X}_s = -\int_t^s f(r,Z_r)dr+ \int_t^s\theta^*(r,Z_r,\Phi^t(\tilde{X})_r)^{\top}\mu dr+\int_t^s\theta^*(r,Z_r,\Phi^t(\tilde{X})_r)^\top\sigma dW_r,\\ \\
\displaystyle dZ_s =\mu_Z(Z_s) ds + \sigma_{Z}(Z_s)dW^{\gamma}_s,\ \ s\in[t,T].
\end{cases}
\end{align}
Here, the mapping $\Phi^t:C([t,T];\R)\to C([t,T];\R)$ satisfies that, for all $\varphi\in C([t,T];\R)$,
\begin{itemize}
  \item[(i)]$\Phi^t(\varphi)_s=\varphi_s+\eta_s$ for $s\in[t,T]$, and $\Phi(\varphi)_t=\varphi_t$.
  \item[(ii)]$\Phi^t(\varphi)_s\geq0$ for $s\in[t,T]$.
  \item[(iii)] $s\to\eta_s$ is continuous, non-negative and non-decreasing, and $\eta_s=x\vee\int_t^s{\bf1}_{\{\Phi^t(\varphi)_r=0\}}d\eta_r$ for $s\in[t,T]$.
\end{itemize}
Define $X^*:=\Phi^t(\tilde{X})$ and $L^{*}:=\Phi^t(\tilde{X})-\tilde{X}$. Then $(X^*,L^*,W)$ solves the SDE that
\begin{align*}
X_s^* = -\int_t^s f(r,Z_r)dr+ \int_t^s\theta^*(r,Z_r,X_r^*)^{\top}\mu dr+\int_t^s\theta^*(r,Z_r,X_r^*)^{\top}\sigma dW_r + L_s^*,\ \ s\in[t,T],
\end{align*}
where $L^*$ satisfies (iii). This shows that $\theta^*\in{\cal U}_t$ is admissible.

Let us fix any $(t,z,x)\in[0,T)\times\R\times[0,\infty)$, and $\theta\in{\cal U}_t$. For any $n>T^{-1}$, we define that
\begin{align}
  \tau_n^t:=\left(T-\frac{1}{n}\right)\wedge\inf\left\{s\geq t:\left|Z_s\right|+\left|X_s\right|> n\right\}.
\end{align}
It holds that $\tau_n^t\uparrow T$ as $n\to\infty$, $\Px$-a.s.. By It\^{o}'s formula, we get that
\begin{align}\label{eq:verification}
&\Ex_{t,z,x}\left[-\int_t^{\tau_n^t}e^{-\rho s}dL_s^X+e^{-\rho {\tau_n^t}}v(\tau_n^t,Z_{\tau_n^t},X_{\tau_n^t})-e^{-\rho t}v(t,Z_t,X_t)\right]\\
&=\Ex_{t,z,x}\left[\int_t^{\tau_n^t}e^{-\rho s}\left(v_t+{\cal L}^{\theta_s} v\right)(s,Z_s,X_s)ds\right]+\Ex_{t,z,x}\left[\int_t^{\tau_n^t}e^{-\rho s}\left(v_z(s,Z_s,X_s)-1\right)dL_s^X\right],\notag
\end{align}
where, for $\theta\in\R^n$, the operator ${\cal L}_t^{\theta}$ acted on $C^{2}(\R\times[0,\infty))$ is defined by
\begin{align*}
{\cal L}_t^\theta\varphi(z,x)&:=\varphi_x(z,x)\theta^{\top}\mu+\frac{\varphi_{xx}(z,x)}{2}\theta^{\top}\sigma\sigma^{\top}\theta
+\varphi_{xz}(z,x)\sigma_Z(z)\theta^{\top}\sigma\gamma+\varphi_z(z,x)\mu_Z(z)\nonumber\\
&\quad+\varphi_{zz}(z,x)\frac{\sigma_Z^2(z)}{2}-f(t,z)\varphi_x(z,x)-\rho \varphi(z,x),
\end{align*}
for all $\varphi\in C^{2}(\R\times[0,\infty))$. Then, \eqref{eq:reflectedZA} and the boundary condition in \eqref{eq:value-HJB} yield that
\begin{align*}
&\Ex_{t,z,x}\left[\int_t^{\tau_n^t}e^{-\rho s}\left(v_x(s,Z_s,X_s)-1\right)dL_s^X\right]=\Ex_{t,z,x}\left[\int_t^{\tau_n^t}e^{-\rho s}\left(v_x(s,Z_s,X_s)-1\right)\mathbf{1}_{\left\{X_s=0\right\}}dL_s^X\right]=0.
\end{align*}
On the other hand, the HJB equation \eqref{eq:value-HJB} satisfied by $v$ also gives that, for all $t\in[0,T]$, $\Px$-a.s.
\begin{align}\label{eq:HJB-variance}
  \left(v_t+\mathcal{L}^\theta_t v\right)(t,Z_t,X_t)\leq0,
\end{align}
where the equality holds in \eqref{eq:HJB-variance} if $\theta=\theta^*$. 
Hence, we deduce from \eqref{eq:HJB-variance} that
\begin{align}\label{eq:verification1}
  \Ex_{t,z,x}\left[-\int_t^{\tau_n^t}e^{-\rho s}dL_s^X\right]\leq e^{-\rho t}v(t,z,x)-\Ex_{t,z,x}\left[e^{-\rho\tau_n^t}v(\tau_n^t,Z_{\tau_n^t},X_{\tau_n^t})\right].
\end{align}

By applying the estimate \eqref{eq:vOT}, we have $|v(\tau_n^t,Z_{\tau_n^t},X_{\tau_n^t})|\leq e^{\rho\tau_n^t}C(T-\tau_n^t)\{1+\sup_{s\in[t,T]}|Z_s|^p\}$, $\Px$-a.s.
Sending $n\rightarrow+\infty$ and noting that $\tau_n^t\uparrow T$, $\Px$-a.s., we can see from dominated convergence theorem that
\begin{align}\label{eq:terminalTcond}
  \lim_{n\to+\infty}\Ex_{t,z,x}\left[e^{-\rho {\tau_n}}v(\tau_n^t,Z_{\tau_n^t},X_{\tau_n^t})\right]=0.
\end{align}
Therefore, as $n\rightarrow+\infty$ in \eqref{eq:verification1}, we have that, for all $\theta\in{\cal U}_t$,
\begin{align}\label{eq:verification2}
J(\theta;t,z,x)=\Ex_{t,z,x}\left[-\int_t^Te^{-\rho s}dL_s^X\right]\leq e^{-\rho t}v(t,z,x)={\rm w}(t,z,x),\quad (t,z,x)\in\mathcal{D}_T,
\end{align}
where the equality in \eqref{eq:verification2} holds for $\theta=\theta^*$. This finally verifies that $\theta^*\in{\cal U}_t$ is an optimal strategy.
\end{proof}

\section{Illustrative Examples}\label{sec:exam}
Although the benchmark process is restricted to be non-decreasing in \eqref{eq_prob_IBP}, this section illustrates the application to some market index tracking problems when the  index process follows a geometric Brownian motion. In both cases of finite horizon and infinite horizon, we show that the market index tracking problem can actually be transformed into an equivalent optimal tracking problem with a non-decreasing benchmark process. Some closed-form results can be derived therein, allowing us to numerically examine the sensitivity on model parameters and discuss some financial implications.

Similar to the model assumption in (22) of \cite{GHW2011}, we consider the market index such as S\&P500 or Nasdaq 100 with the price process $I=(I_t)_{t\in[0,T]}$ that satisfies
\begin{align}\label{eq:sDtsp}
\frac{dI_t}{I_t}=\mu_Idt + \sigma_IdW_t^{\gamma},
\end{align}
where $I_0=z>0$ and the constant return rate $\mu_I\in\R$ and the volatility $\sigma_I>0$. Note that the index process in \eqref{eq:sDtsp} is not monotone and hence does not fit directly into our framework in Section \ref{sec:model}. For the given market index process $I$, we consider the optimal tracking problem similar to \eqref{eq_prob_IBP} that
\begin{align}\label{eq:probIBPSsp}
u(a, \mathrm{v}, z):=\text{$\inf_{C,\theta} \Ex\left[C_0+\int_0^T e^{-\rho t}dC_t \right]$\ \ \ subject to\ \  $I_t\le V_t^{\theta} + C_t$ at each $t\in[0,T]$.}
\end{align}
To exclude the trivial case, it is assumed that $I$ can not be dynamically replicated by some portfolio $\theta$, which amounts to the assumption that
\begin{align}
\lambda:=\mu_I-\sigma_I\gamma^{\top}\sigma^{-1}\mu\neq0. \label{lambda}
\end{align}

It is not difficult to see that Lemma~\ref{first-transf} still holds when $A_t$ is replaced by the market index process $I_t$. We can therefore follow the argument in Section~\ref{sec:PDE} and introduce the reflected state process that
\begin{align}\label{state-X-example}
X_t &= -(I_t-I_0) +\int_0^t\theta_s^{\top}\mu ds+\int_0^t\theta_s^{\top}\sigma dW_s + L_t\notag\\
&=-\int_0^tf(I_s)ds+\int_0^t\bar{\theta}_s^{\top}\mu ds+\int_0^t\bar{\theta}_s^{\top}\sigma dW_s+L_t,
\end{align}
where the running maximum process $L=(L_t)_{t\in[0,T]}$  with $L_0=x\geq0$ is defined as in \eqref{eq:maxM} when $A$ is replaced by $I$. In \eqref{state-X-example}, we used the notations $f(z):=\lambda z$ for $z>0$, and for $t\in[0,T]$,
\begin{align}\label{eq:bartheta}
\bar{\theta}_t^{\top}:= \theta_t^{\top}-\sigma_I\gamma^{\top}\sigma^{-1}I_t.
\end{align}
It follows that $(a-\mathrm{v})^+-u(a,\mathrm{v}, z)$ with $u(a,\mathrm{v},z)$ given in \eqref{eq:probIBPSsp} is equivalent to the auxiliary control problem
\begin{align}\label{eq:objectivefcnGBM}
\sup_{\bar{\theta}\in\overline{\mathcal{U}}_0}\Ex\left[-\int_0^{T}e^{-\rho s}dL_s\right],
\end{align}
where the initial level $L_0^X=x=(\mathrm{v}-a)^+$.

Let $\mathcal{D}_T:=[0,T]\times(0,\infty)\times[0,\infty)$. Denote by $\overline{\cal U}_t$ the set of admissible controls taking the feedback form $\bar{\theta}_s=\bar{\theta}(s,I_s,X_s)$ for $s\in[t,T]$, where $\bar{\theta}:\mathcal{D}_T\to\R^n$ is a measurable function such that the following reflected SDE has a weak solution:
\begin{align}\label{eq:reflectedZAGBM}
X_s = -\int_t^s f(I_r)dr+ \int_t^s\bar{\theta}(r,I_r,X_r)^{\top}\mu dr+\int_0^t\bar{\theta}(r,I_r,X_r)^{\top}\sigma dW_r + L_s^{X},
\end{align}
with $X_t=x\geq 0$. Here, $L_s^X=x\vee \int_t^s{\bf1}_{\{X_r=0\}}dL_r^X$ is a continuous, non-negative and non-decreasing process, which increases only when the state process $X_s$ hits the level $0$ for $s\in[t,T]$. For $(t,z,x)\in\mathcal{D}_T$, the dynamic version of the auxiliary problem \eqref{eq:objectivefcnGBM} is given by
\begin{align}\label{eq:valuefcnAGBM}
{\rm w}(t,z,x) :=\sup_{\bar{\theta}\in\overline{\cal U}_t}\Ex_{t,z,x}\left[-\int_t^Te^{-\rho s}dL_s^X\right],
\end{align}
where $\Ex_{t,z,x}[~\cdot~]:=\Ex[~\cdot~|I_t=z,X_t=x]$. Again, we shall consider $v(t,z,x):=e^{\rho t}{\rm w}(t,z,x)$ as the solution to the primal HJB equation.

If the coefficient $\lambda$ in \eqref{lambda} satisfies $\lambda<0$, the optimal strategy is actually trivial that $\bar{\theta}^*\equiv0$, and the associated value function becomes $v(t,z,x)=0$ for all $(t,z,x)\in[0,T]\times\R_+^2$. In fact, let $X^*=(X_s^*)_{s\in[t,T]}$ be the state process \eqref{eq:reflectedZAGBM} with $\bar{\theta}^*=(\bar{\theta}_s^*)_{s\in[t,T]}\in{\overline{\cal U}}_t$. Then, for $\lambda<0$, it holds that $X^*_s>0$ for all $s\in[t,T]$. This yields that $L^{X^*}_s=L^{X^*}_t=x>0$ for all $s\in[t,T]$, a.s. and hence ${\rm w}(t,z,x)\leq0=\Ex_{t,z,x}\left[-\int_t^Te^{-\rho s}dL_s^{X^*}\right]$.

If, on the other hand, the coefficient $\lambda>0$, i.e., $f(z)>0$, the auxiliary stochastic control problem \eqref{eq:valuefcnAGBM} with the state process \eqref{state-X-example} falls into the framework in Section~\ref{sec:PDE}-Section~\ref{sec:verify} under the assumption that $f(t,x)>0$. Hence, all results still hold and Theorem \ref{thm:verificationthem} gives the characterization of the value function and the optimal portfolio for the index tracking problem \eqref{eq:valuefcnAGBM}. To further explore more explicit results, let us focus on the special case $\sigma_I=0$, i.e., the benchmark process $I_t=ze^{\mu t}$ for $t\in[0,T]$ is the so-called \textit{growth rate benchmark} that has been studied in \cite{YaoZZ06}. With this deterministic benchmark $I=(I_t)_{t\in[0,T]}$, the dual HJB equation \eqref{eq:frenchelHJB1} can be reduced to
\begin{align}\label{eq:dualhjbsigmasp=0}
\hat{v}_t(t,z,y)-\rho\hat{v}(t,z,y)+\rho y\hat{v}_y(t,z,y)+\alpha y^2\hat{v}_{yy}(t,z,y)+\mu_Iz\hat{v}_z(t,z,y)=\lambda zy.
\end{align}
In view of \eqref{hat-v}, it holds that $\hat{v}(t,z,x)=e^{\rho t}h(t,z,-\ln x)$ for $(t,z,x)\in[0,T]\times\R_+^2$. It follows from \eqref{eq:dualtransfromh} that, for $(t,z,u)\in{\cal D}_T$,
\begin{align}\label{eq:hexpsigmasp=0}
h(t,z,u) =-\lambda ze^{-\mu_It}\Ex\left[\int_t^T e^{(\mu_I-\rho)s-R_s^{t,u}}ds\right]=-\lambda ze^{-\mu_It}\int_t^T e^{(\mu_I-\rho)s}\Ex\left[e^{-R_s^{t,u}}\right]ds,
\end{align}
where we recall that $(R_s^{t,u})_{s\in[t,T]}$ with $(t,u)\in[0,T]\times[0,\infty)$ is a reflected Brownian motion with drift defined by \eqref{eq:driftBM}.

We now compute the term $\Ex[e^{-R_s^{t,u}}]$ in \eqref{eq:hexpsigmasp=0}. By \cite{Harrison1985} on page 49, for $(t,m)\in\R_+\times\R$, we have that
\begin{align}\label{eq:distribution-Rou}
\Px\left(R_t^{0,u}\leq m\right) =\Phi\left(\frac{-u+m-(\alpha-\rho)t}{\sqrt{2\alpha t}}\right)-e^{\frac{(\alpha-\rho)m}{\alpha}}\Phi\left(\frac{-u-m-(\alpha-\rho)t}{\sqrt{2\alpha t}}\right),
\end{align}
where $\Phi(m)=\int_{-\infty}^{m}\frac{1}{\sqrt{2\pi}}e^{-\frac{u^2}{2}}du$, $m\in\R$, denotes the standard normal cumulative distribution function. It follows from the Markov property that, for $(s,m)\in(t,T]\times\R_+$,
\begin{align}\label{eq:density-Rou}
&\frac{\Px(R_s^{t,u}\in dm)}{dm}=\frac{1}{\sqrt{2\alpha(s-t)}}\Phi'\left(\frac{-u+m-(\alpha-\rho)(s-t)}{\sqrt{2\alpha (s-t)}}\right)
-\frac{\alpha-\rho}{\alpha}e^{\frac{(\alpha-\rho)m}{\alpha}}\nonumber\\
&\qquad\quad\times\Phi\left(\frac{-u-m-(\alpha-\rho)(s-t)}{\sqrt{2\alpha(s-t)}}\right)
+\frac{1}{\sqrt{2\alpha(s-t)}}e^{\frac{(\alpha-\rho)m}{\alpha}}\Phi'\left(\frac{-u-m-(\alpha-\rho)(s-t)}{\sqrt{2\alpha(s-t)}}\right)\nonumber\\
&\qquad=: \psi(m,u,s-t).
\end{align}
The expectation can therefore be explicitly written as
\begin{align*}
\Ex\left[e^{-R_s^{t,u}}\right]=\int_0^{\infty}e^{-m}\Px(R_s^{t,u}\in dm)=\int_0^{\infty}e^{-m}\psi(m,u,s-t)dm.
\end{align*}
By virtue of \eqref{eq:hexpsigmasp=0}, the dual HJB equation \eqref{eq:dualhjbsigmasp=0} admits the solution in an integral form that
\begin{align}\label{eq:hatvexpsigmasp=02}
\hat{v}(t,z,x)=e^{\rho t}h(t,z,-\ln x)=-\lambda ze^{(\rho-\mu_I)t}\int_t^T\int_{0}^{\infty} e^{(\mu_I-\rho)s-m}\psi(m,-\ln x,s-t)dmds.
\end{align}
Moreover, the critical point defined by \eqref{eq:xi-bound} in Theorem \ref{thm:verificationthem} can also be explicitly computed that
\begin{align*}
  \xi(t,z)&=\lambda z\Ex\left[\int_t^Te^{(\alpha+\mu_I-2\rho)(s-t)}e^{\sqrt{2\alpha}B^1_{s-t}}ds\right]=\lambda z\int_t^T e^{(2\alpha+\mu_I-2\rho)(s-t)} ds\\
  &=\begin{cases}
  \frac{\lambda z}{2\alpha+\mu_I-2\rho}\left[e^{(2\alpha+\mu_I-2\rho)(T-t)}-1\right], & 2\alpha+\mu_I-2\rho\neq0;\\ \\
  \qquad\lambda z(T-t), & 2\alpha+\mu_I-2\rho=0.
  \end{cases}
\end{align*}

Next, we shall consider the same optimal tracking problem \eqref{eq:probIBPSsp} with the infinite time horizon (i.e., $T\to\infty$), in which the value function and optimal portfolio can be obtained explicitly. For the infinite horizon control problem, we note that ${\rm w}(z,x)=v(z,x)$ is the solution to the stationary version of the HJB equation \eqref{eq:HJBA1} that
\begin{align}\label{eq:stationaryHJBGBM}
-\rho v-\alpha \frac{v_x^2}{v_{xx}}+\frac{\sigma_I^2}{2}z^2\left(v_{zz}-\frac{v_{xz}^2}{v_{xx}}\right)-\sigma_I\mu^{\top}(\sigma\sigma^{\top})^{-1}\sigma\gamma z\frac{v_xv_{xz}}{v_{xx}}+
\mu_Izv_z-\lambda zv_x=0,
\end{align}
with the Neumann boundary condition $v_x(z,0)=1$. The dual PDE for \eqref{eq:stationaryHJBGBM} becomes
\begin{align}\label{eq:dualstationaryHJBGBM}
  &-\rho\hat{v}(z,y)+\rho y\hat{v}_y(z,y)+\alpha y^2\hat{v}_{yy}(z,y)+\mu_Iz\hat{v}_z(z,y)+\frac{\sigma_I^2}{2}z^2\hat{v}_{zz}(z,y)\nonumber\\
&\qquad-\sigma_I\mu^{\top}(\sigma\sigma^{\top})^{-1}\sigma\gamma yz\hat{v}_{yz}(z,y)-\lambda zy=0,\quad (z,y)\in\R_+\times(0,1).
\end{align}
In view of \eqref{eq:Neumann-boundary-hatv}, we have that $\hat{v}_y(z,1)=0$. It is easy to verify that the dual equation \eqref{eq:dualstationaryHJBGBM} admits the following general solution given by
\begin{align*}
\hat{v}(z,y) =\lambda zh(y)=\lambda z\left\{\frac{1}{\lambda}y+C_1y^{\gamma_1}+C_2y^{\gamma_2}\right\},\quad (z,y)\in\R_+\times(0,1].
\end{align*}
where $C_1,C_2\in\R$ are unknown constants that can be determined later and $\gamma_i$, $i=1,2$, satisfy
\begin{align*}
  (\mu_I-\rho)+(\rho-\sigma_I\gamma^\top\sigma^{-1}\mu)\gamma_i+\alpha\gamma_i(\gamma_i-1)=0,\quad i=1,2.
\end{align*}

Let us assume that the discount factor $\rho>\mu_I$. It holds that $-\infty<\gamma_1<0<\gamma_2<1$. Note that it is required that $h(0)\leq0$ and $h$ is convex. By $h(1)=\frac{1}{\lambda}+C_1+C_2$, we must have that $C_1=0$. It then follows from $\hat{v}_y(z,1)=0$ that $C_2=-\frac1{\gamma_2\lambda}$. Therefore, we conclude that
\begin{align}\label{eq:genealdualGBM}
\hat{v}(z,y) = yz -\frac{1}{\gamma_2}y^{\gamma_2}z,\quad (z,y)\in\R_+\times(0,1].
\end{align}
Moreover, as $T\to\infty$, it is easy to verify that the counterpart of $\xi(t,z)$ in \eqref{eq:xi-bound} is given by
\begin{align*}
  \xi(z):=\Ex\left[\int_0^{\infty}e^{-\rho s}f(M_s^{0,z})e^{\sqrt{2\alpha}B^1_s+(\rho-\alpha)s}ds\right]=z\int_0^{+\infty}e^{\lambda s}ds=+\infty.
\end{align*}
Using the same argument to derive \eqref{eq:solution-prime-hjb} with the closed-form solution \eqref{eq:genealdualGBM}, the solution of the primal HJB equation \eqref{eq:stationaryHJBGBM} can be obtained by
\begin{align*}
  v(z,x)&=\inf_{y\in(0,1]}\{\hat{v}(z,y)+xy\}=\hat{v}(z,y^*(z,x))+xy^*(z,x),\ \ (z,x)\in\R_+^2,
 \end{align*}
where $y^*(z,x)=(1+\frac xz)^{\frac1{\gamma_2-1}}<1$ and hence $y^*(z,x)^{\gamma_2-1}=1+\frac{x}{z}$. A straightforward calculation yields that, for $(z,x)\in\R_+^2$,
\begin{align}\label{eq:clode-formsolHJBGBM}
v(z,x)&=z\left(1+\frac{x}{z}\right)^{\frac{1}{\gamma_2-1}}\left(1-\frac{1}{\gamma_2}-\frac{1}{\gamma_2}\frac{x}{z}\right)
+x\left(1+\frac{x}{z}\right)^{\frac{1}{\gamma_2-1}}=z\frac{\gamma_2-1}{\gamma_2}\left(1+\frac xz\right)^{\frac{\gamma_2}{\gamma_2-1}}<0.
\end{align}
It follows that $v_x(z,x)=(1+\frac{x}{z})^{\frac1{\gamma_2-1}}$ and $v_x(z,0)=1$ for all $z\in\R_+$, i.e., the Neumann boundary condition holds. Moreover, by the similar argument to derive \eqref{eq:verifioptimalstrategy}, the feedback optimal portfolio is obtained by
\begin{align}\label{eq:barthetastar}
 \bar{\theta}^*(z,x)&= -(\sigma\sigma^{\top})^{-1}\frac{v_x(z,x)\mu+zv_{xz}(z,x)\sigma_I\sigma\gamma}{v_{xx}(z,x)}\notag\\
 &=-(\gamma_2-1)(x+z)(\sigma\sigma^{\top})^{-1}\mu+(\gamma_2-1)\sigma_I\left(\frac{z^3}x+z^2\right)\sigma\gamma.
\end{align}

We can show the optimality of $\bar{\theta}^*$ by the verification argument similar to Theorem \ref{thm:verificationthem}. In particular, as we have $\rho>\mu_I$, the counterpart of \eqref{eq:terminalTcond} for the infinite horizon case can be checked by
\begin{align*}
  &\lim_{s\to+\infty}\Ex\left[e^{-\rho s}\big|v(I_s,X_s)\big|\right]=\frac{1-\gamma_2}{\gamma_2}\lim_{s\to+\infty}\Ex\left[e^{-\rho s}I_s\left(1+\frac {X_s}{I_s}\right)^{\frac{\gamma_2}{\gamma_2-1}}\right]\notag\\
  &\qquad\leq\frac{1-\gamma_2}{\gamma_2}\lim_{s\to+\infty}\Ex\left[e^{-\rho s}I_s\right]=\frac{1-\gamma_2}{\gamma_2}z\lim_{s\to+\infty}e^{-(\rho-\mu_I)s}=0.
\end{align*}

Based on \eqref{eq:bartheta}, \eqref{eq:clode-formsolHJBGBM} and \eqref{eq:barthetastar}, we can readily perform some sensitivity analysis on model parameters. Let us consider the case $d=1$ and fix the variable $z=1$. In the following figures, we plot the value function $v(1,x)$ and the feedback optimal portfolio $\theta^*(1,x)$ in terms of the variable $x$ by varying model parameters. In Figure 1, the left panel shows that the value function $v$ is decreasing in the index return $\mu_I$, which indicates that the fund manager needs to inject more total capital when the index performs well. The right panel illustrates that there exists a critical point for the wealth $x$ such that the monotonicity of the optimal portfolio $\theta^*$ in the parameter $\mu_I$ overturns. When the wealth level is too low, the optimal portfolio is decreasing in $\mu_I$, which is consistent with the real life situation that the fund manager with inadequate wealth will become more conservative in the risky investment because any account loss may require a large amount of capital injection to catch up with the high index growth. On the other hand, when the wealth is sufficient, the optimal portfolio is increasing in $\mu_I$, which indicates that the fund manager will become more aggressive to invest in order to compete with the index performance. However, as $\mu_I$ is much larger than $\mu$, the fund manager still needs to increase the injected capital to meet the floor constraint.

$$
\begin{array}{ccc}
\begin{array}{c}
\includegraphics[height=2.1in]{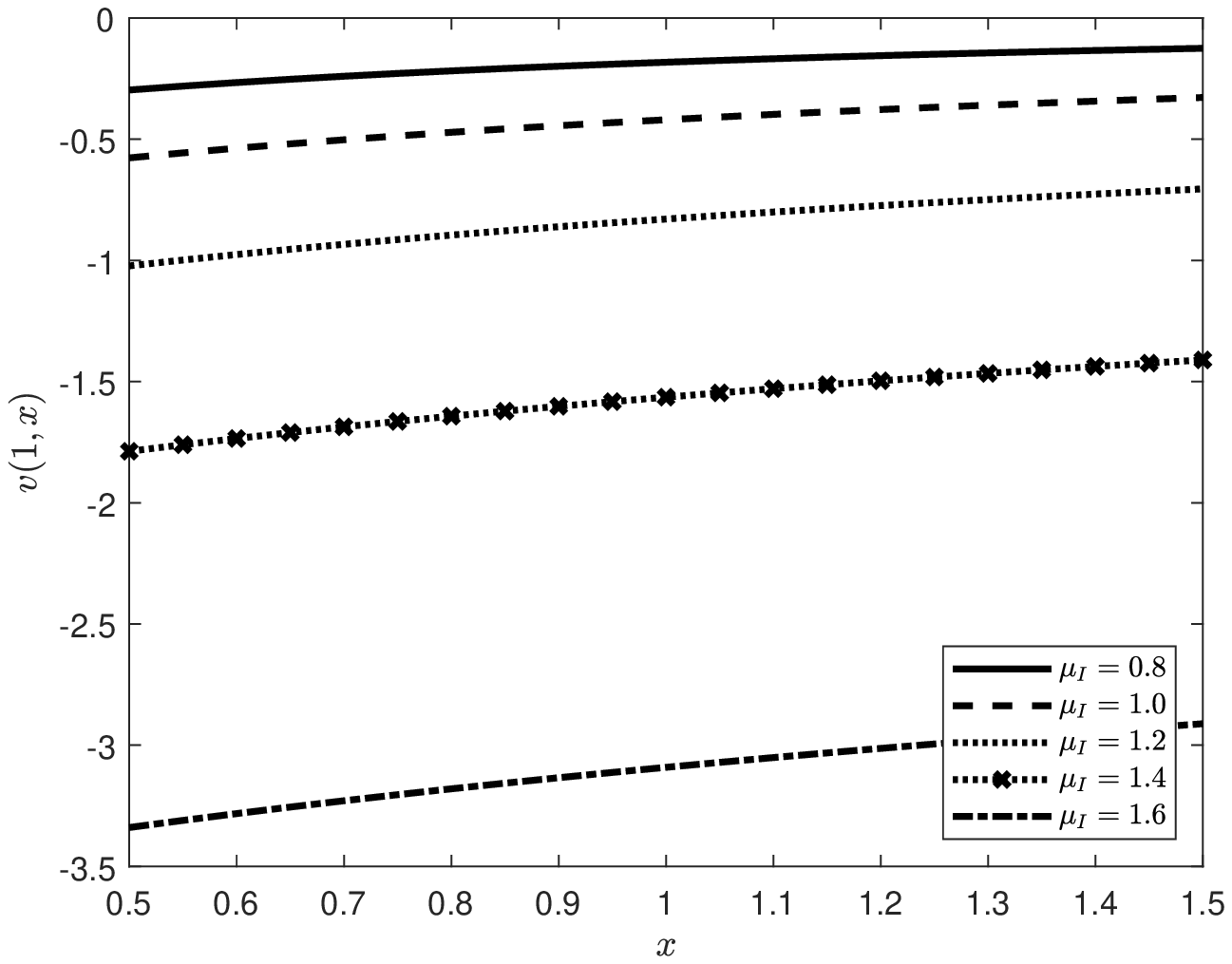}\includegraphics[height=2.1in]{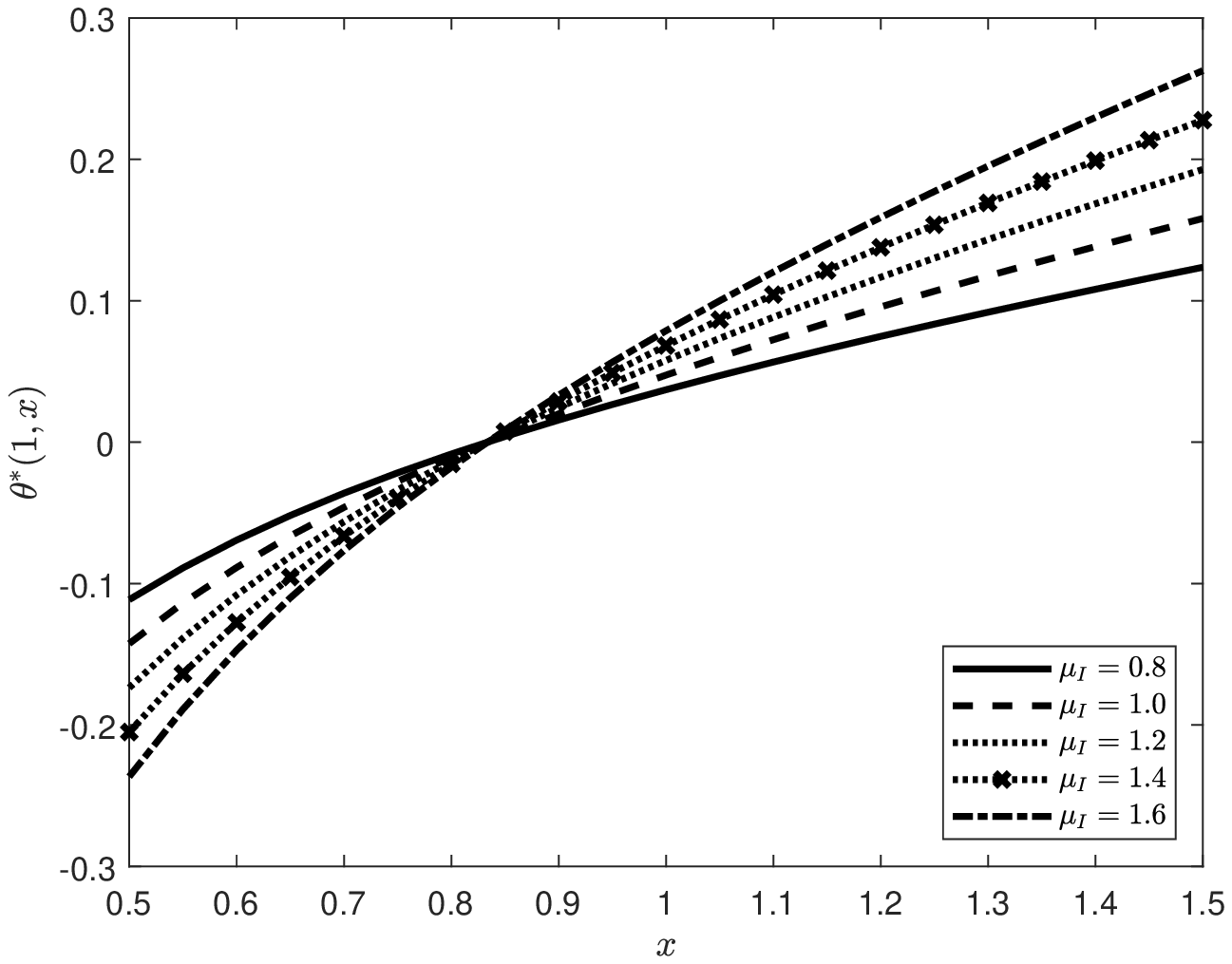}\\
\hspace{-0.2in}\vbox{Figure 1: {\small The value function $x\to v(1,x)$ (left panel); The optimal portfolio $x\to\theta^*(1,x)$ (right panel). In both panels, we fix $z=1$ and choose parameters $\mu=0.3$, $\sigma=1$, $\sigma_I=0.25$ and $\rho=2$.}}
\end{array}
\end{array}
$$

In Figure 2, we plot graphs with the changing index volatility $\sigma_I$. For all wealth level, the value function $v(1,x)$ is increasing in $\sigma_I$ and the optimal portfolio $\theta^*(1,x)$ is decreasing in $\sigma_I$. The right panel illustrates that the fund manager will become more conservative to invest in the risky asset if the index has high volatility because the difference between the index process and the portfolio account becomes highly volatile, which may cause very high and frequent capital injections. In this case, as $\sigma_I$ increases, the fund manager will prefer to invest less and use the cash in the tracking procedure to reduce the frequency of large fluctuations. It is interesting to see that the resulting total capital injection can be reduced by the increase in index volatility as shown in the left panel. 

$$
\begin{array}{ccc}
\begin{array}{c}
\includegraphics[height=2.1in]{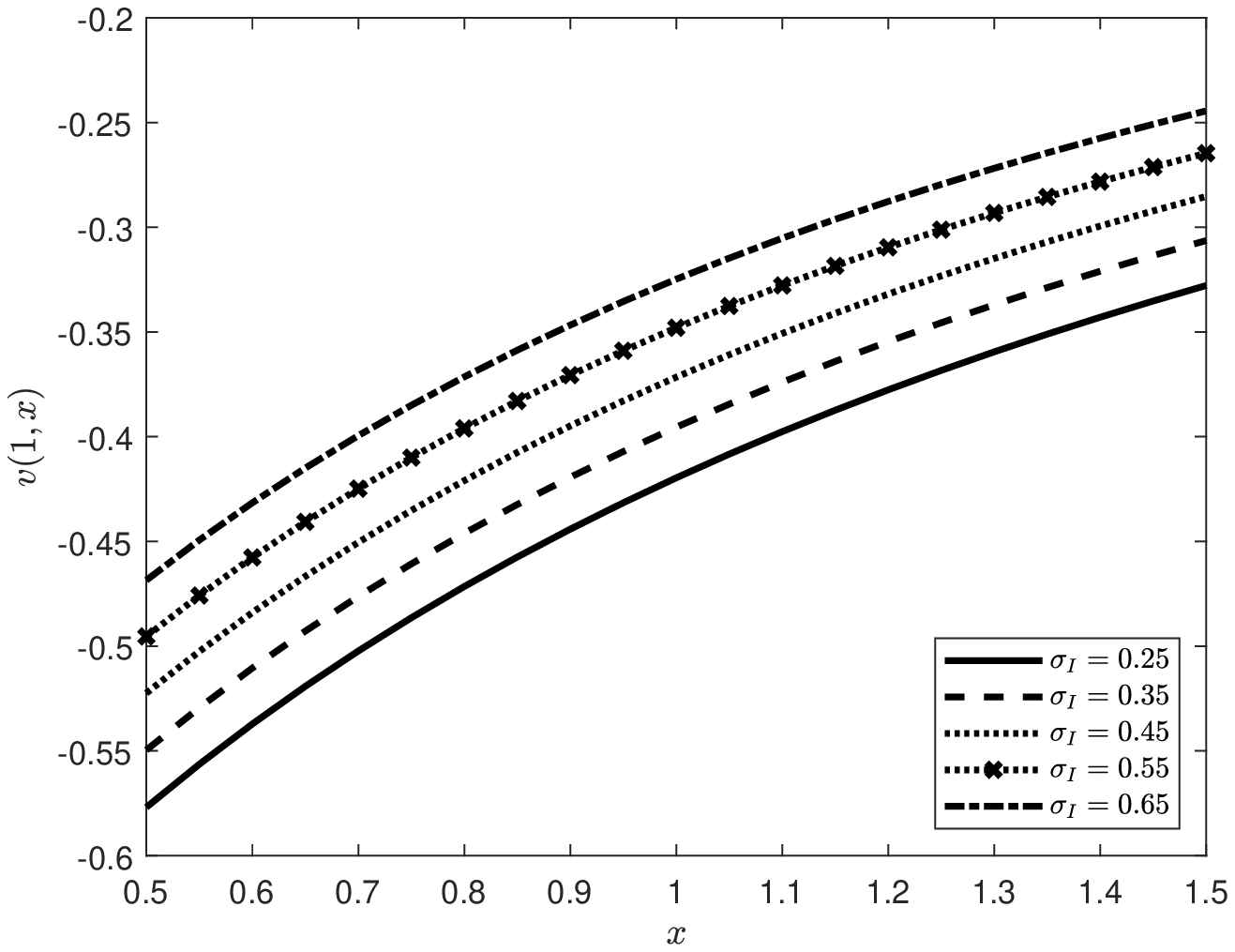}\includegraphics[height=2.1in]{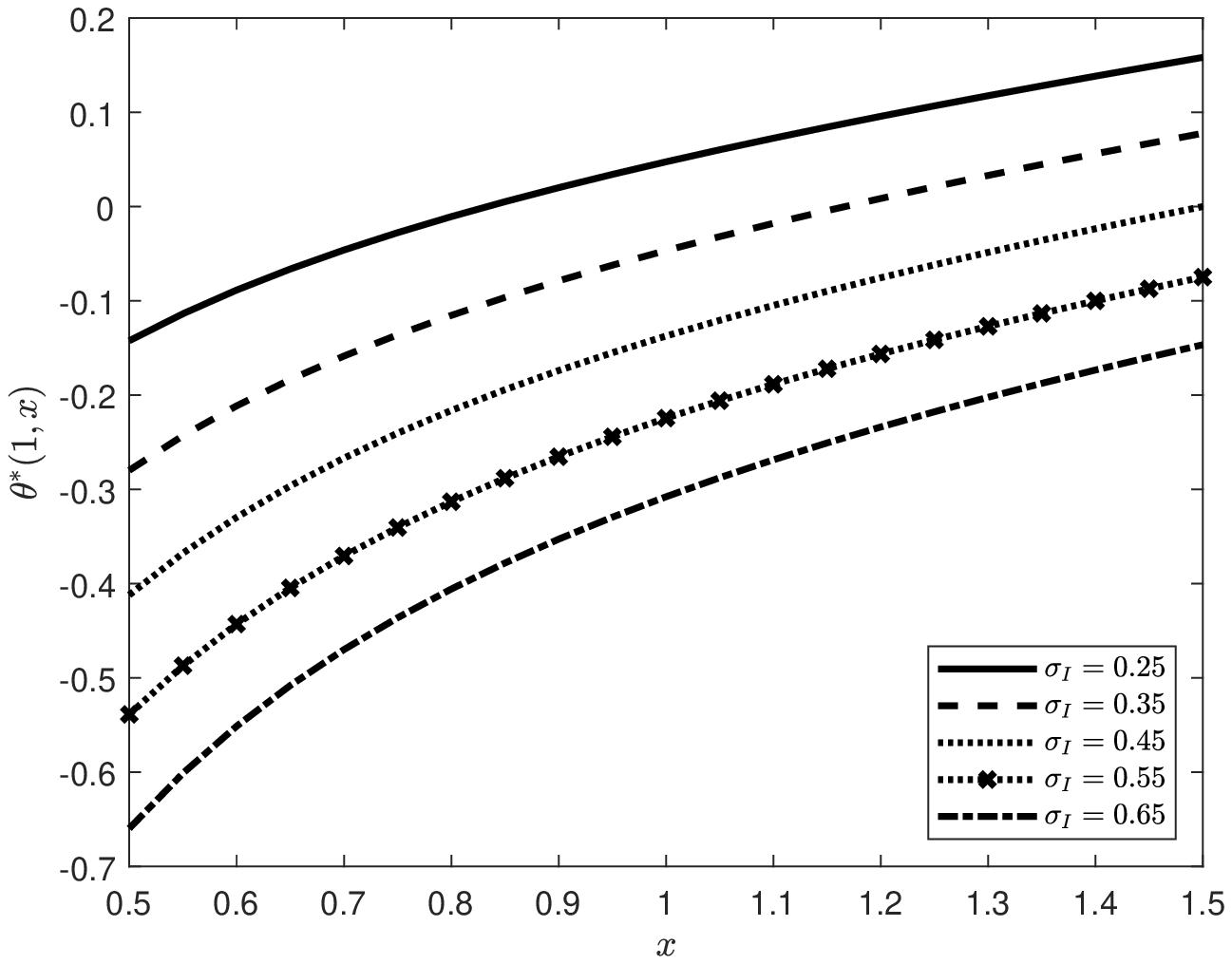}\\
\hspace{-0.2in}\vbox{Figure 2: {\small The value function $x\to v(1,x)$ (left panel); The optimal portfolio $x\to\theta^*(1,x)$ (right panel). In both panels, we fix $z=1$ and choose parameters $\mu=0.3$, $\sigma=1$, $\mu_I=1$ and $\rho=2$.}}
\end{array}
\end{array}
$$

We also plot in Figure 3 and Figure 4 the sensitivity results with respect to the risky asset return $\mu$ and volatility $\sigma$. One can see that both the value function and the optimal portfolio are increasing in $\mu$ and decreasing in $\sigma$, which may help to explain some real life observations. Figure 3 illustrates that the higher return of the risky asset will incentivitize the fund manager to invest more aggressively so that the portfolio performance can beat the index benchmark more often and the external capital injection can be reduced. In Figure 4, the high volatility in risky asset may cause frequent loss in the portfolio account, and the fund manager becomes more hesitant to invest in the risky asset to avoid the possibly frequent capital injections. The overall capital injection is lifted up by the larger volatility $\sigma$ as shown in the left panel. Comparing with the left panel of Figure 2, we can see that the volatility of the index and the volatility of the risky asset have opposite impacts on the total injected capital under our tracking procedure.

$$
\begin{array}{ccc}
\begin{array}{c}
\includegraphics[height=2.1in]{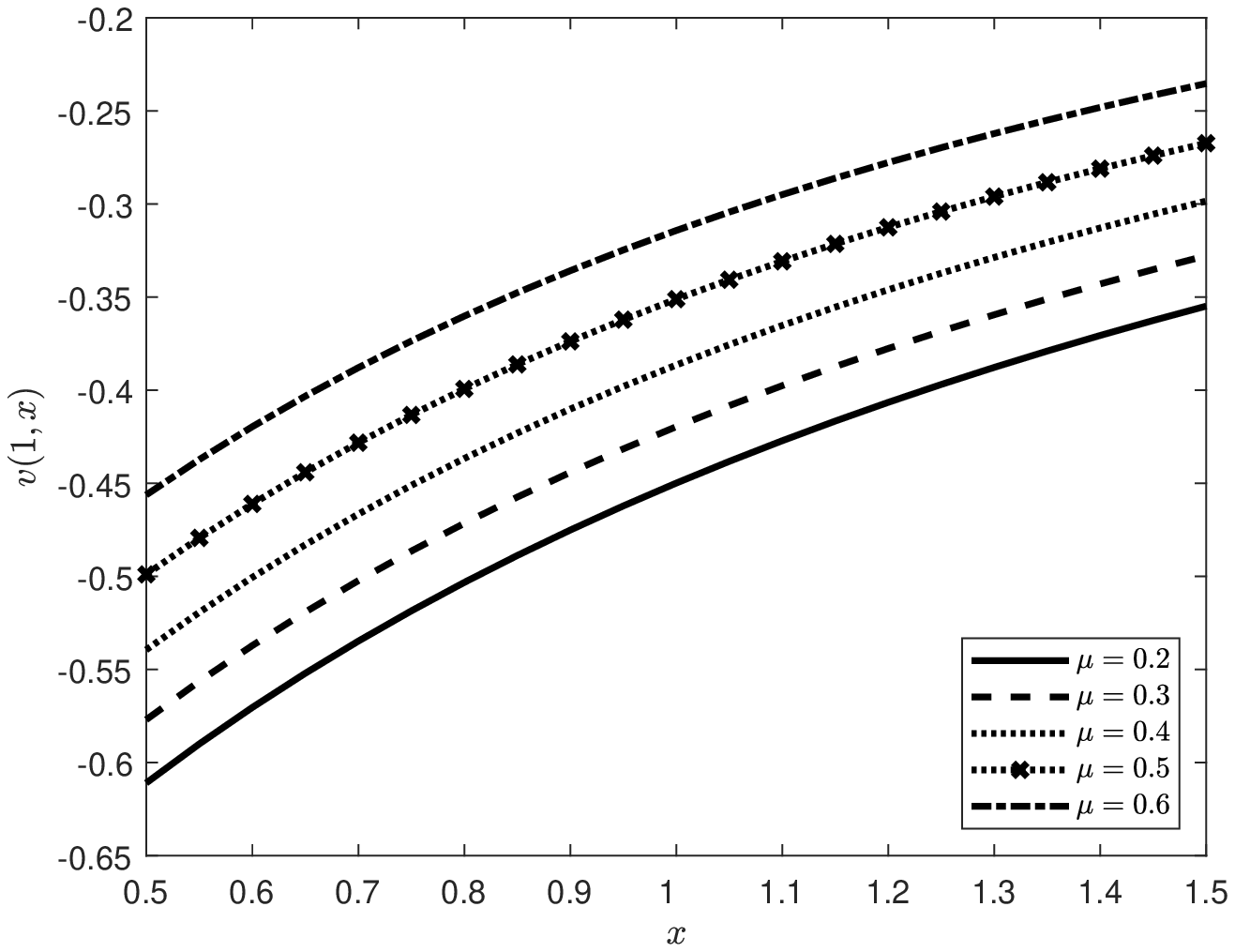}\includegraphics[height=2.1in]{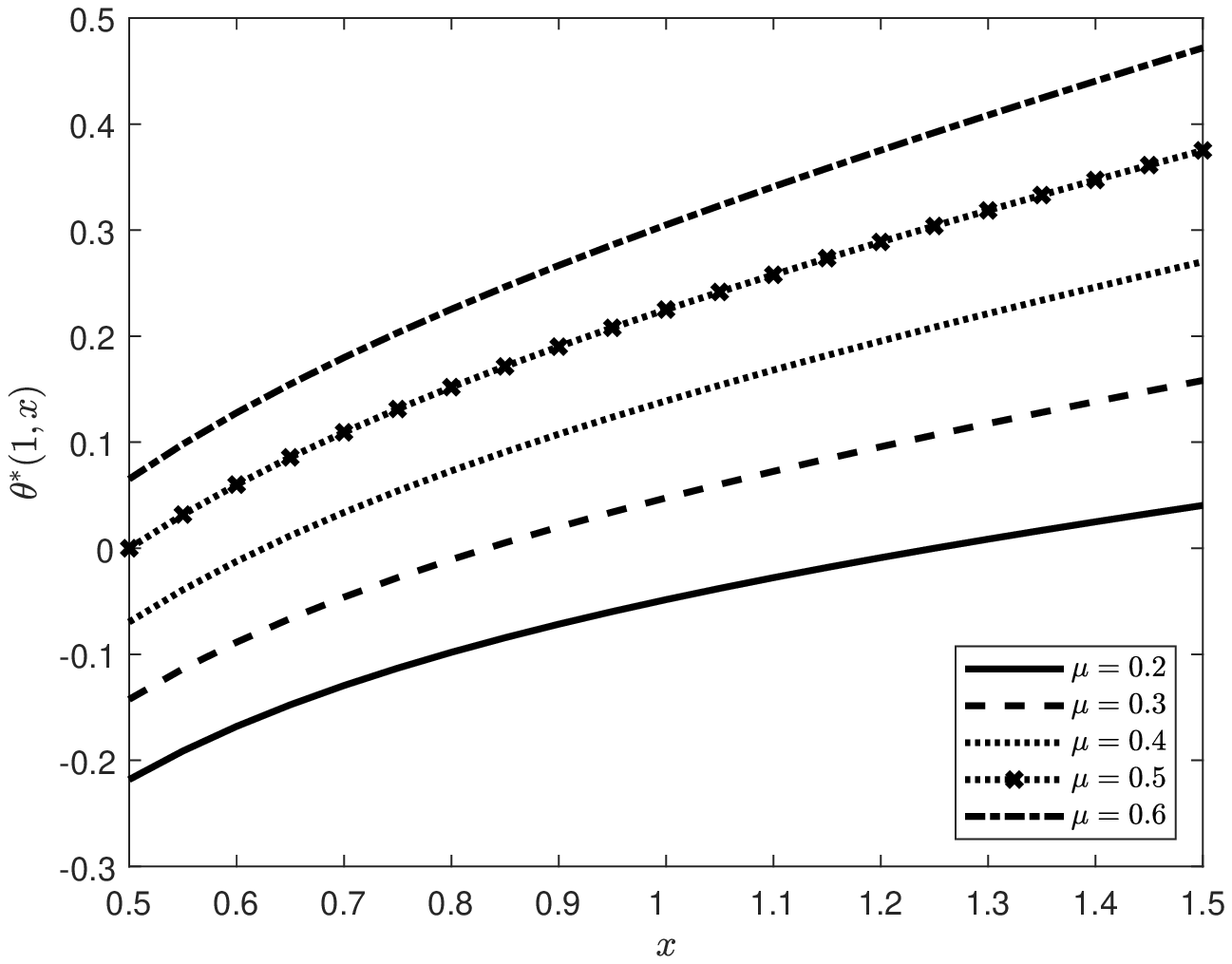}\\
\hspace{-0.2in}\vbox{Figure 3: {\small The value function $x\to v(1,x)$ (left panel); The optimal portfolio $x\to\theta^*(1,x)$ (right panel). In both panels, we fix $z=1$ and choose parameters $\mu_I=1$, $\sigma_I=0.25$, $\sigma=1$ and $\rho=2$.}}
\end{array}
\end{array}
$$

$$
\begin{array}{ccc}
\begin{array}{c}
\includegraphics[height=2.1in]{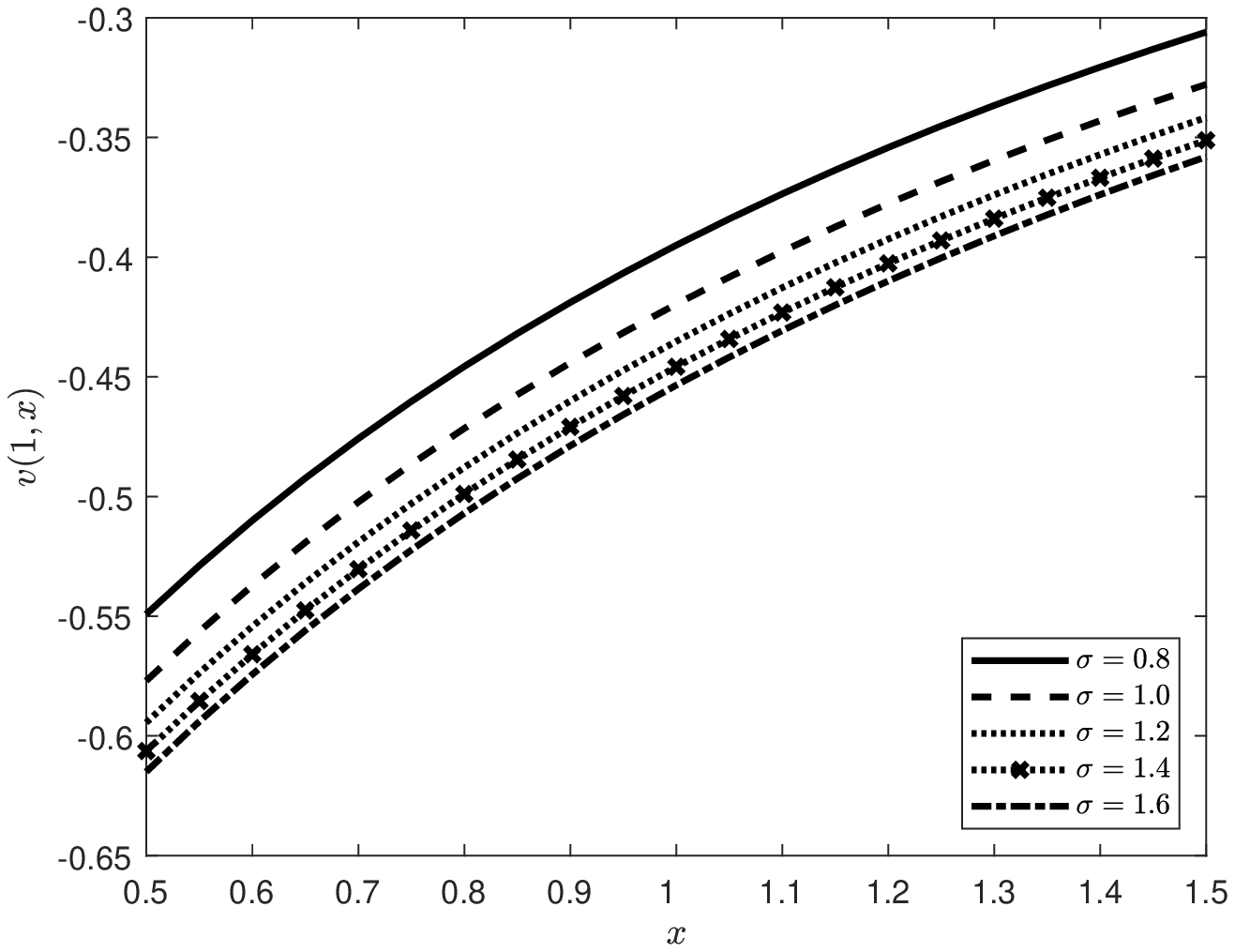}\includegraphics[height=2.1in]{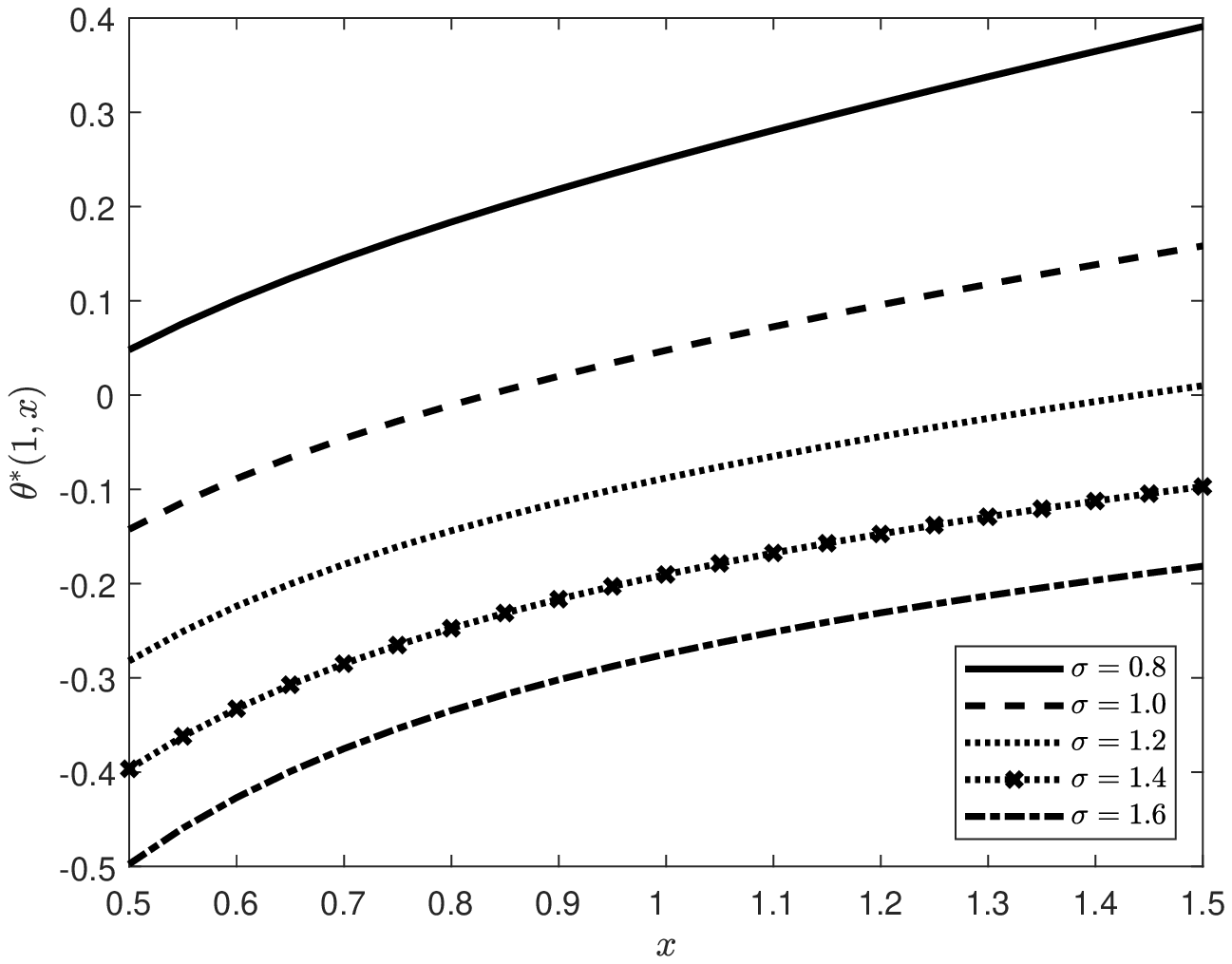}\\
\hspace{-0.2in}\vbox{Figure 4: {\small The value function $x\to v(1,x)$ (left panel); The optimal portfolio $x\to\theta^*(1,x)$ (right panel). In both panels, we fix $z=1$ and choose parameters $\mu_I=1$, $\sigma_I=0.25$, $\mu=0.3$ and $\rho=2$.}}
\end{array}
\end{array}
$$

At last, similar to the finite horizon case, we can also consider the simple case when $\sigma_I=0$, in which the deterministic benchmark process $I_t=ze^{\mu_It}$ describes a growth rate as studied in  \cite{YaoZZ06}. By \eqref{eq:clode-formsolHJBGBM} and \eqref{eq:barthetastar}, it follows that the value function and the optimal feedback strategy can be further simplified to
\begin{align}\label{eq:casesigmasp0}
v(z,x)&=z\frac{\gamma_0-1}{\gamma_0}\left(1+\frac{x}{z}\right)^{\frac{\gamma_0}{\gamma_0-1}},\quad \bar{\theta}^*(z,x)=-(\gamma_0-1)(x+z)(\sigma\sigma^{\top})^{-1}\mu,
\end{align}
where the constant $\gamma_0>0$ is given by
\begin{align*}
\gamma_0=\frac{\alpha-\rho+\sqrt{(\rho-\alpha)^2+4\alpha(\rho-\mu_I)}}{2\alpha}.
\end{align*}
and $\alpha$ is defined in \eqref{eq:alpha}.

\section{Proofs of Main Results}\label{app:proof1}
This section collects the proofs of some main results in the previous sections.

\begin{proof}[Proof of Lemma \ref{first-transf}]
It is clear that, for $(a,{\rm v},z)\in[0,\infty)^2\times\R$,
\begin{align*}
u(a,\mathrm{v},z)=\inf_{\theta} \inf_{C} \mathbb{E}\left[C_0+\int_0^Te^{-\rho t}dC_t \right],
\end{align*}
and the integration by parts gives that $C_0+\int_0^Te^{-\rho t}dC_t =e^{-\rho T}C_T +\rho \int_0^T e^{-\rho t}C_tdt$.
For each fixed $\theta=(\theta_t)_{t\in[0,T]}$, we need to choose the optimal singular control $C=(C_t)_{t\in[0,T]}$ to minimize
\begin{align*}
\inf_{C} F(C),\ \ \text{where}\ \ F(C):=\mathbb{E}\left[e^{-\rho T}C_T +\rho \int_0^T e^{-\rho t}C_tdt \right],
\end{align*}
subjecting to $C_t\geq A_t-V_t^{\theta}$ at each $t\in[0,T]$. Note that the cost functional $F(C)$ is strictly increasing in $C$. That is, if $C^1 \le  C^2$ and $C^1\ne  C^2$, then we have $F(C^1)<F(C^2)$. Therefore, the optimal choice of the control $C$ is the minimal non-negative and non-decreasing process $C_t$ such that $C_t\geq A_t-V_t^{\theta}$ for $t\in[0,T]$. We claim that the minimal process is the non-decreasing envelope $C^*_t:=0\vee \sup_{s\leq t} (A_s-V_s^{\theta})$. Note that $C_t^*$ is non-negative and satisfies the dynamic floor constraint. Let $\widetilde{C}$ be another non-negative and non-decreasing process satisfying $\widetilde{C}_t\geq A_t-V_t^{\theta}$, $t\in[0,T]$. Suppose that $\widetilde{C}\leq C^*$ and $\widetilde{C}\ne C^*$. That is, there exists a $t\in[0,T]$ and a set $O$ with $\mathbb{P}(O)>0$ such that $\widetilde{C}_t(\omega)<C_t(\omega)$ for $\omega\in O$. By definition, we have
$$A_t(\omega)-V_t^{\theta}(\omega)\leq \widetilde{C}_t(\omega)<C_t(\omega)=\sup_{s\leq t}(A_s(\omega)-V_s^{\theta}(\omega)).$$
For each fixed $\omega\in O$, let $t^*<t$ be the time such that $A_{t^*}(\omega)-V_{t^*}^{\theta}(\omega)=\sup_{s\leq t}(A_s(\omega)-V_s^{\theta}(\omega))$. It follows that $\widetilde{C}_t(\omega)<A_{t^*}(\omega)-V_{t^*}^{\theta}(\omega)\leq \widetilde{C}_{t^*}(\omega)$.
We obtain a contradiction that the process $\widetilde{C}$ is non-decreasing. Therefore, the original problem can be written as
\begin{align*}
u(a,\mathrm{v},z)&=C^*_0+\inf_{\theta}\mathbb{E}\left[\int_0^Te^{-\rho t}dC_t^* \right]=(a-{\rm v})^++ \inf_{\theta}\mathbb{E}\left[\int_0^Te^{-\rho t}d(0\vee \sup_{s\leq t}(A_s-V_s^{\theta}) \right],
\end{align*}
which completes the proof.
\end{proof}

\begin{proof}[Proof of Proposition~\ref{prop:smoothh}] We first derive the representation of the partial derivative $h_u$ of the function $h$ w.r.t. the variable $u$. Let $(t,z)\in[0,T]\times\R$ be fixed. For any $u_2>u_1\geq0$, it follows by \eqref{eq:dualtransfromh} that
\begin{align*}
  \frac{h(t,z,u_2)-h(t,z,u_1)}{u_2-u_1}=-\int_t^T\Ex\left[e^{-\rho s}f(s,M_s^{t,z})\frac{e^{-R_s^{t,u_2}}-e^{-R_s^{t,u_1}}}{u_2-u_1}\right]ds.
\end{align*}
A direct calculation yields that, for $s\in[t,T]$,
\begin{align*}
  \lim_{u_2\downarrow u_1}\frac{e^{-R_s^{t,u_2}-\rho s}-e^{-R_s^{t,u_1}-\rho s}}{u_2-u_1}=\begin{cases}
    -e^{-R_s^{t,u_1}-\rho s},\quad\max_{r\in[t,s]}\left[-\sqrt{2\alpha}(B^1_r-B^1_t)-(\alpha-\rho)(r-t)\right]\leq u_1, \\ \\
    \qquad~0,\qquad\quad~\max_{r\in[t,s]}\left[-\sqrt{2\alpha}(B^1_r-B^1_t)-(\alpha-\rho)(r-t)\right]>u_1.
  \end{cases}
\end{align*}
As $\sup_{(s,u_1,u_2)\in[t,T]\times[0,\infty)^2}\left|\frac{e^{-R_s^{t,u_2}-\rho s}-e^{-R_s^{t,u_1}-\rho s}}{u_2-u_1}\right|\leq1$, the dominated convergence theorem gives that
\begin{align}\label{eq:derihu1}
\lim_{u_2\downarrow u_1}\frac{h(t,z,u_2)-h(t,z,u_1)}{u_2-u_1}&=\Ex\left[\int_t^Te^{-\rho s}f(s,M_s^{t,z})e^{-R_s^{t,u_1}}\mathbf{1}_{\left\{\max_{r\in[t,s]}\left[-\sqrt{2\alpha}(B^1_r-B^1_t)-(\alpha-\rho)(r-t)\right]\leq u_1\right\}}ds\right]\notag\\
  &=\Ex\left[\int_t^{\tau_{u_1}^t\wedge T} e^{-\rho s}f(s,M_s^{t,z})e^{-R_s^{t,u_1}}ds\right],
\end{align}
where $\tau_{u_1}^t:=\inf\{s\geq t;~-\sqrt{2\alpha}(B^1_s-B^1_t)-(\alpha-\rho)(s-t)=u_1\}$. On the other hand, for the case $u_1>u_2\geq0$, similar to the computation of \eqref{eq:derihu1}, we have that $\lim_{u_2\uparrow u_1}\frac{h(t,z,u_2)-h(t,z,u_1)}{u_2-u_1}=\lim_{u_2\downarrow u_1}\frac{h(t,z,u_2)-h(t,z,u_1)}{u_2-u_1}$. Therefore, the representation~\eqref{eq:derihu} holds.

We next derive the representation of $h_{uu}$. Let $(t,z)\in[0,T]\times\R$ be fixed. For any $u_0, u_n\geq 0$ and $u_n\to u_0$ as $n\to\infty$, we have that, for $n\geq1$,
\begin{align}\label{eq:Deltandecom}
\Delta_n&=\frac{h_u(t,z,u_n)-h_u(t,z,u_0)}{u_n-u_0}=\Ex\left[\frac{1}{u_n-u_{0}}\int_{\tau_{0}}^{\tau_{n}} e^{-\rho s}f(s,M_s^{t,z})e^{-R_s^{t,u_0}}ds\right]\nonumber\\
&\quad+\Ex\left[\frac1{u_n-u_0}\int_t^{\tau_0}e^{-\rho s}f(s,M_s^{t,z})\left(e^{-R_s^{t,u_n}}-e^{-R_s^{t,u_0}}\right)ds\right]\notag\\
  &\quad+\Ex\left[\frac1{u_n-u_0}\int_{\tau_0}^{\tau_n}e^{-\rho s}f(s,M_s^{t,z})\left(e^{-R_s^{t,u_n}}-e^{-R_s^{t,u_0}}\right)ds\right]
  :=\Delta^{(1)}_n+\Delta^{(2)}_n+\Delta^{(3)}_n.
\end{align}
where $\tau_0:=\tau_{u_0}^t\wedge T$ and $\tau_n:=\tau_{u_n}^t\wedge T$. In order to deal with $\Delta_n^{(1)}$, we first focus on the case when $u_n\downarrow u_0$ as $n\to\infty$. Let us introduce $\tilde{\Delta}^{(1)}_n:=\Ex\left[\frac{\tau_n-\tau_0}{u_n-u_0} e^{-\rho \tau_0}f(\tau_0,M_{\tau_0}^{t,y})e^{-R_{\tau_0}^{t,u_0}}\right]$. For any $m>0$, it follows from the assumption {\Af} that
\begin{align}\label{eq:Deltadifference}
|\Delta^{(1)}_n-\tilde{\Delta}^{(1)}_n|&\!\!\leq\!\!\Ex\left[\frac{\tau_n-\tau_0}{u_n-u_0}\xi_n\right]
\!\!\leq\!\! m\Ex\left[\xi_n\right]\!\!+\!C\left\{1+\Ex\left[\max_{s\in[t,T]}\left|M^{t,z}_s\right|^2\right]^{\frac12}\right\}\Px\left(\frac{\tau_n-\tau_0}{u_n-u_0}>m\right)^{\frac12},
\end{align}
where for $n\geq1$,
\begin{align*}
  \xi_n:=\max_{s\in[\tau_0,\tau_n]}\left|e^{-\rho s}f(s,M_s^{t,z})e^{-R_s^{t,u}}-e^{-\rho \tau_0}f(\tau_0,M_{\tau_0}^{t,y})e^{-R_{\tau_0}^{t,u}}\right|.
\end{align*}
Note that $\xi_n\leq C(1+\sup_{s\in[t,T]}|M^{t,z}_s|)$ for all $n\geq1$ by the assumption {\Af}. For any $m>0$, it follows that $\tau_n\downarrow\tau_0$, $\Px$-a.s. as $n\rightarrow+\infty$. Therefore, we have that $\xi_n\downarrow0$, as $n\to\infty$, $\Px$-a.s., and hence $\Ex[\xi_n]\rightarrow0$ as $n\to\infty$. On the other hand, by setting $\tilde{\mu}:=\alpha-\rho$, we have that
\begin{align*}
\Px\left(\frac{\tau_n-\tau_0}{u_n-u_0}>m\right)\leq\int_{m(u_n-u_0)}^{+\infty}\frac{u_n-u_0}{\sqrt{4\alpha\pi t^3}}e^{-\frac{(u_n-u_0-\tilde{\mu}t)^2}{4\alpha t}}dt\leq\sqrt{u_n-u_0}\int_{m}^{+\infty}\frac{1}{\sqrt{4\alpha\pi s^3}}ds.
\end{align*}
Letting $n\rightarrow+\infty$ in \eqref{eq:Deltadifference}, we arrive at
\begin{align}\label{eq:limDeltadifference}
  \lim_{n\rightarrow+\infty}|\Delta^{(1)}_n-\tilde{\Delta}^{(1)}_n|=0.
\end{align}
Moreover, using the strong Markov property of Brownian motion with drift, it follows that
\begin{align*}
  \Ex\left[\frac{\tau_n-\tau_0}{u_n-u_0}\bigg|\mathcal{F}_{\tau_0+}\right]
  &=\int_0^{T-\tau_0}\frac{1}{\sqrt{4\alpha\pi s}}e^{-\frac{(u_n-u_0-\tilde{\mu}s)^2}{4\alpha s}}ds+(T-\tau_0)\int_{T-\tau_0}^{+\infty}\frac1{\sqrt{4\alpha\pi s^3}}e^{-\frac{(u_n-u_0-\tilde{\mu}s)^2}{4\alpha s}}ds.
\end{align*}
Therefore, for $n\geq1$,
\begin{align*}
\tilde{\Delta}^{(1)}_n&=\Ex\left[e^{-\rho \tau_0}f(\tau_0,M_{\tau_0}^{t,z})e^{-R_{\tau_0}^{t,u_0}}\int_0^{T-\tau_0}\frac{1}{\sqrt{4\alpha\pi s}}e^{-\frac{(u_n-u_0-\tilde{\mu}s)^2}{4\alpha s}}ds\right]\notag\\
  &\quad+\Ex\left[(T-\tau_0)e^{-\rho \tau_0}f(\tau_0,M_{\tau_0}^{t,y})e^{-R_{\tau_0}^{t,u_0}}\int_{T-\tau_0}^{+\infty}\frac1{\sqrt{4\alpha\pi s^3}}e^{-\frac{(u_n-u_0-\tilde{\mu}s)^2}{4\alpha s}}ds\right].
\end{align*}
This yields that
\begin{align*}
\lim_{n\rightarrow+\infty}\Delta^{(1)}_n=\lim_{n\rightarrow+\infty}\tilde{\Delta}^{(1)}_n=\Ex\left[e^{-\rho \tau_0}f(\tau_0,M_{\tau_0}^{t,z})e^{-R_{\tau_0}^{t,u_0}}\Gamma(\tau_0)\right],
\end{align*}
where, for $t\in[0,T]$,
\begin{align}\label{eq:Gammat}
  \Gamma(t):=\int_0^{T-t}\frac{1}{\sqrt{4\alpha\pi s}}e^{-\frac{\tilde{\mu}^2}{4\alpha}s}ds+(T-t)\int_{T-t}^{+\infty}\frac1{\sqrt{4\alpha\pi s^3}}e^{-\frac{\tilde{\mu}^2}{4\alpha}s}ds.
\end{align}

For the case where $u_0>0$, $u_n\uparrow u_0$ as $n\to\infty$, we can follow the similar argument to get that
\begin{align*}
  \lim_{n\rightarrow+\infty}\Delta^{(1)}_n&=\lim_{n\rightarrow+\infty}\Ex\left[e^{-\rho \tau_n}f(\tau_n,M_{\tau_n}^{t,z})e^{-R_{\tau_n}^{t,u_n}}\int_0^{T-\tau_n}\frac{1}{\sqrt{4\alpha\pi s}}e^{-\frac{(u_n-u_0-\tilde{\mu}s)^2}{4\alpha s}}ds\right]\notag\\
  &\quad+\lim_{n\rightarrow+\infty}\Ex\left[(T-\tau_n)e^{-\rho \tau_n}f(\tau_n,M_{\tau_n}^{t,z})e^{-R_{\tau_n}^{t,u_n}}\int_{T-\tau_n}^{+\infty}\frac1{\sqrt{4\alpha\pi s^3}}e^{-\frac{(u_n-u_0-\tilde{\mu}s)^2}{4\alpha s}}ds\right]\notag\\
  &=\Ex\left[e^{-\rho \tau_0}f(\tau_0,M_{\tau_0}^{t,z})e^{-R_{\tau_0}^{t,u_0}}\Gamma(\tau_0)\right].
\end{align*}
Similar to the derivation of \eqref{eq:derihu}, we also have that
\begin{align*}
  \lim_{n\rightarrow+\infty}\Delta^{(2)}_n&=\lim_{n\rightarrow+\infty}\Ex\left[\frac1{u_n-u_0}\int_t^{\tau_0}e^{-\rho s}f(s,M_s^{t,z})\left(e^{-R_s^{t,u_n}}-e^{-R_s^{t,u_0}}\right)ds\right]\notag\\
  &=\Ex\left[\int_t^{\tau_0}e^{-\rho s}f(s,M_s^{t,z})e^{-R_s^{t,u_0}}ds\right].
\end{align*}
Finally, similar to \eqref{eq:Deltadifference} and \eqref{eq:limDeltadifference}, we can show by the assumption {\Af} that
\begin{align*}
|\Delta^{(3)}_n|&\leq\Ex\left[\frac1{u_n-u_0}\int_{\tau_0}^{\tau_n}e^{-\rho s}f(s,M_s^{t,z})\left|e^{-R_s^{t,u_n}}-e^{-R_s^{t,u_0}}\right|ds\right]\notag\\
&\leq m\Ex[\tilde{\xi}_n]+C\left\{1+\mathbb{E}\left[\max_{s\in[t,T]}\left|M^{t,z}_s\right|^{2}\right]\right\}\Px\left(\frac{\tau_n-\tau_0}{u_n-u_0}>m\right)^{\frac12},
\end{align*}
where, for $t\in[0,T]$,
\begin{align*}
  \tilde{\xi}_n&:=\max_{s\in[\tau_0,\tau_n]}\left|e^{-\rho s}f(s,M_s^{t,z})e^{-R_s^{t,u_n}}-e^{-\rho s}f(s,M_s^{t,z})e^{-R_s^{t,u_0}}\right|.
\end{align*}
By the assumption {\Af}, we have $\Ex[\tilde{\xi}_n]\to0$ as $n\to\infty$. Hence $\lim_{n\rightarrow+\infty}|\Delta^{(3)}_n|=0$. Putting all the pieces together, we can derive from the decomposition~\eqref{eq:Deltandecom} and $R_{\tau_0}^{t,u_0}=0$ that
\begin{align}\label{eq:Rephuu}
h_{uu}(t,z,u_0) 
&=\Ex\left[e^{-\rho \tau_0}f(\tau_0,M_{\tau_0}^{t,z})\Gamma(\tau_0)\right]
+\Ex\left[\int_t^{\tau_0}e^{-\rho s}f(s,M_s^{t,z})e^{-R_s^{t,u_0}}ds\right].
\end{align}
where we define that $\tau_0:=\tau_{u_0}^t\wedge T$, and $\Gamma(t)$ for $t\in[0,T]$ is given by \eqref{eq:Gammat}.

We next derive the representations of $h_{y}$, $h_{yy}$ and $h_{yu}$. Fix $(t,u)\in[0,T]\times[0,\infty)$. In view of the assumption {\AZ}, Theorem 3.3.2 in \cite{Kunita19} yields that, for $s\in[t,T]$, the family $(M^{t,z}_s)_{z\in\R}$ admits a modification which is continuously differentiable w.r.t. $z$. Moreover, $\partial_z M^{t,z}_s$ is continuous in $z$ and satisfies the following SDE for $s\in[t,T]$ that
\begin{align}\label{eq:dynamic-partial-Y}
\partial_z M^{t,z}_s&=1+\int_t^s\mu'_Z\big(M^{t,z}_r\big)\partial_zM^{t,z}_rdr+\varrho\int_t^s\sigma'_Z(M_r^{t,z})\partial_z M^{t,z}_rdB^1_r\nonumber\\
  &\quad+\sqrt{1-\varrho^2}\int_t^s\sigma'_Z(M_r^{t,z})\partial_z M^{t,z}_rdB^2_r,
\end{align}
and for any $p\geq2$, the following moment estimate holds that
\begin{align}\label{eq:momentDeriYty}
  \sup_{z\in\R}\Ex\left[\max_{s\in[t,T]}\left|\partial_z M^{t,z}_s\right|^p\right]<+\infty.
\end{align}
Thanks to \eqref{eq:dualtransfromh}, for distinct $z,\hat{z}\in\R$ and some constant $C>0$, we have that
\begin{align}\label{eq:diff-der-y}
  \frac{h(t,z,u)-h(t,\hat{z},u)}{z-\hat{z}}=-\Ex\left[\int_t^T e^{-\rho s-R_s^{t,u}}\frac{f(s,M_s^{t,z})-f(s,M_s^{t,\hat{z}})}{z-\hat{z}}ds\right].
\end{align}
By the assumption {\Af}, for $s\in[t,T]$, the next results hold $\Px$-a.s. that
\begin{align*}
\begin{cases}
\displaystyle \frac{f(s,M_s^{t,z})-f(s,M_s^{t,\hat{z}})}{z-\hat{z}}\overset{\hat{z}\to z}{\longrightarrow} f'(s,M_s^{t,z})\partial_zM_s^{t,z},\\ \\
\displaystyle \left|\frac{f(s,M_s^{t,z})-f(s,M_s^{t,\hat{z}})}{z-\hat{z}}\right|\leq C\left|\frac{M_s^{t,z}-M_s^{t,\hat{z}}}{z-\hat{z}}\right|.
\end{cases}
\end{align*}
We have from \eqref{eq:momentDeriYty} that, for any $p\geq2$, $\sup_{\hat{z}\neq z}\Ex\left[\left|\frac{M_s^{t,z}-M_s^{t,\hat{z}}}{z-\hat{z}}\right|^p\right]<+\infty$. This implies that $(\frac{M_s^{t,z}-M_s^{t,\hat{z}}}{z-\hat{z}})_{\hat{z}\neq z}$ is uniformly integrable. Therefore, in view of \eqref{eq:diff-der-y}, we arrive at
\begin{align}\label{eq:hy}
  h_z(t,z,u)&=-\lim_{\hat{z}\rightarrow z}\Ex\left[\int_t^T e^{-\rho s-R_s^{t,u}}\frac{f(s,M_s^{t,z})-f(s,M_s^{t,\hat{z}})}{z-\hat{z}}ds\right]\nonumber\\
  &=-\Ex\left[\int_t^T e^{-\rho s-R_s^{t,u}}f'(s,M_s^{t,z})\partial_zM_s^{t,z}ds\right],
\end{align}
where $f'(t,z)$ denotes the partial derivative of $f$ w.r.t. $z$. Similarly, we can obtain that
\begin{align}\label{eq:hyu}
h_{zu}(t,z,u)&=\Ex\left[\int_t^Te^{-\rho s-R_s^{t,u}}f'(s,M_s^{t,z})\partial_z M_s^{t,z}\mathbf{1}_{\left\{\max_{r\in[t,s]}\left[-\sqrt{2\alpha}(B^1_r-B^1_t)-(\alpha-\rho)(r-t)\right]\leq u\right\}}ds\right].
\end{align}

We next derive the expression of $h_{yy}$. To this end, we need the dynamics of $\partial^2_{zz} M^{t,z}_s$ for $s\in[t,T]$. Following a similar argument of Theorem 3.4.2 of \cite{Kunita19}, we can deduce that
\begin{align*}
  \partial^2_{zz} M^{t,z}_s&=\int_t^s\left\{\mu''_Z(M^{t,z}_r)|\partial_zM^{t,z}_r|^2+\mu'_Z(M^{t,z}_r)\partial^2_{zz}M^{t,z}_r\right\}dr\notag\\
  &\quad+\varrho\int_t^s\left\{\sigma''_Z(M_r^{t,z})|\partial_z M^{t,z}_r|^2+\sigma'_Z(M_r^{t,z})\partial^2_{zz} M^{t,z}_r\right\}dB^1_r\notag\\
  &\quad+\sqrt{1-\varrho^2}\int_t^s\left\{\sigma''_Z(M_r^{t,z})|\partial_z M^{t,z}_r|^2+\sigma'_Z(M_r^{t,z})\partial^2_{zz} M^{t,z}_r\right\}dB^2_r,
\end{align*}
and it holds for $p\geq1$ that
\begin{align*}
  \sup_{z\in\R}\Ex\left[\max_{s\in[t,T]}\left|\partial^2_{zz} M^{t,z}_s\right|^{2p}\right]<+\infty.
\end{align*}
The chain rule with the assumption {\Af} yields that, $\Px$-a.s.
\begin{align*}
  \frac{f'(s,M_s^{t,z})\partial_zM_s^{t,z}-f'(s,M_s^{t,\hat{z}})\partial_zM_s^{t,\hat{z}}}{z-\hat{z}}\overset{\hat{z}\to z}{\longrightarrow} f''(s,M_s^{t,z})\big|\partial_zM_s^{t,z}\big|^2+f'(s,M_s^{t,z})\partial^2_{zz}M_s^{t,z},
\end{align*}
and there exists a constant $C>0$ such that, for all $\hat{z}\neq z$,
\begin{align}\label{eq:uniformly-integrable}
  &\left|\frac{f'(s,M_s^{t,z})\partial_zM_s^{t,z}-f'(s,M_s^{t,\hat{z}})\partial_zM_s^{t,\hat{z}}}{z-\hat{z}}\right|\!\!\leq\!\! C\left|\frac{M_s^{t,z}-M_s^{t,\hat{z}}}{z-\hat{z}}\partial_zM_s^{t,z}\right|\!\!+\!C\left|\frac{\partial_zM_s^{t,z}-\partial_zM_s^{t,\hat{z}}}{z-\hat{z}}\right|.
\end{align}
Define $I(s;z,\hat{z}):=\Ex\left[\max_{r\in[t,s]}|\partial_zM_r^{t,z}-\partial_zM_r^{t,\hat{z}}|^{2p}\right]$ for $(s,z,\hat{z})\in[t,T]\times\R^2$. It follows from \eqref{eq:dynamic-partial-Y} and the assumption {\AZ} that, for some $C>0$,
\begin{align*}
  I(s;z,\hat{z})&\leq C\int_t^s\left\{I(r;z,\hat{z})+\sup_{u\in\R}\Ex\left[\max_{s\in[t,T]}|\partial_z M^{t,u}_s|^{2p}\right]\Ex\left[\left|M_r^{t,z}-M_r^{t,\hat{z}}\right|^{2p}\right]\right\}dr\notag\\
  &\leq C\int_t^s\left\{I(r;z,\hat{z})+\left|z-\hat{z}\right|^{2p}\right\}dr.
\end{align*}
The Gronwall's lemma yields that, for all $s\in[t,T]$,
\begin{align*}
  \sup_{\hat{z}\neq z}\Ex\left[\sup_{r\in[t,s]}\left|\frac{\partial_zM_r^{t,z}-\partial_zM_r^{t,\hat{z}}}{z-\hat{z}}\right|^{2p}\right]<+\infty,
\end{align*}
and hence the left hand side of \eqref{eq:uniformly-integrable} with $\hat{z}\neq z$ and $s\in[t,T]$ is uniformly integrable. Then
\begin{align}\label{eq:hyy}
  h_{zz}(t,z,u)&=-\lim_{\hat{z}\rightarrow z}\Ex\left[\int_t^T e^{-\rho s-R_s^{t,u}}\frac{f'(s,M_s^{t,z})\partial_zM_s^{t,z}-f'(s,M_s^{t,\hat{z}})\partial_zM_s^{t,\hat{z}}}{z-\hat{z}}ds\right]\notag\\
  &=-\Ex\left[\int_t^T e^{-\rho s-R_s^{t,u}}\left(f''(s,M_s^{t,z})\left|\partial_zM_s^{t,z}\right|^2+f'(s,M_s^{t,z})\partial^2_{zz}M_s^{t,z}\right)ds\right],
\end{align}
where $f''(t,z)$ denotes the second-order partial derivative of $f$ w.r.t. $z$.

We then move on to the expression of $h_t$. Let us consider the solutions $M^{t,y}=(M_s^{t,z})_{s\in[t,T]}$ and $M^{\hat{t},z}=(M^{\hat{t},z}_s)_{s\in[\hat{t},T]}$ of SDE~\eqref{eq:Feynman-Kac-Y} with parameters $(t,z)\in[0,T]\times\R$ and $(\hat{t},z)\in[0,T]\times\R$ respectively. Moreover, for $r\geq0$, we introduce $\F_r^t:=\F_{t+r}$, $B^{1,t}_r:=B^1_{t+r}-B^1_t$, and $B^{2,t}_r:=B^2_{t+r}-B^2_t$ and define $\F^{\hat{t}}_r$, $B^{i,\hat{t}}_r$, $i=1,2$ for $r\geq0$ in a similar way. It is easy to check that
\begin{align}\label{eq:laweqn}
(M^{t,z}_{{t}+r},B^{1,t}_{r},B_r^{2,t})_{r\geq0}\overset{d}{=}(M^{{\hat{t}},z}_{{\hat{t}}+r},B^{1,{\hat{t}}}_{r},B_r^{2,\hat{t}})_{r\geq0}.
\end{align}
For any $\delta\in[0,T-t]$, it holds that
\begin{align*}
h(t+\delta,z,u)&= -\Ex\left[\int_{t+\delta}^T e^{-\rho s}f(s,M_s^{t+\delta,z})e^{-R_s^{t+\delta,u}}ds\right]=-\Ex\left[e^{-\rho\delta}\int_t^{T-\delta} e^{-\rho s}f(s,M_s^{t,z})e^{-R_s^{t,u}}ds\right].
\end{align*}
The dominated convergence theorem gives that
\begin{align*}
&\quad\lim_{\delta\downarrow0}\frac1\delta(h(t+\delta,z,u)-h(t,z,u))\notag\\
&=\lim_{\delta\downarrow0}\Ex\left[\frac{e^{-\rho\delta}}\delta\int_{T-\delta}^T e^{-\rho s}f(s,M_s^{t,z})e^{-R_s^{t,u}}ds+\frac{1-e^{-\rho\delta}}{\delta}\int_t^Te^{-\rho s}f(s,M_s^{t,z})e^{-R_s^{t,u}}ds\right]\notag\\
&=\Ex\left[e^{-\rho T}f(T,M_T^{t,z})e^{-R_T^{t,u}}+\rho\int_t^Te^{-\rho s}f(s,M_s^{t,z})e^{-R_s^{t,u}}ds\right].
\end{align*}
Similarly, for $t\in(0,T]$, we have that
\begin{align*}
\lim_{\delta\downarrow0}\frac1\delta(h(t,z,u)-h(t-\delta,z,u))
  &=\Ex\left[e^{-\rho T}f(T,M_T^{t,z})e^{-R_T^{t,u}}+\rho\int_t^Te^{-\rho s}f(s,M_s^{t,z})e^{-R_s^{t,u}}ds\right].
\end{align*}
Therefore, we conclude that, for $(t,z,u)\in\mathcal{D}_T$,
\begin{align}\label{eq:ht}
  h_t(t,z,u)=\Ex\left[e^{-\rho T}f(T,M_T^{t,z})e^{-R_T^{t,u}}+\rho\int_t^Te^{-\rho s}f(s,M_s^{t,z})e^{-R_s^{t,u}}ds\right].
\end{align}

At last, we verify the continuity of $h_t$, $h_{uu}$, $h_{yu}$ and $h_{yy}$ in $(t,y,u)$ using expressions \eqref{eq:ht}, \eqref{eq:Rephuu}, \eqref{eq:hyu} and \eqref{eq:hyy}. In fact, by Theorem 3.4.3 of \cite{Kunita19}, we have that $M^{t,z}_s$, $\partial_zM^{t,z}_s$ and $\partial^2_{zz}M^{t,z}_s$ for $s\in[t,T]$ admit the respective modifications which are continuous in $(t,z,s)$, $\Px$-a.s.. Moreover, by \eqref{eq:driftBM} and \eqref{eq:Lskorokhodrep},  $R_s^{t,u}$ is also continuous in $(t,u,s)$, $\Px$-a.s.. We can conclude by the dominated convergence theorem that $h_t$, $h_{zu}$ and $h_{zz}$ are continuous in $(t,z,u)$. For the continuity of $h_{uu}$, in view of \eqref{eq:laweqn}, for any $\epsilon\in\R$ satisfying $\tau^\epsilon_0:=\tau_{u_0}^t\wedge (T-\epsilon)\in[t,T]$,
\begin{align}\label{eq:Rephuu2}
h_{uu}(t+\epsilon,z,u_0)&=\Ex\left[e^{-\rho \tau^\epsilon_0}f(\tau^\epsilon_0,M_{\tau^\epsilon_0}^{t,z})\Gamma(\tau^\epsilon_0)\right]
+\Ex\left[\int_t^{\tau^\epsilon_0}e^{-\rho s}f(s,M_s^{t,z})e^{-R_s^{t,u_0}}ds\right].
\end{align}
Note that $\tau^\epsilon_0$ is continuous in $(u_0,\epsilon)$, $\Px$-a.s.. The continuity of $h_{uu}$ in $(t,z,u)$ follows from \eqref{eq:Rephuu2} and the dominated convergence theorem, which completes the entire proof.
\end{proof}

\begin{proof}[Proof of Theorem \ref{thm:hsolvePDE}]
For $(t,u)\in[0,T]\times[0,\infty)$, we define $B^{t,u}_s:=u + \sqrt{2\alpha}(B^1_s-B^2_t)+(\alpha-\rho)(s-t)$ for $s\in[t,T]$ and $M^{t,z}=(M_s^{t,z})_{s\in[t,T]}$ for $(t,z)\in[0,T]\times\R$ is the unique (strong) solution of SDE~\eqref{eq:Feynman-Kac-Y} under the assumption {\AZ}. By Remark 4.17 of Chapter 5 in \cite{KaraShreve91}, the assumption {\AZ} guarantees that the time-homogeneous martingale problem on $(M^{t,z},B^{t,u})=(M_s^{t,z},B_s^{t,u})_{s\in[t,T]}$ is well posed.\footnote{The definition of well-posedness of a time-homogeneous martingale problem can be found in Definition~4.15 of Chapter 5 in \cite{KaraShreve91}, page~320.} By applying Theorem 5.4.20 in \cite{KaraShreve91}, $(M^{t,z},B^{t,u})$ is a strong Markov process with $(t,z,u)\in[0,T]\times\R\times\R_+$. For $\varepsilon\in(0,u)$, let us define that
\begin{align}\label{eq:tauepsilon}
  \tau_{\varepsilon}^t:=\inf\left\{s\geq t;~\left|B_s^{t,u}-u\right|\geq\varepsilon~\text{or}~\left|M_s^{t,z}-z\right|\geq\varepsilon\right\}\wedge T.
\end{align}
Because the paths of $(M^{t,z},B^{t,u})$ are continuous, we have $\tau_{\varepsilon}^t>t$, $\Px$-a.s.. For any $\hat{t}\in[t,T]$, it holds that
\begin{align}\label{eq:equBR}
B^{t,u}_{\hat{t}\wedge\tau_{\varepsilon}^t}=R^{t,u}_{\hat{t}\wedge\tau_{\varepsilon}^t}.
\end{align}
In fact, if $\tau_{\varepsilon}^t\leq\hat{t}$, the following two cases may happen:
\begin{itemize}
\item[(i)] $|B_{\tau_{\varepsilon}^t}^{t,u}-u|<\varepsilon$: this yields that $0<-\varepsilon+u<B^{t,u}_{\tau_{\varepsilon}^t}<\varepsilon+u$, and hence \eqref{eq:equBR} holds.
 \item[(ii)]  $|B_{\tau_{\varepsilon}^t}^{t,u}-u|\geq\varepsilon$: this implies that $B_{\tau_{\varepsilon}^t}^{t,u}= u+\varepsilon>0$ or $B_{\tau_{\varepsilon}^t}^{t,u}= u-\varepsilon>0$, and again \eqref{eq:equBR} holds.
\end{itemize}
If $\hat{t}<\tau_{\varepsilon}^t$, then $|B_{\hat{t}}^{t,u}-u|<\varepsilon$ and $|M_{\hat{t}}^{t,z}-z|<\varepsilon$. This implies that $0<-\varepsilon+u<B^{t,u}_{\hat{t}}<\varepsilon+u$, and hence \eqref{eq:equBR} holds. By \eqref{eq:equBR} and the strong Markov property, we get that
\begin{align*}
-\Ex\left[\int_{\hat{t}\wedge\tau_{\varepsilon}^t}^Tf(s,M_s^{t,z})e^{-R_s^{t,u}-\rho s}ds\bigg|\F_{\hat{t}\wedge\tau_{\varepsilon}^t}\right]=
  h\left(\hat{t}\wedge\tau_{\varepsilon}^t,M^{t,z}_{\hat{t}\wedge\tau_{\varepsilon}^t},B^{t,u}_{\hat{t}\wedge\tau_{\varepsilon}^t}\right),
\end{align*}
where the function $h$ is given by $h(t,z,u) = -\Ex\left[\int_t^T e^{-\rho s}f(s,M_s^{t,z})e^{-B_s^{t,u}-L_s^{t,R}}ds\right]$ in view of \eqref{eq:dualtransfromh}.

Therefore, for $(t,y,u)\in\mathcal{D}_T$, it holds that
\begin{align}\label{eq:DPP}
h(t,z,u)= \Ex\left[h\left(\hat{t}\wedge\taue,M^{t,z}_{\hat{t}\wedge\taue},B^{t,u}_{\hat{t}\wedge\taue}\right)
-\int_t^{\hat{t}\wedge\taue}e^{-\rho s}f(s,M_s^{t,z})e^{-B_s^{t,u}-L_s^{t,R}}ds\right].
\end{align}
By Proposition~\ref{prop:smoothh} and It\^o's formula, we have that
\begin{align}\label{eq:DPP1}
&\frac1{\hat{t}-t}\Ex\left[\int_t^{\hat{t}\wedge\taue}e^{-\rho s}f(s,M_s^{t,z})e^{-B_s^{t,u}-L_s^{t,R}}ds\right]
=\frac1{\hat{t}-t}\Ex\left[h\left(\hat{t}\wedge\taue,M^{t,z}_{\hat{t}\wedge\taue},B^{t,u}_{\hat{t}\wedge\taue}\right)-h(t,z,u)\right]\notag\\
  &=\frac1{\hat{t}-t}\Ex\left[\int_t^{\hat{t}\wedge\taue}(h_t+{\cal L}h)\left(s,M^{t,z}_s,B^{t,u}_s\right)ds\right]+\frac1{\hat{t}-t}\Ex\left[\int_t^{\hat{t}\wedge\taue}h_u\left(s,M^{t,z}_s,B^{t,u}_s\right)dL_s^{t,R}\right],
\end{align}
where the operator ${\cal L}$ acted on $C^2(\R\times[0,\infty))$ is defined for $g\in C^2(\R\times[0,\infty))$ that
\begin{align}\label{eq:operatorL}
  {\cal L}g:=\alpha g_{uu}+(\alpha-\rho)g_u+\phi(y)g_{uz}+\mu_Z(z)g_{z}+\frac{\sigma_Z^2(z)}{2}g_{zz}.
\end{align}
By \eqref{eq:tauepsilon}, the assumption {\Af} implies that $(e^{-\rho s}f(s,M_s^{t,z})e^{-B_s^{t,u}-L_s^{t,R}})_{s\in[t,\hat{t}\wedge\taue]}$ is bounded. The bounded convergence theorem yields that
\begin{align*}
  \lim_{\hat{t}\downarrow t}\frac1{\hat{t}-t}\Ex\left[\int_t^{\hat{t}\wedge\taue}e^{-\rho s}f(s,M_s^{t,z})e^{-B_s^{t,u}-L_s^{t,R}}ds\right]=f(t,z)e^{-u-\rho t}.
\end{align*}
Similarly, we have that
\begin{align*}
  \lim_{\hat{t}\downarrow t}\frac1{\hat{t}-t}\Ex\left[\int_t^{\hat{t}\wedge\taue}(h_t+{\cal L}h)\left(s,M^{t,z}_s,B^{t,u}_s\right)ds\right]=(h_t+{\cal L}h)\left(t,z,u\right).
\end{align*}
Note that $R_s^{t,u}>0$ on $s\in[t,\hat{t}\wedge\taue]$ for all $(t,u)\in[0,T]\times[0,\infty)$. We then have that
\begin{align*}
  \frac1{\hat{t}-t}\Ex\left[\int_t^{\hat{t}\wedge\taue}h_u\left(s,M^{t,z}_s,B^{t,u}_s\right)dL_s^{t,R}\right]=0.
\end{align*}
By applying \eqref{eq:DPP1}, we obtain that $(h_t+{\cal L}h)(t,z,u)=f(t,z)e^{-u-\rho t}$ on $(t,z,u)\in[0,T)\times\R\times\R_+$.

We next verify that the function $h$ in \eqref{eq:dualtransfromh} satisfies the boundary conditions of the Neumann problem~\eqref{eq:hHJB2}. By the representation form \eqref{eq:dualtransfromh}, it is easy to see that $h(T,z,u)=0$ for all $(z,u)\in\R\times[0,\infty)$. It remains to show the validity of homogeneous Neumann boundary condition. In fact, for $s\in[t,T]$, we have that, for any positive sequence $(u_n)_{n\geq1}$ satisfying $u_n\downarrow0$ as $n\to\infty$,
\begin{align}\label{eq:defAsu2}
  \bigcup_{n\geq1} A_s^{t,u_n}:=\bigcup_{n\geq1}\left\{\max_{r\in[t,s]}\left[-\sqrt{2\alpha}(B^1_r-B_t^1)-(\alpha-\rho)(r-t)\right]>u_n\right\}\in\F_s,\quad\text{and}
  ~~\Px\left(\bigcup_{n\geq1}A_s^{t,u_n}\right)=1.
\end{align}
In view of \eqref{eq:derihu} in Proposition~\ref{prop:smoothh} and the assumption {\AZ}, it follows by the dominated convergence theorem that
\begin{align}\label{eq:huty0proof}
 h_u(t,z,0)=\lim_{n\to\infty}\int_t^T\Ex\left[f(s,M_s^{t,z})e^{-R_s^{t,u}-\rho s}\mathbf{1}_{\left\{(A_s^{t,u_n})^c\right\}}\right]ds=0,\ \ (t,z)\in[0,T]\times\R.
\end{align}
That is, the Neumann boundary condition in \eqref{eq:hHJB2} holds.

We next assume that the Neumann problem~\eqref{eq:hHJB2} admits a classical solution $h$ with a polynomial growth. For $n\in\N$ and $t\in[0,T]$, we define $\tau_n^t:=\inf\{s\geq t;~|M_s^{t,z}|\geq n~\text{or}~|R_s^{t,u}|\geq n\}\wedge T$. It\^{o}'s formula gives that, for $(t,z,u)\in\mathcal{D}_T$,
\begin{align}\label{eq:unique-solution}
&\Ex\left[h\left({\tau_n^t},M^{t,z}_{\tau_n^t},R^{t,u}_{\tau_n^t}\right)\right]=h(t,z,u)+\Ex\left[\int_t^{\tau_n^t}(h_t+{\cal L}h)\left(r,M^{t,z}_r,R^{t,u}_r\right)dr\right]\\
  &\quad+\Ex\left[\int_t^{\tau_n^t}h_u(r,M^{t,z}_r,R^{t,u}_r)\mathbf{1}_{\{R^{t,u}_r=0\}}dL^{t,R}_r\right]
  =h(t,z,u)+\mathbb{E}\left[\int_t^{\tau_n^t}f(r,M^{t,z}_r)e^{-R^{t,u}_r-\rho r}dr\right].\notag
\end{align}
Moreover, the polynomial growth of $h$ implies the existence of a constant $C=C_T>0$ such that, for some $p\geq1$,
\begin{align*}
|h({\tau_n^t},M^{t,z}_{\tau_n^t},R^{t,u}_{\tau_n^t})|\leq C\big\{1+\max_{r\in[t,T]}|M^{t,z}_r|^p+\max_{r\in[t,T]}|R^{t,u}_r|^p\big\}.
\end{align*}
Note that $\lim_{n\rightarrow\infty}h({\tau_n^t},M^{t,z}_{\tau_n^t},R^{t,u}_{\tau_n^t})=h(T,M^{t,z}_T,R^{t,u}_T)=0$ in view of \eqref{eq:hHJB2}.  Letting $n\to\infty$ on both sides of \eqref{eq:unique-solution}, by dominated convergence theorem and monotone convergence theorem, we obtain the representation~\eqref{eq:dualtransfromh} of the solution $h$ that completes the proof.
\end{proof}

\begin{proof}[Proof of Corollary \ref{coro:wellposehatv}]
We first show the existence of a classical solution to the Neumann boundary problem \eqref{eq:hatvHJB1}. By Theorem~\ref{thm:hsolvePDE}, the function $h$ defined by \eqref{eq:dualtransfromh} solves the problem \eqref{eq:hHJB2}. It readily follows that for $(t,z,y)\in[0,T]\times\R\times(0,1]$, $\hat{v}(t,z,y):=e^{\rho t}h(t,z,-\ln y)$ solves the Neumann boundary problem \eqref{eq:hatvHJB1}. The existence of a classical solution to the problem \eqref{eq:hatvHJB1} then follows by Proposition~\ref{prop:smoothh}. For the uniqueness, let $\hat{v}^{(i)}$ for $i=1,2$ be two classical solutions of the Neumann problem \eqref{eq:hatvHJB1} such that $h^{(i)}(t,z,u):=e^{-\rho t}\hat{v}^{(i)}(t,z,e^{-u})$ for $(t,z,u)\in\mathcal{D}_T$ satisfies the polynomial growth for $i=1,2$. Theorem~\ref{thm:hsolvePDE} implies that both $h^{(1)}$ and $h^{(2)}$ admit the probabilistic representation \eqref{eq:dualtransfromh}, and hence $h^1=h^2$ on $\mathcal{D}_T$. Therefore, it holds that $\hat{v}^{(1)}(t,z,y)=e^{\rho t}h^{(1)}(t,z,-\ln y)=e^{\rho t}h^{(2)}(t,z,-\ln y)=\hat{v}^{(2)}(t,z,y)$ for $(t,z,y)\in[0,T]\times\R\times(0,1]$. Moreover, the strict convexity of  $(0,1]\ni y\to\hat{v}(t,z,y)$ for fixed $(t,z)\in[0,T)\times\R$ follows from the fact that $\hat{v}_{yy}=\frac{e^{\rho t}}{y^2}[h_{uu}+h_u]>0$ in Remark~\ref{rem:hu+hpositive} as it is assumed that $f(\cdot, \cdot)>0$.
\end{proof}

\section{Conclusions}\label{sec:conc}
This paper studies an optimal tracking problem when the capital injection is allowed such that the total fund capital needs to dynamically dominate a non-decreasing benchmark process. The aim of the stochastic control problem is to minimize the total capital injection. To cope with this stochastic control problem with American type floor constraints, we perform a 2-step transformation. We first transform the original problem into an unconstrained stochastic control problem with a running maximum cost and then transform the problem again to an auxiliary problem by working with a new state process with the reflection. The associated HJB equation has a Neumann boundary condition. By using the dual transform and the assumption that $f\geq 0$ in the benchmark process, we can establish the probabilistic representation of the solution to the dual PDE so that the regularity of the solution can be proved by means of stochastic flow analysis. As the original value function is not strictly concave, the verification theorem is carefully proved that gives the full characterization of the value function and the optimal portfolio. To exemplify the application of our theoretical results, we also discuss some market index tracking problems that can be transformed into problems with non-decreasing benchmarks.

Future extensions can be conducted in different directions. Firstly, it will be interesting to consider both capital injection and capital withdrawal in the tracking procedure, which lead to the two dimensional singular control policies that characterize the accumulated amount in the injected capital and the withdrawal of capital respectively. The tracking problem can be formulated to minimize the difference between the total capital injection and capital withdrawal when the withdrawal process subjects to some transaction costs. Secondly, one can consider some more general stochastic benchmark processes or some incomplete market models. The linearization by the dual transform may no longer work and we need to study the primal nonlinear HJB equation directly. At last, it is an appealing future work to investigate the duality theory for the optimal tracking problem in \eqref{eq_prob_IBP}, which may provide an alternative method to study the optimal tracking problem in general market models.

\ \\
\textbf{Acknowledgements}: \small{We would like to thank Martin Larsson and Johannes Ruf for their insightful discussions on some general index tracking problems during their visit to The Hong Kong Polytechnic University in 2018, who initially formulated the new tracking procedure using capital injection and dynamic floor constraints. We also thank two anonymous referees for their helpful comments on the presentation of this paper. L. Bo is supported by Natural Science Foundation of China under grant no. 11971368, no. 11961141009 and the Key Research Program of Frontier Sciences of the Chinese Academy of Science under grant no. QYZDB-SSW-SYS009. H. Liao is supported by Singapore MOE AcRF Grants R-146-000-271-112. X. Yu is supported by the Hong Kong Early Career Scheme under grant no. 25302116.}

\ \\

\end{document}